\documentclass[runningheads,a4paper]{llncs}

\usepackage{setspace}

\usepackage{amsmath,amssymb,graphicx,subfigure}
\pagestyle{plain}
\usepackage{xcolor}
\usepackage[pdfborder={0 0 0}]{hyperref}
\usepackage{etoolbox}
 
\usepackage{mathabx}
\usepackage[ruled,vlined]{algorithm2e}





\newtoggle{fullver}
\toggletrue{fullver}



\title{Separation Logic  with Linearly Compositional Inductive Predicates and Set Data Constraints\thanks{This work is partially supported by  NSFC grant (No.\ 61472474, 61572478, 61872340), UK EPSRC grant (EP/P00430X/1), and the INRIA-CAS joint research project VIP.}}
\author{Chong Gao\inst{1,2}, Taolue Chen\inst{3},  Zhilin Wu\inst{1}}
\institute{
	State Key Laboratory of Computer Science, \\
	Institute of Software, Chinese Academy of Sciences, Beijing, China
	\and
	University of Chinese Academy of Sciences, Beijing, China 
	\and
	Department of Computer Science and Information Systems, \\
	Birkbeck, University of London, UK 
	}



%
%



\newcommand{\Aa}{\mathcal{A}}
\newcommand{\Bb}{\mathcal{B}}

\newcommand{\Dd}{\mathcal{D}}

\newcommand{\Ff}{\mathcal{F}}

\newcommand{\Gg}{\mathcal{G}}
\newcommand{\Ll}{\mathcal{L}}

\newcommand{\op}{\bowtie}
\newcommand{\opset}{\asymp}

\newcommand{\natnum}{\mathbb{N}}

\newcommand{\intnum}{\mathbb{Z}}

\newcommand{\intset}{\mathbb{S}_\mathbb{Z}}
\newcommand{\natset}{\mathbb{S}_\mathbb{N}}

\newcommand{\mapzn}{\mathcal{M}}

\newcommand{\sep}{\ast}

\newcommand{\flds}{{\rm Flds}}

\newcommand{\lcslidset}{{\sf SLID^{S}_{LC} }}

\newcommand{\loc}{{\mathbb L}}
\newcommand{\data}{{\mathbb D}}

\newcommand{\ufld}{{\sf Ufld}}

\newcommand{\vars}{\mathsf{Vars}}
\newcommand{\lvars}{\mathsf{LVars}}
\newcommand{\dvars}{\mathsf{DVars}}
\newcommand{\svars}{\mathsf{SVars}}

\newcommand{\bvars}{\mathsf{BVars}}

\newcommand{\ldom}{{\sf ldom}}

\newcommand{\boolabs}{{\sf Abs}}

\newcommand{\dt}{{\sf dt}}

\newcommand{\Red}{{\sf Red}}

\newcommand{\slemp}{\mathtt{emp}}
\newcommand{\ltrue}{\mathtt{true}}
\newcommand{\lfalse}{\mathtt{false}}

\newcommand{\plseg}{\mathit{plseg}}

\newcommand{\ldllseg}{\mathit{ldllseg}}

\newcommand{\sdllseg}{\mathit{sdllseg}}
\newcommand{\stlseg}{\mathit{stlseg}}

\newcommand{\fnext}{\mathtt{next}}
\newcommand{\fprev}{\mathtt{prev}}

\newcommand{\fdata}{\mathtt{data}}

\newcommand{\freev}{\mathsf{free}}

\newcommand{\ectx}{\mathsf{ctx}}

\newcommand{\saturate}{\mathsf{Strt}}

\newcommand{\norm}{\mathsf{Norm}}



\newcommand{\quantel}{{\sf quantElmt}}

\newcommand{\hide}[1]{ }

\newcommand{\dbs}{\mathcal{DBS}}

\newcommand{\ps}{\mathcal{PS}}

\newcommand{\qgdbs}{\mathcal{RQSPA}}

\newcommand{\edbs}{\mathcal{EDBS}}

\newcommand{\TC}{\textsf{TC}}

\newcommand{\suc}{\textsf{succ}}

\newcommand{\idx}{{\sf idx}}

\begin{document}

\maketitle



\begin{abstract} 
 We identify difference-bound set constraints (DBS), an analogy of difference-bound arithmetic constraints for sets. DBS can express not only set constraints but also arithmetic constraints over set elements. We integrate DBS into separation logic with linearly compositional inductive predicates, obtaining a logic thereof where set data constraints of linear data structures can be specified. We show that the satisfiability of this logic is decidable. A crucial step of the decision procedure is to compute the transitive closure of DBS-definable set relations, to capture which we propose an  extension of quantified set constraints with Presburger Arithmetic (RQSPA). The satisfiability of RQSPA is then shown to be decidable by harnessing advanced automata-theoretic techniques. 
\end{abstract}

\section{Introduction} \label{sec:intro}


%

Separation Logic (SL) is a well-established approach for deductive verification of programs that manipulate dynamic data structures \cite{ORY01,Rey02}. Typically, SL is used in combination with inductive definitions (SLID), which provides a natural and convenient means to specify dynamic data structures.
To reason about the property (e.g. sortedness) of  data values stored in data structures, it is also necessary to incorporate data constraints into the inductive definitions.


%
%

One of the most fundamental questions for a logical theory is whether its satisfiability is decidable. SLID with data constraints is no exception. This problem becomes more challenging than one would probably expect, partially due to the inherent intricacy brought up by inductive definitions and data constraints. It is somewhat surprising that only disproportional research has addressed this question (cf. \emph{Related work}).
In practice, most available tools based on SLID only support heuristics without giving completeness guarantees, especially when data constraints are involved. \emph{Complete} decision procedures for satisfiability, however, have been found important in  software engineering tasks such as symbolic execution, specification debugging, counterexample generation, etc., let along the theoretical insights they usually shed on the logic system.



The dearth of complete decision procedures for SLID with data constraints has prompted us to launch a research program as of 2015, aiming to identify decidable \emph{and} sufficiently expressive instances. We have made encouraging progress insofar. In \cite{GuCW16}, we set up a general framework, but could only tackle linear data structures with data constraints in difference-bound arithmetic. In \cite{XuCW17}, we were able to tackle tree data structures by exploiting machineries such as order graphs and counter machines, though the data constraints therein remained to be in difference-bound arithmetic. 


An important class of data constraints that is currently elusive in our investigations is set constraints. They are mandatory for reasoning about, e.g., invariants of data collections stored in data structures. For instance, when specifying the correctness of a sorting algorithm on input lists, whilst the sortedness of the list can be described by difference-bound arithmetic constraints, the property that the sorting algorithm does not change the set of data values on the list requires inductive definitions with \emph{set} data constraints. Indeed, reviewers of the papers \cite{GuCW16,XuCW17} 
constantly raised the challenge of set constraints, which compelled us to write the current paper.  

\smallskip
\noindent \emph{Main contributions.} 
%
%
Our \emph{first} contribution is to carefully design the difference-bound set constraints ($\dbs$), and to integrate them into the linearly compositional inductive predicates introduced in \cite{GuCW16}, yielding $\lcslidset$: \emph{SL with linearly compositional inductive predicates and set data constraints}. The rationale of $\dbs$ is two-fold: (1) it must be sufficiently expressive to represent common set data constraints as well as arithmetic constraints over set elements one usually needs when specifying linear data structures, (2) because of the inductive predicates, it must be sufficiently ``simple" to be able to capture the \emph{transitive closure} of $\dbs$-definable set relations\footnote{This shall be usually referred to as ``transitive closure of $\dbs$" to avoid clumsiness.}  in an effective means, in order to render the satisfiability of $\lcslidset$ decidable. As the \emph{second} contribution, we show that the transitive closure of $\dbs$ can indeed be captured in the \emph{restricted extension of quantified set constraints with Presburger arithmetic} ($\qgdbs$) introduced in this paper. Finally, our \emph{third} contribution is to show that the satisfiability of $\qgdbs$  is decidable by establishing a connection of $\qgdbs$ with Presburger automata \cite{SSM08}. This extends the well-known connection of Monadic Second-Order logic on words (MSOW) and finite-state automata \emph{a la} B\"{u}chi and Elgot \cite{Bu60,Elg61}. These contributions, together with a procedure which constructs an abstraction (as an $\qgdbs$ formula) from a given $\lcslidset$ formula and which we adapt from our previous work \cite{GuCW16}, show the satisfiability of $\lcslidset$ is decidable.


We remark that sets are conceptually related to  second---rather than first--- order logics. While the transitive closure of logic formulae with \emph{first-order variables}  is somehow well-studied (especially for simple arithmetic; cf.\ \emph{Related Work}), the transitive closure of logic formulae with \emph{second-order variables} is rarely addressed in literature. (They easily lead to undecidability.) To  our best knowledge, the computation of transitive closures of $\dbs$ here 
represents one of the first practically relevant examples of the computation of this type 
for a class of logic formulae with second-order variables, which may be of independent interests. 
%

\smallskip
\noindent \emph{Related work.} 
We first review the work on SLID with data constraints. (Due to space limit, the work on SLID \emph{without} data constraints will be skipped.)
%
In \cite{CDN+12,CJT15,MQS12}, SLID with set/multiset/size data constraints were considered, but only (incomplete) heuristics were provided. To reason about invariants of data values stored in lists,  SL with list segment predicates and data constraints in universally quantified Presburger arithmetic  was considered \cite{BDES12a}. 
The work \cite{PWZ13,PWZ14}  provided decision procedures for SLID with data constraints by translating into many-sorted first-order logic with reachability predicates. In particular, in \cite[Section 6]{PWZ14}, extensions of basic logic GRIT are given to cover set data constraints as well as order constraints over set elements. However, it seems that this approach does not address arithmetic constraints over set elements (cf. the ``Limitations'' paragraph in the end of Section 6 in \cite{PWZ14}). For instance, 
a list where the data values in adjacent positions are consecutive can be captured in $\lcslidset$ (see the predicate $\plseg$ in Section~\ref{sec-prelm}), but appears to go beyond the work \cite{PWZ13,PWZ14}. Moreover, there is no precise characterisation of the limit of extensions under which the decidability retains. 
%
%
 %
The work \cite{ESW15} introduced the concept of compositional inductive predicates, which may alleviate the difficulties of the entailment problem for SLID. 
Nevertheless, \cite{ESW15} only provided sound heuristics rather than decision procedures. More recently, the work \cite{LSC16,TLC16} investigated SLID with Presburger arithmetic data constraints.

Furthermore, several logics other than separation logic have been considered to reason about both shape properties and data constraints of data structures. The work \cite{SDK10} proposed a generic decision procedure for recursive algebraic data types with abstraction functions encompassing lengths (sizes) of data structures, sets or multisets of data values as special cases. Nevertheless, the work \cite{SDK10} 
focused on functional programs while 
this work aims to verify imperative programs, which requires to reason about \emph{partial} data structures such as list segments (rather than complete data structures such as lists). It is unclear how the decision procedure in \cite{SDK10} can be generalised to partial data structures.
The work \cite{MPQ11} introduced STRAND, 
a fragment of monadic
second-order logic, to reason about
tree structures. Being undecidable in general, several decidable
fragments were identified. STRAND does not provide an explicit means to describe sets of data values, although it allows using set variables to represent sets of locations.
%



Our work is also related to classical logics with set constraints, for which we can only give a brief (but by no means comprehensive) summary. 
Presburger arithmetic extended with sets was studied dating back to 80's, with highly undecidability results \cite{CCS90,Haplern91}. 
However, decidable fragments do exist: \cite{WPK09} studied the non-disjoint combination of theories that share set variables and set operations. 
\cite{KPS10} considered 
QFBAPA$^<_\infty$, a quantifier-free logic of sets of real numbers supporting integer sets and variables, linear arithmetic, the cardinality operator, infimum and supremum. \cite{Voigt17,HorbachVW17} investigated two extensions of the Bernays-Sch{\"{o}}nfinkel-Ramsey fragment of first-order predicate logic (BSR) 
 with simple linear arithmetic over integers and difference-bound constraints over reals (but crucially, 
 the ranges of the universally quantified variables must be bounded). Since the unary predicate symbols in BSR are uninterpreted and represent sets over integers or reals, the two extensions of BSR can also be used to specify the set constraints on integers or reals. \cite{CR17} presented a decision procedure for quantifier-free constraints on restricted intensional sets (i.e., sets given by a property rather than by enumerating their elements). None of these logics are able to capture the transitive closure of $\dbs$ as $\qgdbs$ does. MSOW extended with linear cardinality constraints was investigated in \cite{KR03}. Roughly speaking, $\qgdbs$ can be considered as an extension of MSOW with linear arithmetic expressions on the maximum or minimum value of free set variables.
Therefore, the two extensions in \cite{KR03} and 
this paper are largely incomparable.




In contrast to set constraints, the computation of transitive closures of relations definable in  first-order logic (in particular, difference-bound and octagonal arithmetic constraints) has been considered in 
for instance, \cite{ComonJ98,BozgaIL09,BozgaGI09,BozgaIK10,Kon16}. 

\section{Logics for sets} \label{sec:logic}


We write $\intnum$, $\natnum$ for the set of integers and natural numbers;  $\intset$ and $\natset$ for \emph{finite} subsets of $\intnum$ and $\natnum$. For $n\in \natnum$, $[n]$ stands for $\{1, \cdots, n\}$. We shall work exclusively on finite subsets of $\intnum$ or $\natnum$ unless otherwise stated. For any finite $A\neq\emptyset$, we write $\min(A)$ and $\max(A)$ for the minimum and maximum element of $A$. These functions, however, are \emph{not} defined over empty sets.  

In the sequel, we introduce a handful of logics for sets which will be used later in this paper. We mainly consider two data types, i.e., integer type $\intnum$ and (finite) set 
type $\intset$. Typically, $c,c',\dots \in \intnum$ and $A, A',\dots \in \intset$. 
Accordingly, two types of variables occur: integer variables (ranged over by $x, y, \cdots$) and set variables (ranged over by $S, S', \cdots$). 
Furthermore, we reserve $\op\ \in \{=, \le, \ge\}$ for comparison operators between integers,\footnote{The operators $<$ and $>$ can be seen as abbreviations, for instance, $x < y$ is equivalent to $x \le y -1$, which will be used later on as well.} and $\opset\ \in \{=, \subseteq, \supseteq, \subset, \supset\}$ for comparison operators between sets.
We start with \emph{difference-bound} set constraints ($\dbs$). 

\begin{definition}[Difference-bound set constraints] Formulae of $\dbs$ are defined by the rules:
\vspace{-1mm}
\[
\begin{array}{l c l c r}
	\varphi &::= & S =  S' \cup T_s \mid  T_i \ \op\ T_i + c \mid \varphi \wedge \varphi & & \\
	T_s & ::= &  \emptyset \mid \{\min(S)\} \mid \{\max(S)\}\mid T_s \cup T_s & \qquad  &  \mathrm{(\textbf{s}et\ terms)}\\
\vspace{-3mm}
	T_i & ::= &  \min(S) \mid \max(S) & \qquad & \mathrm{(\textbf{i}nteger\ terms)}
\end{array}
\]
\end{definition}

\medskip
 
\noindent \emph{Remark.} $\dbs$ is a rather limited logic, but it has been carefully devised to serve the data formulae in inductive predicates of $\lcslidset[P]$ (cf.\ Section~\ref{sec-prelm}). In particular, we remark that only conjunction, but not disjunction, of atomic constraints is allowed. The main reason is, once the disjunction is introduced, the computation of transitive closures becomes infeasible simply because one would be able to encode the computation of Minsky's two-counter machines.\qed 
\smallskip

To capture the transitive closure of $\dbs$, we introduce \emph{Restricted extension of Quantified Set constraints with Presburger Arithmetic}\footnote{An unrestricted extension of quantified set constraints with Presburger Arithmetic is undecidable, as shown in \cite{CCS90}.} ($\qgdbs$). 
Intuitively, an $\qgdbs$ formula is a \emph{quantified} set constraint extended with Presburger Arithmetic satisfying the following restriction: 
each atomic formula containing quantified variables must be a difference-bound arithmetic constraint.

\vspace{-1mm}
\begin{definition}[Restricted extension of Quantified Set constraints with Presburger Arithmetic] \label{def:qgdbs} Formulae of $\qgdbs$ are defined by the rules:
\vspace{-2mm}
	\begin{align*}
	\Phi & ::= T_s\ \opset\ T_s \mid   T_i\ \op\ T_i + c \mid T_m\ \op\ 0 \mid  \Phi \wedge \Phi \mid  \neg \Phi \mid \forall x.\ \Phi \mid \forall S.\ \Phi,\\
		T_s &::=   \emptyset   \mid S \mid \{T_i\}  \mid T_s \cup T_s \mid T_s \cap T_s \mid T_s \setminus T_s,  \\
	T_i &::= c \mid x \mid \min(T_s) \mid \max(T_s), \\
	T_m &::=   c \mid x \mid \max(T_s) \mid \min(T_s) \mid T_m+T_m\mid T_m-T_m.
	\end{align*}
\vspace*{-8mm}
	%
	
\end{definition}
Here, $T_s$ (resp. $T_i$) represents set (resp. integer) terms which are more general than those in $\dbs$, and $T_m$ terms are \emph{Presburger arithmetic expressions}. Let $\vars(\Phi)$ (resp. $\freev(\Phi)$) denote the set of variables (resp. free variables) occurring in $\Phi$.  
%
We require that {\bf 
all set variables in atomic formulae $T_m\ \op\ 0$ are free}.
%
To make the free variables explicit, we usually write $\Phi(\vec{x}, \vec{S})$ for a $\qgdbs$ formula $\Phi$. 
Free variable names are assumed \emph{not} to clash with the quantified ones. 

\vspace{-1mm}
\begin{example}
	$\max(S_1\cup S_2) - \min(S_1) - \max(S_2) < 0$ and $\forall S_1\forall S_2. (S_2 \neq \emptyset \rightarrow \max(S_2) \le \max(S_1\cup S_2))$ are $\qgdbs$ formulae, while $\forall S_2.\ \max(S_1\cup S_2) - \min(S_1) - \max(S_2) < 0$ is \emph{not}. \qed
\end{example}
\vspace{-1mm}

The work \cite{CCS90}, among others, studied \emph{Presburger arithmetic extended with Sets} ($\ps$), which is 
\emph{quantifier-free} $\qgdbs$ formulae.  
In this paper, $\ps$ will serve the data formula part of  $\lcslidset[P]$, and we reserve $\Delta, \Delta',\ldots$ to denote formulae from $\ps$ (see Section~\ref{sec-prelm}).  
%
\hide{
\begin{definition}[$\ps{[}\qgdbs${]}] The formulae of $\qgdbs$ are defined by the following rules,
	\[
	\begin{array}{r c l c} 
	\Delta & ::= & \ltrue \mid \Phi \mid T_i \ \op\  T_i  \mid T_s\ \opset\ T_s \mid T_i\in T_s \mid \Delta \wedge \Delta \mid \neg \Delta  \\
	T_s & ::=& \emptyset \mid S \mid \{T_i\} \mid T_s \cup T_s \mid T_s \cap T_s \mid T_s \setminus T_s &   \\ 
	T_i & ::= & c \mid x \mid \max(T_s) \mid \min(T_s) \mid T_i+T_i\mid T_i-T_i &  \\
	\end{array}
	\]
	where $\Phi$ is a $\qgdbs$ formula defined in Definition~\ref{def:qgdbs}.
\end{definition}
}
%
%
%

%
%
%
%

\smallskip
\noindent\emph{Semantics.} All of these logics ($\dbs$, $\qgdbs$, $\ps$) can be considered as instances of weak monadic second-order logic, and thus their semantics are largely self-explanatory. In particular, set variables are interpreted as \emph{finite} subsets of $\intnum$ and integer variables are interpreted as integers. We emphasize that, if a set term $T_s$ is interpreted as $\emptyset$, $\min(T_s)$ and $\max(T_s)$ are undefined. As a result, we stipulate that \textbf{any  atomic formula containing an undefined term is interpreted as $\lfalse$.}

For an $\qgdbs$ formula $\Phi(\vec{x}, \vec{S})$ with $\vec{x} = (x_1,\cdots, x_k)$ and $\vec{S} = (S_1, \cdots, S_l)$, $\Ll(\Phi(\vec{x}, \vec{S}))$ denotes

\medskip
$
\hspace{4mm} \{(n_1,\cdots, n_k, A_1,\cdots, A_l) \in \intnum^k \times \intset^l \mid \Phi(n_1,\cdots, n_k, A_1,\cdots, A_l) \}
$.
\smallskip

\noindent As expected, typically we use $\dbs$ formulae to define relations between (tuples) of sets  from $\intset^k$. We say a relation $R \subseteq  \intset^{k} \times \intset^k$ a \emph{difference-bound set relation} if there is a $\dbs$ formula $\varphi(\vec{S}, \vec{S'})$ over set variables $\vec{S}$ and $\vec{S}'$ such that $R= \{(\vec{A}, \vec{A'}) \in \intset^k \times \intset^k \mid \varphi(\vec{A}, \vec{A'}) \}$.  
The \emph{transitive closure} of $R$ is defined in a standard way, viz., $\bigcup \limits_{i \ge 0} R^i$, where $R^0=\{(\vec{A}, \vec{A}) \mid \vec{A} \in \intset^k\}$ and $R^{i+1} = R^i \cdot R$.

\hide{
	\paragraph{The logic $\edbs$.} 
	Formulae of $\edbs$ are of the form $\varphi\wedge \psi$. $\varphi$ is defined by the following rules:
	\begin{align*}
	\varphi & ::= T_s\ \opset\ T_s \mid  T_i\ \op\ c \mid T_i\ \op\ T_i + c \mid  \varphi \wedge \varphi \mid  \neg \varphi \mid \forall x.\ \varphi,
	\end{align*}
	where 
	\[
	\begin{array}{l c l r}
	T_i &::=& x \mid \inf(T_s) \mid \sup(T_s) &\qquad \mbox{(integer terms)}\\
	T_s &::= &  \emptyset   \mid S \mid \{T_i\}  \mid T_s \cup T_s \mid T_s \cap T_s \mid T_s \setminus T_s &\qquad \mbox{(set terms)}\\
	\end{array}
	\]
	and $\psi$ is a conjunction of the atomic formulae of the form $T_a\ \op\ T_a$,
	where $T_a$ is defined by,
	\[
	\begin{array}{l c l}
	T_a &::= & c \mid \min(S) \mid \max(S) \mid cT_a \mid T_a + T_a,
	\end{array}
	\]
	%
	%
	where $c$ is an integer constant and $\op \in \{=, \ge, \le\}$. 
}


\section{Linearly compositional SLID with set data constraints}\label{sec-prelm}


In this section, we introduce separation logic with \emph{linearly compositional} inductive predicates and \emph{set data} constraints, denoted by $\lcslidset[P]$, where $P$ is an \emph{inductive predicate}. 
In addition to the integer and set data types introduced in Section~\ref{sec:logic}, we also consider 
the \emph{location} data type $\loc$. 
As a convention, $l,l',\dots \in \loc$ denote locations 
%
and $E,F,X,Y,\cdots$ range over location variables.
%
%
%
We consider location fields associated with $\loc$ 
and data fields associated with $\intnum$. 
 
$\lcslidset[P]$ formulae may contain inductive predicates, each of which is of the form $P(\vec{\alpha}; \vec{\beta}; \vec{\xi}) $ and has an associated inductive definition. The parameters 
are classified into three groups: \emph{source parameters} $\vec{\alpha}$, \emph{destination parameters} $\vec{\beta}$, and  \emph{static parameters} $\vec{\xi}$. We require that the source parameters $\vec{\alpha}$ and the destination parameters $\vec{\beta}$  are \emph{matched} in type, namely, the two tuples have the same length $\ell>0$ and for each $i\in[\ell]$, $\alpha_i$ and $\beta_i$ have the same data type. Static parameters are typically used to store some static (global) information of dynamic data structures, e.g., the target location of tail pointers (cf.\ $\stlseg$ in Example~\ref{ex:datastructures}). 
Moreover, we assume that for each $i \in [\ell]$, $\alpha_i$ is of either the location type, or the \emph{set type}. (There are no parameters of the integer type.) Without loss of generality, it is assumed that the first components of $\vec{\alpha}$ and $\vec{\beta}$ are  location variables; we usually explicitly write $E, \vec{\alpha}$ and $F, \vec{\beta}$.

$\lcslidset[P]$ formulae comprise three types of formulae:  \emph{pure formulae} $\Pi$,   \emph{data formulae} $\Delta$, 
and   \emph{spatial formulae} $\Sigma$. The data formulae are simply $\ps$ introduced in
Section~\ref{sec:logic}, while $\Pi$ and $\Sigma$ are defined by the following rules, 
\vspace{-1mm}
$$
\begin{array}{r c l cr}
\Pi &::=& E = F \mid E \neq F \mid \Pi \wedge \Pi & \ \ & \mbox{(pure formulae)}\\
\Sigma &::=& \slemp \mid E \mapsto (\rho) \mid P(E, \vec{\alpha}; F, \vec{\beta}; \vec{\xi}) \mid \Sigma \sep \Sigma  &\ \ & \mbox{(spatial formulae)}\\
\vspace{-2mm}
\rho & ::= & (f, X) \mid  (d, T_i) \mid \rho, \rho & \ \ & \mbox{(fields)}\\
\end{array}
$$
where 
$T_i$ is an integer term as  in Definition~\ref{def:qgdbs}, and $f$ (resp.\ $d$) is  a location (resp.\ data) field.  
%
For spatial formulae $\Sigma$, formulae of the form $\slemp$, $E \mapsto (\rho)$, or $P(E, \vec{\alpha}; F, \vec{\beta};\vec{\xi})$ are called \emph{spatial atoms}. In particular, formulae of the form $E \mapsto (\rho)$ and $P(E, \vec{\alpha}; F, \vec{\beta};\vec{\xi})$ are called   \emph{points-to} and  \emph{predicate} atoms respectively. Moreover, $E$ is \emph{the root} of these points-to or predicate atoms.

\smallskip

\noindent\emph{Linearly compositional inductive predicates.} 
An inductive predicate $P$ is \emph{linearly compositional} if the inductive definition of $P$ is given by the following two rules,
\vspace{-5mm}
\begin{itemize}
\item base rule $R_0: P(E, \vec{\alpha}; F, \vec{\beta}; \vec{\xi})::= E = F \wedge \vec{\alpha} = \vec{\beta} \wedge \slemp$,
\item inductive rule $R_1: P(E, \vec{\alpha}; F, \vec{\beta}; \vec{\xi})::= \exists \vec{X} \exists \vec{S}.\ \varphi \wedge E \mapsto (\rho) \sep P(Y, \vec{\gamma}; F, \vec{\beta}; \vec{\xi})$. 
\end{itemize}
 \vspace{-1mm}
The left-hand (resp. right-hand) side of a rule is called the \emph{head} (resp. \emph{body}) of the rule. We note that the body of $R_1$ does \emph{not} contain  pure formulae.

In the sequel, we specify some constraints on the inductive rule $R_1$ which are vital to obtain \emph{complete} decision procedures for the satisfiability problem. 
%
\vspace{-2mm}
\begin{description}
\item[C1] None of the variables from $F, \vec{\beta}$ occur elsewhere in the right-hand side of $R_1$, that is, in $\varphi$, $E \mapsto (\rho)$.

\item[C2] The data constraint $\varphi$ in the body of $R_1$ is a $\dbs$ formula.


\item[C3] For each atomic formula in $\varphi$, there is $i$ such that all the variables in the atomic formula are from $\{\alpha_i,\gamma_i\}$. 


%


%
\item[C4] Each variable occurs in each of $P(Y,\vec{\gamma}; F, \vec{\beta}; \vec{\xi})$ and $\rho$ at most once.
 

\item[C5] $\vec{\xi}$ contains \emph{only} location variables and all location variables from $\vec{\alpha} \cup \vec{\xi} \cup \vec{X}$ occur in $\rho$.
 

\item[C6] $Y \in \vec{X}$ and $\vec{\gamma} \subseteq \{E\} \cup \vec{X} \cup \vec{S}$.
\end{description}
\vspace{-3mm}

\noindent Note that, by {\bf C6}, none of the variables from $\vec{\alpha} \cup \vec{\xi}$ occur in $\vec{\gamma}$. Moreover, from {\bf C5} and {\bf C6}, $Y$ occurs in $\rho$, which
guarantees that in each model of $P(E, \vec{\alpha}; F, \vec{\beta}; \vec{\xi})$, the sub-heap represented by $P(E, \vec{\alpha}; F, \vec{\beta}; \vec{\xi})$, seen as a directed graph, is connected.
We remark that these constraints are undeniably technical. However, in practice the inductive predicates satisfying these constraints are usually sufficient to define linear data structures with set data constraints, cf.\ Example~\ref{ex:datastructures}. 



For  an inductive predicate $P$, let $\flds(P)$ 
denote the set of all fields 
occurring in the inductive rules of $P$.  
For a spatial atom $a$, let $\flds(a)$ denote the set of fields that $a$ refers to: if $a=E \mapsto (\rho)$, then $\flds(a)$ is the set of fields occurring in $\rho$;  if $a=P(-)$, then $\flds(a) =\flds(P)$. 

We write $\lcslidset[P]$ for the collection of separation logic formulae $\phi = \Pi \wedge \Delta \wedge \Sigma$ satisfying the following constraints: (1) $P$ is a linearly compositional inductive predicate, and (2) each predicate atom of $\Sigma$ is of the form $P(-)$, and for each points-to atom occurring in $\Sigma$, the set of fields of this atom is $\flds(P)$.

For an $\lcslidset[P]$ formula $\phi$, let $\vars(\phi)$ (resp.\ $\lvars(\phi)$, 
resp. $\dvars(\phi)$, resp. $\svars(\phi)$) denote the set of (resp.\ location, 
resp.\ integer, resp. set) variables occurring in $\phi$. Moreover, we use $\phi[\vec{\mu}/\vec{\alpha}]$ to denote the simultaneous replacement of the variables $\alpha_j$ by $\mu_j$ in $\phi$.
We adopt the standard \emph{classic, precise semantics} of $\lcslidset[P]$ in terms of \emph{states}. In particular, 
a \emph{state} is a pair $(s,h)$, where $s$ is an assignment and $h$ is a heap.
The details can be found in \iftoggle{fullver}{Appendix~\ref{app:slsemantics}}{\cite{GCW18}}.  

\begin{example}\label{ex:datastructures}
We collect a few examples of linear data structures with set data constraints definable in $\lcslidset[P]$:   

\vspace{-4mm}
\begin{center}
\begin{tabular}{|l|}

	\hline
	
	$\sdllseg$ for sorted doubly linked list segments, \\
	\small{ 
	$
	\begin{array} {l c l}
	\sdllseg(E,P, S; F, L, S') & ::= & E = F \wedge P = L  \wedge S = S'  \wedge \slemp,\\
	\sdllseg(E,P, S; F, L, S') & ::= & \exists X, S''.\ S=S'' \cup \{\min(S)\}\ \wedge \\
	&&  \hspace{-2cm} E \mapsto ((\fnext,X),(\fprev,P), (\fdata, \min(S))) \sep \sdllseg(X,E, S''; F, L, S').\\
	\end{array}
	$}\\
	
%
	
	\hline
	
	$\plseg$ for list segments where the data values are consecutive, \\
	\small{	
	$
	\begin{array}{l c l}
	\plseg(E, S; F, S') & ::=  & E = F \wedge S = S' \wedge \slemp,\\
	\plseg(E, S; F, S') &::= & \exists X, S''.\ S= S'' \cup \{\min(S)\} \wedge \min(S'') = \min(S)+1 \ \wedge  \\
	& & \hspace{1cm}  E \mapsto ((\fnext, X), (\fdata,\min(S))) \sep \plseg(X, S''; F,S').\\
	\end{array} 
	$} \\
	
	\hline
	
	$\ldllseg$ for doubly list segments, to mimic lengths with sets, 
	\\
	\small{	
	$
	\begin{array}{l c l} 
	\ldllseg(E, P, S; F, L, S') & ::=  & E = F \wedge P = L \wedge S = S'  \wedge \slemp,\\
	\ldllseg(E, P, S; F, L, S') &::= & \exists X, S''.\ S= S'' \cup \{\max(S)\} \wedge \max(S'')= \max(S)-1\ \wedge    \\
	& & \hspace{2mm} E \mapsto ((\fnext, X),(\fprev,P))  \sep \ldllseg(X, E, S''; F,L, S').\\
	\end{array}
	$}\\
	\hline
\end{tabular}
\end{center}
\vspace{-4mm}
\end{example}

\section{Satisfiability of $\lcslidset[P]$}\label{sec-sat}


The satisfiability problem is to decide whether there is a state (an assignment-heap pair) satisfying $\phi$ 
for a given   $\lcslidset[P]$ formula $\phi$. We shall follow the approach adopted in \cite{ELSV14,GuCW16}, i.e., to construct $\boolabs(\phi)$, an abstraction of $\phi$ that is equisatisfiable to $\phi$. The key ingredient of the construction is to compute the transitive closure of the data constraints extracted from the inductive rule of $P$. 

Let $\phi = \Pi \wedge \Delta \wedge \Sigma$ be an 
$\lcslidset[P]$ formula. Suppose $\Sigma= a_1 \sep \cdots \sep a_n$, where each $a_i$ is either a points-to atom or a predicate atom.  For predicate atom $a_i=P(Z_1,\vec{\mu}; Z_2, \vec{\nu}; \vec{\chi})$ 
we assume that the inductive rule for $P$ is 

\hspace{5mm} $
R_1: P(E, \vec{\alpha}; F, \vec{\beta}; \vec{\xi})::= \exists \vec{X} \exists \vec{S}.\ \varphi \wedge E \mapsto (\rho) \sep P(Y, \vec{\gamma}; F, \vec{\beta}; \vec{\xi}).\hspace*{\fill} (*)
$

\medskip
We extract the data constraint $\varphi_P(\dt(\vec{\alpha}), \dt(\vec{\beta}))$ out of  $R_1$. Formally, we define $\varphi_P(\dt(\vec{\alpha}), \dt(\vec{\beta}))$ as $\varphi[\dt(\vec{\beta})/\dt(\vec{\gamma})]$,  where $\dt(\vec{\alpha})$ (resp. $\dt(\vec{\gamma})$, $\dt(\vec{\beta})$) is the projection of $\vec{\alpha}$ (resp. $\vec{\gamma}$, $\vec{\beta}$) to  data variables. For instance, $\varphi_{\ldllseg}(S, S'):= \left(S= S'' \cup \{\max(S)\} \wedge \max(S'')= \max(S)-1\right)[S'/S'']=S= S' \cup \{\max(S)\} \wedge \max(S') = \max(S) - 1$.  

We can construct $\boolabs(\phi)$ 
with necessary adaptations from \cite{GuCW16}.  
For each spatial atom $a_i$, $\boolabs(\phi)$ introduces a Boolean variable to denote whether $a_i$ corresponds to a nonempty heap or not. With these Boolean variables, the semantics of separating conjunction are encoded in $\boolabs(\phi)$. Moreover, for each predicate atom $a_i$, $\boolabs(\phi)$ contains an abstraction of $a_i$, where the formulae $\ufld_1(a_i)$ and $\ufld_{\ge 2}(a_i)$ are used. Intuitively, $\ufld_1(a_i)$ and $\ufld_{\ge 2}(a_i)$ correspond to the separation logic formulae obtained by unfolding the rule $R_1$ \emph{once} and \emph{at least twice} respectively. We include the construction here so one can see the role of the transitive closure in $\boolabs(\phi)$.
The details of $\boolabs(\phi)$ can be found in \iftoggle{fullver}{Appendix~\ref{app-sec-sat}}{\cite{GCW18}}.  


%



Let $a_i = P(Z_1,\vec{\mu}; Z_2, \vec{\nu}; \vec{\chi})$ and $R_1$ be the inductive rule in Eqn. ($*$). 
If $E$ occurs in $\vec{\gamma}$ in the body of $R_1$,  we  use $\idx_{(P,\vec{\gamma},E)}$ to denote the unique index $j$ such that $\gamma_j = E$. (The uniqueness follows from {\bf C4}.)

\vspace{-1mm}
\begin{definition}[$\ufld_1(a_i)$ and $\ufld_{\ge 2}(a_i)$]
$\ufld_1(a_i)$ and $\ufld_{\ge 2}(a_i)$ are defined by distinguishing the following two cases:
 \vspace{-1mm}
 \begin{itemize}
 \item 
If $E$ occurs in $\vec{\gamma}$ in the body of $R_1$, then  
{\small
\smallskip
$\ufld_1(a_i)  :=   (E  = \beta_{\idx_{(P,\vec{\gamma},E)}} \wedge \varphi_P(\dt(\vec{\alpha}), \dt(\vec{\beta})))[Z_1/E, \vec{\mu}/\vec{\alpha}, Z_2/F, \vec{\nu}/\vec{\beta}, \vec{\chi}/\vec{\xi}]$
\smallskip
}
and $\ufld_{\ge 2}(a_i):=$

{\small
\smallskip
{
\hspace{-7mm}$ 
\begin{array}{l c l } 
\left(
\begin{array}{l}
E \neq \beta_{\idx_{(P,\vec{\gamma},E)}} \wedge E \neq \gamma_{2, \idx_{(P,\vec{\gamma},E)}}\ \wedge \\
\varphi_P[\dt(\vec{\gamma_1})/\dt(\vec{\beta})] \wedge  \varphi_P[\dt(\vec{\gamma_1})/\dt(\vec{\alpha}), \dt(\vec{\gamma_2})/\dt(\vec{\beta})]\ \wedge\\
 (\TC[\varphi_P])[\dt(\vec{\gamma_2})/\dt(\vec{\alpha})]
\end{array}
\!\!\!\right)
[Z_1/E, \vec{\mu}/\vec{\alpha}, Z_2/F, \vec{\nu}/\vec{\beta}, \vec{\chi}/\vec{\xi}],
\end{array}
$
}
\smallskip
}
 
where $\vec{\gamma_1}$ and $\vec{\gamma_2}$ are fresh variables.
%

\item Otherwise, let 
$
\ufld_1(a_i) :=\varphi_P [Z_1/E, \vec{\mu}/\vec{\alpha}, Z_2/F, \vec{\nu}/\vec{\beta}, \vec{\chi}/\vec{\xi}]
$
and 

{\small
\smallskip
$\ufld_{\ge 2}(a_i):=
\begin{array}{l c l }
\left(
\begin{array}{l}
\varphi_P[\dt(\vec{\gamma_1})/\dt(\vec{\beta})]\ \wedge \\
\varphi_P[\dt(\vec{\gamma_1})/\dt(\vec{\alpha}), \dt(\vec{\gamma_2})/\dt(\vec{\beta})]\ \wedge\\ 
(\TC[\varphi_P])[\dt(\vec{\gamma_2})/\dt(\vec{\alpha})]
\end{array}
\!\!\!\right)
[Z_1/E, \vec{\mu}/\vec{\alpha}, Z_2/F, \vec{\nu}/\vec{\beta}, \vec{\chi}/\vec{\xi}],
\end{array}
$
\smallskip
}

where $\vec{\gamma_1}$ and $\vec{\gamma_2}$ are fresh variables.
\end{itemize}
\vspace{-2mm}
\end{definition}

 

\vspace{-1mm}
\noindent Here, $\TC[\varphi_P](\dt(\vec{\alpha}), \dt(\vec{\beta}))$ denotes the transitive closure of $\varphi_P$. In Section~\ref{sec:tc}, it will be shown that $\TC[\varphi_P](\dt(\vec{\alpha}), \dt(\vec{\beta}))$ can be written as an $\qgdbs$ formula. As a result, since we are only concerned with satisfiability and can treat the location data type $\loc$ simply as integers $\intnum$,  $\boolabs(\phi)$ can also be read as an $\qgdbs$ formula. In Section~\ref{sec:sat-qgdbs}, we shall show that the satisfiability of $\qgdbs$ is  decidable. Following this chain of reasoning, we conclude that \emph{the satisfiability of $\lcslidset[P]$ formulae is decidable}. 
%


\section{Transitive closure of difference-bound set relations} \label{sec:tc}


In this section, we show how to compute the transitive closure of the difference-bound set relation $R$ given by a  $\dbs$ formula $\varphi_R(\vec{S}, \vec{S}')$. Our approach is, in a nutshell, to encode $TC[\varphi_R](\vec{S}, \vec{S'})$ into $\qgdbs$. 
%
%
We shall only sketch part of a simple case,  i.e., in $\varphi_R(S, S')$ only one source and destination set parameter are present. The details are however given in \iftoggle{fullver}{Appendix~\ref{app:tcsimple}}{\cite{GCW18}}.

Recall that, owing to the simplicity of $\dbs$, the integer terms $T_i$ in $\varphi_R(S,S')$ can only be $\min(S)$, $\max(S)$, $\min(S')$ or $\max(S')$, whereas the set terms $T_s$ are $\emptyset$, $\{\min(S)\}$, $\{\min(S')\}$, $\{\max(S)\}$, $\{\max(S')\}$, or their union. 
%
For reference, we write $\varphi_R(S, S') = \varphi_{R,1} \wedge \varphi_{R,2}$, where $\varphi_{R,1}$ is an equality of set terms (i.e., they are of the form $S = S' \cup T_s$ or $S' = S \cup T_s$), and $\varphi_{R,2}$ is a conjunction of constraints over integer terms (i.e., a conjunction of formulae $T_i \ \le\ T_i + c$). $\varphi_{R,1}$ and $\varphi_{R,2}$ will be referred to as the \emph{set} and \emph{integer subformula} of $\varphi_R(S,S')$ respectively. 
We shall focus on the case $\varphi_{R,1} := S = S' \cup T_s$. The symmetrical case  $\varphi_{R,1} := S' = S \cup T_s$ can be adapted easily. 

The integer subformula $\varphi_{R,2}$ can be represented by an \emph{edge-weighted directed graph} $\Gg(\varphi_{R,2})$, where the vertices are all integer terms appearing in $\varphi_{R,2}$, and there is an edge from $T_1$ to $T_2$ with weight $c$ iff $T_1=T_2+c$ (equivalent to $T_2 = T_1 - c$), or  $T_1\leq T_2+c$, or $T_2+c\geq T_1$ appears in $\varphi_{R,2}$. 
%
%
The weight of a path in $\Gg(\varphi_{R,2})$ is the sum of the weights of the edges along the path. A \emph{negative cycle} in $\Gg(\varphi_{R,2})$ is a cycle with negative weight. It is known that $\varphi_{R,2}$ is satisfiable iff $\Gg(\varphi_{R,2})$ contains no negative cycles \cite{Mine01}. 
Suppose $\varphi_{R,2}$ is satisfiable. We define the \emph{normal form} of $\varphi_{R,2}$, denoted by  $\norm(\varphi_{R,2})$, as the conjunction of the formulae $T_1 \le T_2 + c$ such that $T_1 \neq T_2$, $T_2$ is reachable from $T_1$ in $\Gg(\varphi_{R,2})$, and $c$ is 
path from $T_1$ to $T_2$ with the minimal weight in $\Gg(\varphi_{R,2})$. 

$S$ (resp. $S'$) is said to be \underline{\emph{surely nonempty} in $\varphi_R$} if $\min(S)$ or $\max(S)$ (resp. $\min(S')$ or $\max(S')$) occurs in $\varphi_R$; otherwise, $S$ (resp. $S'$) 
is  \underline{\emph{possibly empty} in $\varphi_R$}. Recall that, according to the semantics, an occurrence of $\min(S)$ or $\max(S)$ (resp. $\min(S')$ or $\max(S')$) in $\varphi_R$ implies that $S$ (resp. $S'$) is interpreted as a nonempty set in every satisfiable assignment. Provided that $S'$ is nonempty, we know that $\min(S')$ and $\max(S')$ belong to $S'$. Therefore, for simplicity, here we assume that in $S= S' \cup T_s$, $T_s$ contains neither $\min(S')$ nor $\max(S')$.
The situation that $T_s$ contains $\min(S')$ and $\max(S')$ can be dealt with in a similar way.
%

\medskip
\noindent\emph{Saturation.} For technical convenience, we introduce a concept of saturation. The main purpose of saturation is to regularise $T_s$ and $\varphi_{R,2}$, which would make the transitive closure construction more ``syntactic".  

\vspace{-1mm}
\begin{definition}
	\label{def-saturate}
Let $\varphi_R(S, S') := S = S' \cup T_s \wedge \varphi_{R,2}$ be a $\dbs$ formula. Then $\varphi_R(S, S')$ is saturated if $\varphi_R(S,S')$ satisfies the following conditions 
%
\vspace{-1mm}
\begin{itemize}
\item $\varphi_{R,2}$ is satisfiable and  in normal forms,

\item   
$T_s \subseteq\{\max(S),\min(S)\}$, 
\item if $S$ (resp. $S'$) is surely nonempty in $\varphi_R$, then $\varphi_{R,2}$ contains a conjunct $\min(S) \le \max(S) - c $ for some $c \ge 0$ (resp. $\min(S') \le \max(S') - c'$ for some $c' \ge 0$),
\item if both $S$ and $S'$ are surely nonempty in $\varphi_R$, then 
\vspace{-1mm}
\begin{itemize}
\item $\varphi_{R,2}$ contains two conjuncts $\min(S) \le \min(S') - c$ and $\max(S') \le \max(S) - c'$ for some $c,c' \ge 0$,
\item $\min(S) \not \in T_s$ iff $\varphi_{R,2}$ contains the conjuncts $\min(S) \le \min(S')$ and $\min(S') \le \min(S)$,
\item $\max(S) \not \in T_s$ iff $\varphi_{R,2}$ contains the conjuncts $\max(S') \le \max(S)$ and $\max(S) \le \max(S')$,
\end{itemize}
%

\item if $\varphi_{R,2}$ contains the conjuncts $\min(S) \le \max(S)$ and  $\max(S) \le \min(S)$, then $\max(S) \not \in T_s$ (possibly $\min(S) \in T_s$).
\end{itemize}
\vspace{-1mm}
\end{definition}
\vspace{-3mm}
For a formula $\varphi_R(S,S') :=  S = S' \cup T_s \wedge \varphi_{R,2}$, one can easily saturate $\varphi_{R}$, yielding a saturated formula  $\saturate(\varphi_R(S,S'))$. (It is possible, however, to arrive at an unsatisfiable formula, then we are done.) 


\vspace{-2mm}
\begin{proposition}\label{prop-saturate}
Let $\varphi_R(S,S') := \varphi_{R,1} \wedge \varphi_{R,2}$ be a $\dbs$ formula such that $\varphi_{R,1} := S = S' \cup T_s$ and $\varphi_{R,2}$ is satisfiable. Then $\varphi_R$ can be transformed, in polynomial time, to an equisatisfiable formula  $\saturate(\varphi_R(S,S'))$, and if the integer subformula  of $\saturate(\varphi_R(S,S'))$ is satisfiable, then  $\saturate(\varphi_R(S,S'))$ is saturated.
%
\end{proposition}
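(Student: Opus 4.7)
The plan is to apply a short, bounded sequence of transformations to $\varphi_R$, each preserving satisfiability and computable in polynomial time, until all five saturation conditions hold simultaneously. The crucial structural observation is that the set of integer terms that may occur anywhere in $\varphi_{R,2}$ or $T_s$ is confined to the fixed vocabulary $\{\min(S), \max(S), \min(S'), \max(S')\}$, so the graph $\Gg(\varphi_{R,2})$ has at most four vertices plus constant offsets. Consequently, every normalisation, reachability check, or negative-cycle test runs in constant time in the size of the vocabulary, and the total cost is dominated by the input representation.

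First, I would normalise $\varphi_{R,2}$ by running an all-pairs shortest-path computation on $\Gg(\varphi_{R,2})$; by hypothesis there is no negative cycle, so this yields the $\norm(\varphi_{R,2})$ required by the first bullet of Definition~\ref{def-saturate}. Second, I would propagate the integer consequences implicit in $\varphi_{R,1}: S = S' \cup T_s$: since $S \supseteq S'$, whenever $S'$ is surely nonempty I conjoin $\min(S) \le \min(S')$ and $\max(S') \le \max(S)$; and whenever $\min(S) \notin T_s$ but $S$ is surely nonempty, the fact $\min(S) \in S'$ forces the reverse inequality $\min(S') \le \min(S)$, and symmetrically for $\max$. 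I would also add $\min(S) \le \max(S) - 0$ (and its $S'$ counterpart) whenever the respective set is surely nonempty, satisfying the third bullet. Third, I would prune $T_s$: remove any element already subsumed — in particular, if both $\min(S) \le \max(S)$ and $\max(S) \le \min(S)$ appear (so $S$ is a singleton in every model), drop $\max(S)$ from $T_s$, keeping $\min(S)$, to discharge the fifth bullet; and if the biconditionals of the fourth bullet currently fail, either add the missing inequalities or remove the offending generator from $T_s$ as appropriate.

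After propagation, I would re-normalise and, if a negative cycle has appeared, conclude unsatisfiability (consistent with the proposition's ``if'' clause). Termination in $O(1)$ rounds is immediate because the vocabulary is fixed and each round can only tighten finitely many shortest-path weights or remove generators from $T_s$. Equisatisfiability holds at every step because each added atomic formula is a semantic consequence of $\varphi_{R,1} \wedge \varphi_{R,2}$ under the convention that $\min$ and $\max$ of the empty set render containing atoms $\lfalse$, and each removed generator of $T_s$ is provably redundant in the current integer subformula.

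The main obstacle will be the interaction captured by the fourth and fifth bullets: dropping $\max(S)$ from $T_s$ (in the degenerate case $\min(S)=\max(S)$) or tuning the presence of $\min(S), \max(S)$ in $T_s$ to match the biconditionals must be done without creating a model in which $S$ and $S' \cup T_s$ disagree on their extreme elements. The key is a careful case analysis showing that whenever $T_s$ is modified, the integer subformula already enforces the equalities needed for $S = S' \cup T_s$ to continue to hold in every model, and whenever $\varphi_{R,2}$ is enriched, the added inequalities are semantically forced by the set equality together with the non-emptiness witnesses already present. This case analysis is routine but must cover all combinations of surely-nonempty versus possibly-empty status of $S$ and $S'$.
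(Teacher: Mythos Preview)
Your proposal is correct and follows essentially the same approach as the paper's proof: walk through the five saturation conditions of Definition~\ref{def-saturate}, adding integer inequalities that are semantic consequences of $S = S' \cup T_s$ (such as $\min(S)\le\min(S')$, $\max(S')\le\max(S)$, and $\min(S)\le\max(S)$ when the relevant sets are surely nonempty), pruning redundant generators from $T_s$, and then re-normalising, with the bounded four-term vocabulary guaranteeing polynomial time. One small caution: when you write ``whenever $\min(S)\notin T_s$ but $S$ is surely nonempty, the fact $\min(S)\in S'$ forces the reverse inequality $\min(S')\le\min(S)$,'' you should also require $S'$ surely nonempty---otherwise conjoining $\min(S')\le\min(S)$ may kill models where $S'=\emptyset$---but this is exactly the hypothesis of the fourth bullet, and your closing remark about covering all nonempty/possibly-empty combinations shows you are aware of the issue.
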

\vspace{-1mm}

In the sequel, we assume that $\varphi_R(S,S') := \varphi_{R,1} \wedge \varphi_{R,2}$ is satisfiable and \emph{saturated}. For notational convenience, for $A \subseteq \{\min(S), \max(S), \min(S'), \max(S')\}$ with 
$|A| = 2$, let $\lfloor\varphi_{R,2}\rfloor_A$ denote the conjunction of atomic formulae in $\varphi_{R,2}$ where \emph{all} the elements of $A$ occur.  

%
%
%
%
Evidently, $\lfloor\varphi_{R,2}\rfloor_A$ gives a partition of atomic formulae of $\varphi_{R,2}$. Namely, 

\smallskip
\hspace{1cm} $\varphi_{R,2}= \bigwedge_{A\subseteq \{\min(S), \max(S), \min(S'), \max(S')\}, |A| = 2}\  \lfloor\varphi_{R,2}\rfloor_A$.
\smallskip

\noindent We proceed by a case-by-case analysis of $\varphi_{R,1}$. There are four cases: (I) $\varphi_{R,1} := S = S'$, (II) $\varphi_{R,1} := S = S' \cup \{\min(S)\}$, (III) $\varphi_{R,1} = S = S' \cup \{\max(S)\}$ and 
(IV)  $\varphi_{R,1} = S = S' \cup \{\min(S), \max(S)\}$. Case (I) is trivial, and Case (III) is symmetrical to (II). However, both (II) and (IV) are technically involved. 
%
%
%
%
%
%
%
We shall only give a ``sample" treatment of these cases, i.e., part of arguments for Case (II); the full account of Case (II) and (IV) are given in \iftoggle{fullver}{Appendix~\ref{app:tcsimple}}{\cite{GCW18}}. 

To start with, Case (II) can be illustrated schematically as $\underbrace{|-\overbrace{|------|}^{S'}}_S$. 
We observe that $S$ is surely nonempty in $\varphi_R$.  
We then distinguish two subcases depending on whether $S'$ is possibly empty or surely nonempty in $\varphi_R$. Here we give the details of the latter subcase because it is more interesting. In this case,  both $S$ and $S'$ are surely nonempty in $\varphi_R$. By Definition~\ref{def-saturate}(4--5), 
$\varphi_{R,2}$ contains a conjunct $\min(S) \le \min(S') - c$  for some $c \ge 0$, as well as $\max(S') \le \max(S)$ and $\max(S) \le \max(S')$ (i.e., $\max(S')=\max(S)$).
Therefore, we can assume  

\smallskip
$
\begin{array}{l c l}
\varphi_{R,2} & = & \max(S') \le \max(S) \wedge \max(S) \le \max(S') \wedge \lfloor \varphi_{R,2} \rfloor_{\min(S),\min(S')}\ \wedge \\
& &  
\lfloor \varphi_{R,2} \rfloor_{\min(S),\max(S)} \wedge \lfloor \varphi_{R,2} \rfloor_{\min(S'),\max(S')}.
\end{array}
$
\smallskip

\noindent Note that in $\varphi_{R,2} $ above, the redundant subformulae 
$\lfloor \varphi_{R,2} \rfloor_{\min(S),\max(S')}$  and $\lfloor \varphi_{R,2} \rfloor_{\min(S'),\max(S)}$ have been omitted.

The formula $\lfloor \varphi_{R,2} \rfloor_{\min(S),\min(S')}$ is said to be \emph{strict} if it contains a conjunct $\min(S) \le \min(S') - c$ for some $c > 0$. Otherwise, it is said to be \emph{non-strict}. Intuitively, if $\lfloor \varphi_{R,2} \rfloor_{\min(S),\min(S')}$ is strict, then for $n, n' \in \intnum$, the validity of $(\lfloor \varphi_{R,2} \rfloor_{\min(S),\min(S')})[n/\min(S), n'/\min(S')]$ implies that $n < n'$. 
For the sketch we only present \emph{the case that $\lfloor \varphi_{R,2} \rfloor_{\min(S),\min(S')}$ is strict}; the other cases are similar and can be found in \iftoggle{fullver}{Appendix~\ref{app:tcsimple}}{\cite{GCW18}}.

Evidently, $\TC[\varphi_R](S, S')$ can be written as $(S= S') \vee \bigvee \limits_{n \ge 1} \varphi^{(n)}_R$, 
where $\varphi^{(n)}_R$ is obtained by unfolding $\varphi_R$ for $n$ times, that is, 

{\small
\smallskip
$
\varphi^{(n)}_R = \exists S_1,\cdots, S_{n+1}. \left(
\begin{array}{l}
S_1 = S \wedge S_{n+1} = S'\ \wedge \\
\bigwedge \limits_{i \in [n]} (S_i = S_{i+1} \cup \{\min(S_i)\} \wedge \varphi_{R,2}[S_i/S, S_{i+1}/S'])
\end{array}
\right),
$
}
\smallskip

\noindent where $\varphi_{R,2}[S_i/S, S_{i+1}/S']$ is obtained from $\varphi_{R,2}$ by replacing $S$ (resp. $S'$) with $S_i$ (resp. $S_{i+1}$).
%

\noindent Clearly, $\varphi^{(1)}_R = \varphi_R$, and

{\small
\smallskip
$
\varphi^{(2)}_R = \exists S_{2}.\  (S = S_{2} \cup \{\min(S)\} \wedge S_2 = S' \cup \{\min(S_2)\} \wedge \varphi_{R,2}[S_2/S'] \wedge \varphi_{R,2}[S_2/S]).
$
}

\noindent For $\varphi^{(n)}_R$ where $n \ge 3$, we first simplify $\varphi^{(n)}_R$  to construct a finite formula for $\TC[\varphi_R](S, S')$. 
The subformula $\bigwedge \limits_{i \in [n]} (S_i = S_{i+1} \cup \{\min(S_i)\} \wedge \varphi_{R,2}[S_i/S, S_{i+1}/S'])$ can be rewritten as 

{\small
\smallskip
$
\bigwedge \limits_{i \in [n]} 
\left(
\begin{array}{l}
S_i = S_{i+1} \cup \{\min(S_i)\} \wedge \max(S_i) = \max(S_{i+1})\ \wedge \\
(\lfloor \varphi_{R,2} \rfloor_{\min(S),\min(S')}[S_i/S,S_{i+1}/S']) \wedge
(\lfloor \varphi_{R,2} \rfloor_{\min(S),\max(S)}[S_i/S])\ \wedge \\ 
(\lfloor \varphi_{R,2} \rfloor_{\min(S'),\max(S')}[S_{i+1}/S'])
\end{array}
\right).
$
\smallskip
}

\noindent Because $S_{i} = S_{i+1} \cup \{\min(S_{i})\}$ for each $i \in [n]$, we have $\max(S_1) = \cdots = \max(S_n)$ and $\min(S_1) \le \cdots \le \min(S_n)$. Since $\lfloor \varphi_{R,2} \rfloor_{\min(S), \max(S)}$ is a conjunction of difference-bound constraints involving $\min(S)$ and $\max(S)$ only, we have $\bigwedge \limits_{i \in [n]} \lfloor \varphi_{R,2} \rfloor_{\min(S), \max(S)}[S_i/S]$ is equivalent to 
$\lfloor \varphi_{R,2} \rfloor_{\min(S), \max(S)}[S_1/S] \wedge \lfloor \varphi_{R,2} \rfloor_{\min(S), \max(S)} [S_n/S]$. To see this, assume, for instance,  

\smallskip
\hspace{2cm}$\lfloor \varphi_{R,2} \rfloor_{\min(S), \max(S)} \equiv c \le \max(S) - \min(S) \le c'$
\smallskip
 
\noindent for some constants $c, c' \ge 0$ with $c \le c'$. Then $\max(S_1) - \min(S_1) \le c'$ implies $\max(S_i) - \min(S_i) \le c'$ for each $i \in [n]$, and $c \le \max(S_n) - \min(S_n)$ implies $c \le \max(S_i) - \min(S_i)$ for each $i \in [n]$. Therefore, 
$\lfloor \varphi_{R,2} \rfloor_{\min(S), \max(S)}[S_1/S] \wedge \lfloor \varphi_{R,2} \rfloor_{\min(S), \max(S)} [S_n/S] \equiv c \le \max(S_1) - \min(S_1) \le c' \wedge c \le \max(S_n) - \min(S_n) \le c'$ implies that
$\bigwedge \limits_{i \in [n]} \lfloor \varphi_{R,2} \rfloor_{\min(S), \max(S)}[S_i/S]$, thus they are equivalent. (The other direction is trivial.) Likewise, one has $\lfloor \varphi_{R,2} \rfloor_{\min(S'), \max(S')}[S_2/S'] \wedge \lfloor \varphi_{R,2} \rfloor_{\min(S'), \max(S')} [S_{n+1}/S']$ implies $\bigwedge \limits_{i \in [n]} \lfloor \varphi_{R,2} \rfloor_{\min(S'), \max(S')}[S_{i+1}/S']$, thus they are equivalent.
Therefore, $\varphi^{(n)}_R$ can be transformed into 

\smallskip
{\small
\noindent$
\exists S_2, S_n.
\left(\begin{array}{l}
\lfloor \varphi_{R,2} \rfloor_{\min(S),\max(S)} \wedge (\lfloor \varphi_{R,2} \rfloor_{\min(S),\max(S)}[S_n/S])\ \wedge \\
(\lfloor \varphi_{R,2} \rfloor_{\min(S'),\max(S')}[S_{2}/S']) \wedge \lfloor \varphi_{R,2} \rfloor_{\min(S'),\max(S')} \wedge S = S_2 \cup \{\min(S)\}\  \wedge \\
S_n = S' \cup \{\min(S_n)\} \wedge \max(S) = \max(S_2)\wedge \max(S_n) = \max(S')\ \wedge \\
(\lfloor \varphi_{R,2} \rfloor_{\min(S),\min(S')}[S_2/S']) \wedge (\lfloor \varphi_{R,2} \rfloor_{\min(S),\min(S')}[S_n/S])  \wedge \\
\exists S_3,\cdots, S_{n-1}.
\bigwedge \limits_{2 \le i \le n-1} 
\left(
\begin{array}{l}
S_i = S_{i+1} \cup \{\min(S_i)\} \wedge \max(S_i) = \max(S_{i+1})\ \wedge \\
(\lfloor \varphi_{R,2} \rfloor_{\min(S),\min(S')}[S_i/S,S_{i+1}/S'])
\end{array}
\right)
\end{array}
\right).
$
}

\smallskip

\noindent {\bf Claim}.
{\it
Suppose $n \ge 3$ and $\lfloor \varphi_{R,2} \rfloor_{\min(S),\min(S')}$ is strict. Then 

{\small
$\exists S_3,\cdots, S_{n-1}.
\bigwedge \limits_{2 \le i \le n-1} 
\left(
\begin{array}{l}
S_i = S_{i+1} \cup \{\min(S_i)\} \wedge \max(S_i) = \max(S_{i+1})\ \wedge \\
(\lfloor \varphi_{R,2} \rfloor_{\min(S),\min(S')}[S_i/S,S_{i+1}/S'])
\end{array}
\right)
$		
}

\noindent is equivalent to 

\smallskip
\noindent $
\small
\begin{array}{l}
S_n \neq \emptyset \wedge S_2 \setminus S_n \neq \emptyset  \wedge S_n \subseteq S_2 \wedge |S_2 \setminus S_n| = n-2 \wedge \max(S_2\setminus S_n) < \min(S_n)\ \wedge \\
\forall y, z.\ \suc((S_2 \setminus S_n)  \cup \{\min(S_n)\}, y, z) \rightarrow (\lfloor \varphi_{R,2} \rfloor_{\min(S), \min(S')}[y/\min(S), z/\min(S')]),
\end{array}
$
\smallskip
}

\noindent where $\suc(S, x, y)$ specifies intuitively that $y$ is the successor of $x$ in $S$, that is, 

\smallskip
\hspace{1cm} $\suc(S, x, y)= x \in S \wedge y \in S \wedge x < y \wedge \forall z  \in S.\ (z \le x \vee y \le z).$
\smallskip

\noindent   Note that  $|\cdot|$ denotes the set cardinality which can be easily encoded into $\qgdbs$. (\iftoggle{fullver}{Appendix~\ref{app:tcsimple}}{\cite{GCW18}} gives the proof of the claim.)
It follows that $\TC[\varphi_R](S, S')=$ 

\smallskip
$
\small
\begin{array}{l}
(S = S') \vee  \varphi_R(S,S') \vee \varphi^{(2)}_R(S,S')\ \vee \\
\exists S_1, S_2.
\left(\begin{array}{l}
S = S_1 \cup \{\min(S)\} \wedge S_2 = S' \cup \{\min(S_2)\}\ \wedge  \\
\max(S)  = \max(S_1) \wedge \max(S_2) = \max(S')\ \wedge\\
 S_2 \neq \emptyset \wedge S_1 \setminus S_2 \neq \emptyset \wedge 
S_2 \subseteq S_1 \wedge \max(S_1 \setminus S_2) < \min(S_2)\ \wedge \\
\lfloor \varphi_{R,2} \rfloor_{\min(S),\max(S)} \wedge (\lfloor \varphi_{R,2} \rfloor_{\min(S),\max(S)}[S_2/S])\ \wedge \\
(\lfloor \varphi_{R,2} \rfloor_{\min(S'),\max(S')}[S_1/S']) \wedge \lfloor \varphi_{R,2} \rfloor_{\min(S'),\max(S')} \  \wedge \\
(\lfloor \varphi_{R,2} \rfloor_{\min(S),\min(S')}[S_1/S']) \wedge (\lfloor \varphi_{R,2} \rfloor_{\min(S),\min(S')}[S_2/S]) \ \wedge \\
\forall y, z.
\left(
\begin{array}{l c l}
\suc((S_1 \setminus S_2)  \cup \{\min(S_2)\}, y, z) & \rightarrow & \\
& & \hspace{-3cm} (\lfloor \varphi_{R,2} \rfloor_{\min(S), \min(S')}[y/\min(S), z/\min(S')])
\end{array}
\right)
\end{array}
\right).
\end{array}
$

\hide{

\vspace{-5mm}
\subsection{The general case} \label{sec-tc-general}

\vspace{-1mm}

In this section, we consider the general case $\varphi_R(\vec{S}, \vec{S'})$ where a \emph{vector} of set variables is present. We write $len(\vec{S})$ for the \emph{length} of $\vec{S}$. 


Recall two conditions {\bf C3} and {\bf C4} in Section~\ref{sec-prelm}. It follows that, for a predicate $P$, the data formula $\varphi_P(\vec{S}; \vec{S'})$ extracted from the inductive rule of $P$ satisfies the \emph{independence} property. Namely, for each atomic formula $\varphi$ in $\varphi_P(\vec{S}, \vec{S'})$, there is some $i \in [len(\vec{S})]$  such that all the variables in $\varphi$ are from $\{S_i, S'_i\}$. This property is crucial to obtain a complete decision procedure.  

Let $len(\vec{S})=k$. Then $\varphi_R(\vec{S}, \vec{S'})$ can be rewritten into $\bigwedge \limits_{i \in [k]} \varphi^{[i]}_{R}(S_i, S'_i)$, where $\varphi^{[i]}_{R}(S_i, S'_i)$ is the conjunction of the atomic formulae of $\varphi_R$ that involve only the variables from $\{S_i, S'_i\}$. Moreover, for $i \in [k]$, let $\varphi^{[i]}_{R}(S_i, S'_i) = \varphi^{[i]}_{R,1}(S_i, S'_i) \wedge \varphi^{[i]}_{R,2}(S_i, S'_i)$, where $\varphi^{[i]}_{R,1}(S_i, S'_i)$ and $\varphi^{[i]}_{R,2}(S_i, S'_i)$ are the set and integer subformula of $\varphi^{[i]}_{R}(S_i, S'_i)$ respectively.

To compute $\TC[\varphi_R](\vec{S}, \vec{S'})$ for $\varphi_R(\vec{S}, \vec{S'}) = \bigwedge \limits_{i \in [k]} \varphi^{[i]}_{R}(S_i, S'_i)$,  we compute $\TC[\varphi^{[i]}_{R}](S_i, S'_i)$ separately for each $i\in [k]$, but  we also need to take the \emph{synchronisation} of different $\varphi^{(i)}_{R}(S_i, S'_i)$ into account. For instance, let $\varphi_R(S_1, S_2, S'_1, S'_2) \equiv \varphi^{[1]}_{R}(S_1, S'_1) \wedge \varphi^{[2]}_{R}(S_2, S'_2)$, where  $\varphi^{[i]}_{R}(S_i, S'_i) $ is obtained from $\varphi_R(S,S')$ in Example~\ref{exmp-subcase-II-ii} by replacing $S,S'$ with $S_i, S'_i$ respectively for $i =1,2$. 
It is not hard to see that $\TC[\varphi_R](S_1, S_2, S'_1, S'_2)$ should also include $\min(S'_1)- \min(S_1) = \min(S'_2)- \min(S_2)$. 

For $i \in [k]$, let $\Phi_i$ denote the formula obtained from $\TC[\varphi^{[i]}_{R}](S_i, S'_i)$ by removing the disjunct $S_i = S'_i$. Then we compute $\TC[\varphi_R](\vec{S}, \vec{S'})$ as,

\smallskip
$
\small
\begin{array}{l}
TC[\varphi_R](\vec{S}, \vec{S'}) = \big (\bigwedge_{i \in [k]} S_i = S'_i \big)\ \vee  \\
\left(
\bigwedge_{i \in [k]} \Phi_i \wedge 
\quantel\left(
\exists x.\ x > 0 \wedge \bigwedge_{i \in [k]} 
\left( 
\begin{array}{l}
(\lfloor \varphi^{[i]}_{R,2} \rfloor_{\min(S_i), \min(S'_i)})'\ \wedge \\
(\lfloor \varphi^{[i]}_{R,2} \rfloor_{\max(S_i), \max(S'_i)})'
\end{array} 
\right)
\right)
\right),
\end{array}
$
\smallskip

\noindent where $(\lfloor \varphi^{[i]}_{R,2} \rfloor_{\min(S_i), \min(S'_i)})'$ is obtained from $\lfloor \varphi^{[i]}_{R} \rfloor_{\min(S_i), \min(S'_i)}$ by replacing $\min(S_i) \le \min(S'_i) - c$ with $\min(S_i) \le \min(S'_i) - cx$, and $\min(S'_i) \le \min(S_i) + c'$ with $\min(S'_i) \le \min(S_i) + c'x$; similarly for $(\lfloor \varphi^{[i]}_{R,2} \rfloor_{\max(S_i), \max(S'_i)})'$. Here, $\quantel$ denotes the quantifier elimination procedure to remove the variable $x$. This is possible as $(\lfloor \varphi^{[i]}_{R,2} \rfloor_{\min(S_i), \min(S'_i)})'$ and $(\lfloor \varphi^{[i]}_{R,2} \rfloor_{\max(S_i), \max(S'_i)})'$ are both Presburger arithmetic formulae.  An example of the construction is given in Appendix~\ref{app:exgen}. 

}

\hide{
\begin{example}\label{exmp-general-case}
Let $\varphi_R(S_1, S_2, S'_1, S'_2) \equiv \varphi^{(1)}_{R}(S_1, S'_1) \wedge \varphi^{(2)}_{R}(S_2, S'_2)$, where for $i =1,2$, $\varphi^{(i)}_{R}(S_i, S'_i) $ is obtained from $\varphi_R(S,S')$ in Example~\ref{exmp-sat} by replacing $S,S'$ with $S_i, S'_i$ respectively. Then in $TC[\varphi_R](\vec{S}, \vec{S'})$, 

\smallskip
\hspace{4mm} $
\begin{array}{l l}
& \quantel\left(
\exists x.\ x > 0 \wedge \bigwedge_{i \in [k]} 
\left( 
\begin{array}{l}
(\lfloor \varphi^{(i)}_{R,2} \rfloor_{\min(S_i), \min(S'_i)})'\ \wedge \\
(\lfloor \varphi^{(i)}_{R,2} \rfloor_{\max(S_i), \max(S'_i)})'
\end{array} 
\right)
\right) \\
= & \exists x.\ x > 0 \wedge \bigwedge_{i =1,2} (\min(S'_i) = \min(S_i) + x \wedge \max(S_i) =  \max(S'_i))\\
= & \min(S_1 ) < \min(S'_1) \wedge \min(S'_1) - \min(S_1) = \min(S'_2) - \min(S_2)\ \wedge\\
&  \max(S_1) = \max(S'_1) \wedge   \max(S_2) = \max(S'_2).
\end{array}
$
\end{example}
}

\hide{
%
}

\section{Satisfiability of $\qgdbs$} \label{sec:sat-qgdbs}


In this section, we focus on the second ingredient of the procedure for deciding satisfiability of $\lcslidset[P]$,
i.e., the satisfiability of  $\qgdbs$. 
We first note that $\qgdbs$ is defined over $\intnum$. To show the decidability, it turns to be much easier to work on $\natnum$. We shall write  $\qgdbs_\intnum$ and $\qgdbs_\natnum$ to differentiate them when necessary. 
Moreover, for technical reasons, we also introduce $\qgdbs^-$, the fragment of $\qgdbs$ excluding formulae of the form $T_m \op 0$. 

The decision procedure for the satisfiability of $\qgdbs$ proceeds with the following three steps:
\vspace{-1mm}
\begin{description}
	\item[Step I.] Translate $\qgdbs_\intnum$ to $\qgdbs_\natnum$,
	\item[Step II.] Normalize an $\qgdbs_\natnum$ formula $\Phi(\vec{x}, \vec{S})$ into $\bigvee \limits_{i} (\Phi^{(i)}_{\mathrm{core}} \wedge \Phi^{(i)}_{\mathrm{count}})$, where $\Phi^{(i)}_{\mathrm{core}}$ is an $\qgdbs^-_\natnum$ formula, and $\Phi^{(i)}_{\mathrm{count}}$ is a conjunction of formulae of the form $T_m \op 0$ which contain only variables from $\vec{x} \cup \vec{S}$,
	\item[Step III.] For each disjunct $\Phi^{(i)}_{\mathrm{core}}\wedge \Phi^{(i)}_{\mathrm{count}}$, 
	construct a Presburger automaton (PA)  $\mathcal{A}^{(i)}_{\Phi}$ which captures the models of $\Phi^{(i)}_{\mathrm{core}}\wedge \Phi^{(i)}_{\mathrm{count}}$. Satisfiability is thus reducible to the nonemptiness of PA, which is decidable \cite{SSM08}. 
\end{description}
These steps are technically involved. In particular, the third step requires exploiting Presburger automata \cite{SSM08}. The details can be found in \iftoggle{fullver}{Appendix~\ref{app-sat-qgdbs}}{\cite{GCW18}}.

\hide{
\vspace{-1mm}

Although \textbf{Step I} and \textbf{Step II} are technically involved as well, the most interesting part is \textbf{Step III} which we will elaborate below. (The details of \textbf{Step I-II} can be found in \iftoggle{fullver}{Appendix~\ref{app-sat-qgdbs}}{\cite{GCW18}}.)
 
We start with some additional notations. 
First observe that there is a one-to-one correspondence between models of  $\Phi(\vec{x}, \vec{S})$ and finite words over  $2^{AP}$ with $AP =\{x_1, \cdots, x_k, S_1,\cdots, S_l\}$ satisfying that $x_j$ occurs in \emph{exactly one} position for each $j \in [k]$. A finite word $w = w_0 \cdots w_{n-1}$ over $2^{AP}$ is a finite sequence such that $w_i \in 2^{AP}$ for each $i\in \{0\} \cup [n-1]$. On the one hand, any model $(n_1, \cdots, n_k, A_1,\cdots, A_l) \in \natnum^k \times \natset^l$ of $\Phi(\vec{x}, \vec{S})$ can be interpreted as a finite word $w$ as follows: If $k = 0$ and $A_i =  \emptyset$ for all $i \in [l]$, then $w = \varepsilon$; otherwise let $|w|=1+\max(\{n_1,\cdots,n_k\} \cup \bigcup \limits_{i \in [l]} A_i)$, and, for each position $i \in \{0\} \cup [|w|-1]$, $w_i = P \subseteq AP$ iff $P = \{x_j \mid j \in [k], i = n_j\} \cup \{S_j \mid j \in [l], i \in A_j\}$. On the other hand, for a word $w \in (2^{AP})^*$ where $x_j$ occurs in exactly one position for each $j \in [k]$, a tuple $(n_1,\cdots, n_k, A_1,\cdots, A_l) \in \natnum^k \times \natset^l$ can be constructed such that for each $j \in [k]$, $n_j  = i$ iff $x_j \in w_i$, and for each $j \in [l]$, $A_j = \{i \in \{0\} \cup [|w|-1] \mid S_j \in w_i \}$. 
By slightly abusing the notation, we also use $\Ll(\Phi(\vec{x}, \vec{S}))$ to denote the set of words $w \in (2^{AP})^*$ such that $w \models \Phi$.

\vspace{-1mm}
\begin{definition}[Presburger automata]
A Presburger automaton (PA) $\Aa$ is a tuple $(Q, \Sigma, \delta, q_0, F, \Psi)$, where $(Q, \Sigma, \delta, q_0, F)$ is an NFA with $Q=\{q_0, q_1, \ldots, q_m\}$, and $\Psi(x_{q_0}, \cdots, x_{q_m})$ is a \emph{quantifier-free Presburger arithmetic} formula over the set of variables $\{x_{q_i} \mid i \in \{0\} \cup [m]\}$. 
\end{definition} 
\vspace{-1mm}

A word $w = w_0 \cdots w_{n-1} \in (2^{AP})^*$ is accepted by $\Aa$ if there is a run $R = q_0 \xrightarrow{w_0} q_1 \cdots q_{n-1} \xrightarrow{w_{n-1}} q_n$ such that $q_n \in F$ and 
$\Psi(|R|_{q_0}/x_{q_0}, \cdots, |R|_{q_m}/x_{q_m})$ holds, 
where 
the vector $(|R|_q)_{q \in Q}$ is the Parikh image of the sequence $q_0, \cdots, q_n$, that is,  $|R|_q$ is the number of occurrences of $q$ in $R$. 
We use $\Ll(\Aa)$ to denote the set of words accepted by $\Aa$.

\vspace{-2mm}
\begin{theorem}[\cite{SSM08}]
Nonemptiness of Presburger automata is decidable. 
\end{theorem}
\vspace{-1mm}

Given $\Phi_{\mathrm{core}}\wedge \Phi_{\mathrm{count}}$, where $\Phi_{\mathrm{core}}$ is an $\qgdbs^-_\natnum$ formula and $\Phi_{\mathrm{count}}$ is a conjunction of the formulae of the form $T_m\ \op\ 0$ which contains only variables from $\vec{x} \cup \vec{S}$, our aim is to construct a PA to accept models---as words---of $\Phi_{\mathrm{core}}\wedge \Phi_{\mathrm{count}}$. To this end, we first show how to construct an NFA from $\Phi_{\mathrm{core}}$, an $\qgdbs^-_\natnum$ formula. 

It is a simple observation that an $\qgdbs^-_\natnum$ formula can be rewritten in exponential time into a formula in 
MSOW defined by the following rules,

\smallskip
\hspace{1cm}$
\Phi ::= x + 1 = y \mid x < y \mid S(x) \mid \Phi \wedge \Phi \mid \neg \Phi \mid \forall x.\ \Phi \mid \forall S.\ \Phi,
$
\smallskip

\noindent where $x,y$ are variables ranging over $\natnum$, and $S$ is a (second-order) set variable ranging over the set of \emph{finite} subsets of $\natnum$. 
Note that the exponential blow-up is because, only the successor operator is available in MSOW 
while constants $c$ are encoded in binary. 
For instance, $x_1 \le x_2 + 2$ has to be rewritten into $\exists z, z'.\ z = x_2 + 1 \wedge z' = z +1 \wedge x_1 \le z'$.


We  can  then invoke the celebrated B\"{u}chi-Elgot theorem:

\vspace{-1mm}
\begin{theorem}[\cite{Bu60,Elg61}]\label{thm-msow-nfa}
Let $\Phi(S_1,\cdots, S_k)$ be an MSOW formula. Then an NFA $\Aa_\Phi$ over $2^{\{S_1,\cdots,S_k\}}$ can be constructed so that $\Ll(\Aa_\Phi) = \Ll(\Phi(S_1,\cdots, S_k))$. 
\end{theorem}
\vspace{-1mm}

It follows from Theorem~\ref{thm-msow-nfa} that an NFA $\Aa_\Phi=(Q, AP, \delta, q_0, F)$ can be constructed from a $\qgdbs^-_\natnum$ formula $\Phi(\vec{x}, \vec{S})$ such that $\Ll(\Phi)=\Ll(\Aa_\Phi)$.
As the next step we construct a \emph{quantifier-free Presburger arithmetic} formula $\Psi$ for the sought PA out of $\Phi_{\mathrm{count}}$. We first construct, for each $x_i$, an NFA $\Aa_i$ illustrated in Fig.\ref{fig-auto-b}(a), and for each $S_j$, an NFA $\Bb_j$ illustrated in Fig.\ref{fig-auto-b}(b). We then consider an NFA  $\Aa^\times_\Phi$ which is the product of $\Aa_{\Phi}$ and all $\Aa_i$ for $i \in [k]$ and $\Bb_j$ for $j \in [l]$. Note that each state of  $\Aa^\times_\Phi$ is a vector of states $\vec{q}=(q, q_1, \cdots, q_k, q_{k+1},\cdots, q_{k+l})$ such that $q \in Q$, $q_i  \in \{p_{0,i}, p_{1,i}\}$ for each $i \in [k]$, and $q_{k+j} \in \{q_{0, j}, q_{1, j}, q_{2, j}\}$ for each $j \in [l]$. We write $\vec{q}_r$ for the $r$-th entry of $\vec{q}$, i.e., $\vec{q}_0=q$ and $\vec{q}_i=q_i$ for each $i \in [k+l]$.


We observe that, for each $i \in [k]$, $x_i$ is expressed by $\sum_{\{\vec{q} \mid \vec{q}_{i}=p_{0,i}\}} x_{\vec{q}}-1$, and for each $j \in [l]$, $\min(S_j)$ is expressed by $\sum_{\{\vec{q} \mid \vec{q}_{k+j}=q_{0,j}\}} x_{\vec{q}}-1$ and $\max(S_j)$ is expressed by $\sum_{\{\vec{q} \mid \vec{q}_{k+j}=q_{0,j}\}} x_{\vec{q}}  + \sum_{\{\vec{q}\mid\vec{q}_{k+j}=q_{1,j}\}} x_{\vec{q}}-1$. We then substitute them into $\Phi_{\mathrm{count}}$ and obtain $\Psi$ which is over the variables $x_{\vec{q}}$. 
	%

\begin{figure}[htbp]
\vspace{-4mm}
	\begin{center}
		\includegraphics[scale=0.67]{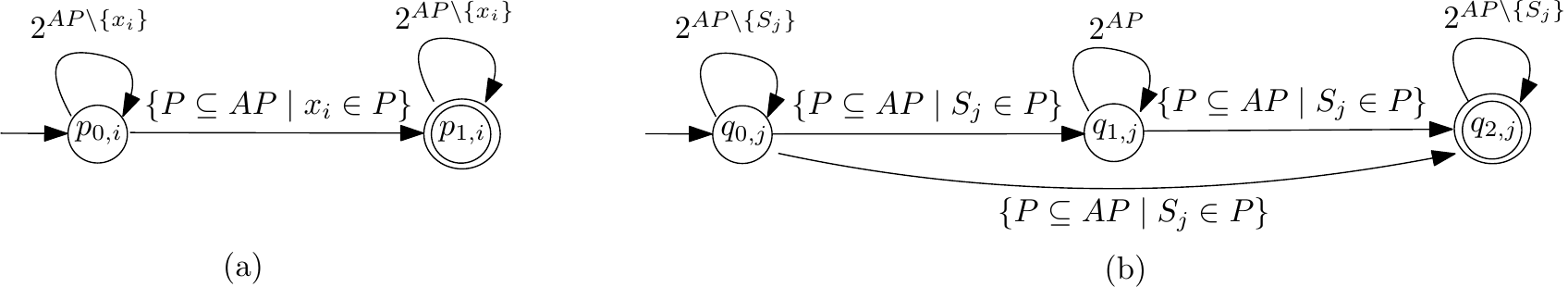}
		\vspace{-2mm}
		\caption{NFA $\Aa_i$ for $x_i$ and $\Bb_j$ for $S_j$}
		\label{fig-auto-b}
	\end{center}
\vspace{-1cm}
\end{figure}
%

\hide{
\begin{example} 
Let us illustrate the idea by considering an example that $\Phi_{\sf core}$ contains two set variables $S_1,S_2$ and $\Phi_m = (\max(S_1) - \min(S_1)) - (\max(S_2) - \min(S_2)) = 0$. 



%
%
%
%

%
Then $\min(S_i)$ is expressed by $x_{q_{0,i}} + 1$ and $\max(S_i)$ is expressed by $x_{q_{0,i}} + x_{q_{1,i}}+1$.

Therefore, the Presburger automaton $\Aa_{\Phi}$ is constructed as $(\Aa', \Psi')$, where 
\begin{itemize}
\item $\Aa' = (Q', AP, \delta', q'_0, F')$ is the product of $\Aa_{\Phi_{\sf core}}$, $\Bb_1$ and $\Bb_2$, 
\item Then $\Phi'$ is from $\Phi_m$ by replacing 
\begin{itemize}
\item $\min(S_1)$ with $\sum \limits_{(q, q_{0,1}, q_{j,2}) \in Q'} x_{(q,q_{0,1}, q_{j,2})}+ 1$, 
\item $\max(S_1)$ with  $\sum \limits_{(q, q_{0,1}, q_{j,2}) \in Q'} x_{(q,q_{0,1}, q_{j,2})}+ \sum \limits_{(q, q_{1,1}, q_{j,2}) \in Q'} x_{(q,q_{1,1}, q_{j,2})}+ 1$, 
\item $\min(S_2)$ with $\sum \limits_{(q, q_{j,1}, q_{0,2}) \in Q'} x_{(q, q_{j,1}, q_{0,2})}+ 1$, 
\item $\max(S_2)$ with $\sum \limits_{(q, q_{j,1}, q_{0,2}) \in Q'} x_{(q, q_{j,1}, q_{0,2})} + \sum \limits_{(q, q_{j,1}, q_{1,2}) \in Q'} x_{(q, q_{j,1}, q_{1,2})}+1$.
\end{itemize}
Therefore, we have
\[
\Psi' = \sum \limits_{(q, q_{1,1}, q_{j,2}) \in Q'} x_{(q,q_{1,1}, q_{j,2})} - \sum \limits_{(q, q_{j,1}, q_{1,2}) \in Q'} x_{(q, q_{j,1}, q_{1,2})} = 0.
\]
\end{itemize}

It is easy to figure out the construction for the general case from the illustration above.
\end{example}
}

\vspace{-1mm}
\begin{proposition}
For an $\qgdbs_\natnum$ formula $\Phi = \Phi_{\mathrm{core}} \wedge \Phi_{\mathrm{count}}$, $\Phi_{\mathrm{core}}$ is an $\qgdbs^-_\natnum$ formula, and $\Phi_{\mathrm{count}}$ is a conjunction of formulae of the form $T_m\ \op\ 0$ which contain only variables from $\vec{x} \cup \vec{S}$, a PA $\Aa_\Phi=(\Aa^\times_{\Phi}, \Psi)$ can be constructed effectively such that $\Ll(\Aa_\Phi) = \Ll(\Phi)$.
\end{proposition}
\vspace{-1mm}
}


%
%
%
%
%
%




\section{Conclusion}

In this paper, we have defined $\lcslidset$, SL with linearly compositional inductive predicates and set data constraints. The main feature is to identify $\dbs$ as a special class of set data constraints in the inductive definitions. We 
encoded the transitive closure of $\dbs$ into  $\qgdbs$, which was shown to be decidable. These together yield a complete decision procedure for the satisfiability of $\lcslidset$. 

The precise complexity of the decision procedure---\textsc{Nonelementary} is the best upper-bound we have now---is left open for further studies. 
Furthermore, the entailment problem of $\lcslidset$ is immediate future work.

\newpage

 


\iftoggle{fullver}{
\newpage

\appendix


\section{
	Semantics of $\lcslidset[P]$} \label{app:slsemantics}
 
Each formula in $\lcslidset[P]$ is interpreted on the states. Formally, a \emph{state} is a pair $(s,h)$, where
\vspace{-2mm} 
\begin{itemize}
	\item $s$ is an assignment function which is a partial function from $\lvars \cup \dvars \cup \svars$ to $\loc \cup \intnum \cup \intset$ such that $dom(s)$ is finite and $s$ respects the data type,
	\item $h$ is a \emph{heap} which is a partial function from $\loc \times (\Ff \cup \Dd)$ to $\loc \cup \data$ such that
	\begin{itemize} 
		\item $h$ respects the data type of fields, that is, for each $l \in \loc$ and $f \in \Ff$ (resp. $l \in \loc$ and $d \in \Dd$), if $h(l,f)$ (resp. $h(l,d)$) is defined, then $h(l,f) \in \loc$ (resp. $h(l,d) \in \intnum$); and 
		\item $h$ is field-consistent, i.e. every location in $h$ possesses the same set of fields. 
	\end{itemize}
\end{itemize}

For a heap $h$, we use $\ldom(h)$ to denote the set of locations $l \in \loc$ such that $h(l,f)$ or $h(l,d)$ is defined for some $f \in \Ff$ and $d \in \Dd$. Moreover, we use $\flds(h)$ to denote the set of fields $f \in \Ff$ or $d \in \Dd$ such that $h(l,f)$ or $h(l,d)$ is defined for some $l \in \loc$. Two heaps $h_1$ and $h_2$ are said to be \emph{field-compatible} if $\flds(h_1)=\flds(h_2)$.  We write $h_1 \# h_2$ if $\ldom(h_1) \cap \ldom(h_2)=\emptyset$ and $\flds(h_1)=\flds(h_2)$. Moreover, we write $h_1 \uplus h_2$ for the disjoint union of two field-compatible heaps $h_1$ and $h_2$, which implies $h_1\# h_2$.
%

Let $(s,h)$ be a state and $\phi$ be an $\lcslidset[P]$ formula. Then the semantics of $\lcslidset[P]$ formulae is defined as follows,

\begin{itemize}
	\item $(s,h) \vDash E = F$ (resp. $(s,h) \vDash E \neq F$) if $s(E)=s(F)$ (resp. $s(E) \neq s(F)$),
	%
	%
	\item $(s,h) \vDash \Pi_1 \wedge \Pi_2$ if $(s,h) \vDash \Pi_1$ and $(s,h) \vDash \Pi_2$,
	%
	%
	%
	%
	\item $(s,h) \vDash \Delta$ if $s \vDash \Delta$ (see semantics of $\qgdbs$ in Section~\ref{sec:logic}),
	%
	\item $(s,h) \vDash \slemp$ if $\ldom(h)=\emptyset$,
	\item $(s,h) \vDash E \mapsto (\rho)$ if $\ldom(h)=s(E)$, and for each $(f,X) \in \rho$ (resp. $(d,x) \in \rho$), $h(s(E),f)=s(X)$ (resp. $h(s(E),d)=s(x)$),
	\item $(s,h) \vDash P(E,\vec{\alpha}; F, \vec{\beta}; \vec{\xi})$ if $(s,h) \in \ldbrack P(E, \vec{\alpha}; F, \vec{\beta}; \vec{\xi}) \rdbrack$, 
	\item $(s,h) \vDash \Sigma_1 \sep \Sigma_2$ if there are $h_1,h_2$ such that $h= h_1 \uplus h_2$, $(s,h_1) \vDash \Sigma_1$ and $(s,h_2) \vDash \Sigma_2$.
\end{itemize}
where the semantics of predicates $\ldbrack P(E, \vec{\alpha}; F, \vec{\beta}; \vec{\xi}) \rdbrack$ is given by the least fixed point of a monotone operator constructed from the body of rules for $P$ in a standard way as in \cite{BFP+14}.  


\section{
Construction of $\boolabs(\phi)$ for $\phi = \Pi \wedge \Delta \wedge \Sigma$
}\label{app-sec-sat}

For each spatial atom $a_i$ rooted at $Z$, $\boolabs(\phi)$ introduces a Boolean variable $[Z, i]$ to denote whether $a_i$ corresponds to a nonempty heap or not. Moreover, for each predicate atom $a_i=P(Z_1,\vec{\mu}; Z_2, \vec{\nu}; \vec{\chi})$ in $\Sigma$ such that in the inductive rule of $P$, $E$ occurs in $\vec{\gamma}$, we introduce a Boolean variable $[\nu_{\idx_{(P, \vec{\gamma}, E)}}, i]$. Let $\bvars(\phi)$ denote the set of introduced Boolean variables. 
\emph{The abstraction of $\phi$} is defined as $\boolabs(\phi)::= \Pi \wedge \Delta \wedge \phi_\Sigma \wedge \phi_\ast$ over $\bvars(\phi)\cup \vars(\phi)$, where $\phi_\Sigma$ and $\phi_\ast$ are defined as follows.
\vspace{-1mm}
\begin{itemize}
%
\item $\phi_\Sigma = \bigwedge \limits_{1 \le i \le n} \boolabs(a_i)$ is an abstraction of $\Sigma$ where
\begin{itemize}
\item if $a_i = E \mapsto \rho$, then $\boolabs(a_i)=[E,i]$, 

\item if  $a_i = P(Z_1,\vec{\mu}; Z_2, \vec{\nu}; \vec{\chi})$ and in the body of the inductive rule of $P$, $E$ occurs in $\vec{\gamma}$, then
\[
\begin{array}{l c l}
\boolabs(a_i)  & = & (Z_1 = Z_2 \wedge \vec{\mu} = \vec{\nu}) \vee  \\
& & ([Z_1,i] \wedge [\nu_{\idx_{(P,\vec{\gamma}, E)}}, i] \wedge \ufld_1(P(Z_1,\vec{\mu}; Z_2, \vec{\nu}; \vec{\chi}))) \\
 & & \vee\ ([Z_1,i] \wedge [\nu_{\idx_{(P,\vec{\gamma}, E)}}, i]  \wedge \ufld_{\ge 2}(P(Z_1,\vec{\mu}; Z_2, \vec{\nu}; \vec{\chi}))),
\end{array}
\]

\item if  $a_i = P(Z_1,\vec{\mu}; Z_2, \vec{\nu}; \vec{\chi})$ and  in the body of the inductive rule of $P$, $E$ does not occur in $\vec{\gamma}$, then
\[
\begin{array}{l c l}
\boolabs(a_i)  & = & (Z_1 = Z_2 \wedge \vec{\mu} = \vec{\nu}) \vee  \\
& & ([Z_1,i] \wedge \ufld_1(P(Z_1,\vec{\mu}; Z_2, \vec{\nu}; \vec{\chi}))) \\
 & & \vee\ ([Z_1,i] \wedge \ufld_{\ge 2}(P(Z_1,\vec{\mu}; Z_2, \vec{\nu}; \vec{\chi}))),
\end{array}
\]
\end{itemize}
%
%
\item $\phi_\ast$ encodes the semantics of separating conjunction, 

\[
\phi_\ast   = \bigwedge \limits_{[Z_1,i],[Z'_1,j] \in \bvars(\phi), i \neq j}(Z_1=Z'_1 \wedge [Z_1,i]) \rightarrow \neg [Z'_1,j].
\]
\end{itemize}

%
%
%

\begin{proposition} \label{prop:eqisat}
For each $\lcslidset[P]$ formula $\phi$, $\phi$ is satisfiable iff $\boolabs(\phi)$ is satisfiable. 
\end{proposition}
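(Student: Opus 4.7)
The plan is to establish equisatisfiability in both directions, by combining a structural decomposition of the heap according to the separating conjuncts of $\Sigma$ with the least-fixed-point semantics of the predicate $P$. Throughout, the Boolean variables $[Z,i] \in \bvars(\phi)$ are to be interpreted as indicating whether the spatial atom $a_i$ (rooted at $Z$) contributes a nonempty sub-heap, and the auxiliary variables $[\nu_{\idx_{(P,\vec{\gamma},E)}}, i]$, when present, record the allocation status of the distinguished ``back-pointer'' location arising from the first unfolding. The pure part $\Pi$ and data part $\Delta$ are carried through unchanged, so the substantive work concerns $\phi_\Sigma$ and $\phi_\ast$.

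For the forward direction, assume $(s, h) \models \phi$. Decompose $h = h_1 \uplus \cdots \uplus h_n$ with $(s, h_i) \models a_i$ and extend $s$ to a valuation $s'$ of $\bvars(\phi)$ by setting $[Z, i]$ true iff $\ldom(h_i) \neq \emptyset$. For each predicate atom $a_i = P(Z_1,\vec{\mu}; Z_2, \vec{\nu}; \vec{\chi})$, examine the derivation of $(s, h_i) \models a_i$ to determine the number $k$ of inductive-rule applications, and select the disjunct of $\boolabs(a_i)$ matching $k = 0$, $k = 1$, or $k \ge 2$. In the last case, instantiate the fresh tuples $\vec{\gamma_1}, \vec{\gamma_2}$ of $\ufld_{\ge 2}(a_i)$ with the data projections of the first two intermediate source parameters along the unfolding chain; the remaining $k - 2$ applications of $\varphi_P$ then witness $\TC[\varphi_P]$ applied to $\dt(\vec{\gamma_2}), \dt(\vec{\nu})$. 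Disjointness of the $h_i$ yields $\phi_\ast$, and conditions \textbf{C1} and \textbf{C4}--\textbf{C6} ensure that the relevant variables are either fresh or uniquely indexed so that the extension $s'$ is well-defined.

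For the backward direction, assume $(s', \beta) \models \boolabs(\phi)$ for a Boolean valuation $\beta$. Build $h = h_1 \uplus \cdots \uplus h_n$ atom by atom. An active points-to atom contributes a single cell, and for a predicate atom $a_i$ the chosen disjunct prescribes the shape: the first disjunct yields the empty heap; $\ufld_1(a_i)$ yields a one-step unfolding with witnesses drawn directly from $s'$; and $\ufld_{\ge 2}(a_i)$ yields two explicit unfoldings followed by a chain whose length and data are extracted from the witness of $\TC[\varphi_P]$ in $s'$, with fresh locations allocated from $\loc$ for the intermediate nodes. The encoding $\phi_\ast$, together with the root-tracking Boolean variables, guarantees that the $h_i$ can be combined disjointly, and the fact that $\loc$ is infinite permits freshness of the intermediate locations.

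The main obstacle is the $\ufld_{\ge 2}$ case of the backward direction: from a single model of the $\qgdbs$ formula $\TC[\varphi_P]$ one must recover a \emph{concrete} finite chain of applications of $\varphi_P$ realizing those data. This rests on the \emph{exactness} of the construction of Section~\ref{sec:tc}, namely that $\TC[\varphi_P]$ captures exactly $\bigcup_{k \ge 0}$ of the $k$-fold compositions of $\varphi_P$ rather than a mere over-approximation; the case analysis there (strict versus non-strict, and subcases (I)--(IV) on $\varphi_{R,1}$) was designed precisely to ensure this. A secondary subtlety is the coordinated treatment of atoms rooted at syntactically distinct but semantically equal variables: here \textbf{C1} (the $F, \vec{\beta}$ parameters do not occur elsewhere in the body of $R_1$) together with the $\phi_\ast$ implication forces such collisions either into the empty-heap disjunct or into an inconsistent assignment, so no spurious models of $\boolabs(\phi)$ survive.
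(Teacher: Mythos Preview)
The paper does not actually supply a proof of this proposition: in Appendix~\ref{app-sec-sat} the construction of $\boolabs(\phi)$ is given, the proposition is stated, and a remark follows, but no argument is written out; the authors treat it as an adaptation of the equisatisfiability result already established in~\cite{GuCW16} for the arithmetic-data setting. So there is no paper proof to compare against directly.

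Your sketch follows the standard route for this kind of result and is essentially what one would expect the authors (and~\cite{GuCW16}) to have in mind: decompose the heap along the separating conjunction, read off the Boolean variables from non-emptiness of the sub-heaps, and in each direction match the three disjuncts of $\boolabs(a_i)$ with the unfolding depth $0$, $1$, or $\ge 2$. You correctly isolate the two real issues. First, exactness of $\TC[\varphi_P]$ is indeed required in the backward direction, and this is precisely what Section~\ref{sec:tc} delivers (it computes the transitive closure, not an over-approximation). Second, for disjointness in the backward direction, the only locations fixed by the assignment are the root $Z_1$ and, when $E\in\vec{\gamma}$, the back-pointer target $\nu_{\idx_{(P,\vec{\gamma},E)}}$; these are exactly the variables tracked in $\bvars(\phi)$ and governed by $\phi_\ast$, while all intermediate locations along the unfolding chain are existentially introduced and may be chosen fresh from the infinite $\loc$. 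Conditions \textbf{C4}--\textbf{C6} guarantee that no further locations from $\vec{\mu},\vec{\chi}$ are forced into $\ldom(h_i)$ beyond these. Your treatment of that point is terse but not wrong.
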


\begin{remark}
From the construction of $\boolabs(\phi)$, one can observe that  the static parameters of the predicate atoms are irrelevant to the satisfiability of $\phi$.
\end{remark}

\section{Details of Section~\ref{sec:tc}}\label{app:tc}

\subsection{Proof of Proposition~\ref{prop-saturate}}

\noindent {\bf Proposition~\ref{prop-saturate}}.
\emph{Let $\varphi_R(S,S') := \varphi_{R,1} \wedge \varphi_{R,2}$ be a $\dbs$ formula such that $\varphi_{R,1} := S = S' \cup T_s$ and $\varphi_{R,2}$ is satisfiable. Then $\varphi_R$ can be transformed, in polynomial time, to a saturated formula  $\saturate(\varphi_R(S,S'))$ or to a formula where the integer subformula is unsatisfiable.}

\begin{proof}
    Firstly, $\varphi_R := \varphi_{R,1} \wedge \varphi_{R,2}$ can be transformed, in polynomial time, to a saturated formula $\saturate(\varphi_R(S,S')) := \saturate(\varphi_{R,1}) \wedge \saturate(\varphi_{R,2})$, satisfying the conditions of Definition 4.
    \begin{enumerate}
    \item $\saturate(\varphi_R(S,S'))$ satisfy the condition 1.
    
     $\varphi_{R,2}$ is satisfiable. Obviously, $\varphi_{R,2}(S,S')$ can be transformed to $\norm(\varphi_{R,2}(S,S'))$. The Definition 4 (1) is ok. So, we can add $\norm(\varphi_{R,2}(S,S'))$ into $\saturate(\varphi_{R,2})$.
    
    \item $\saturate(\varphi_R(S,S'))$ satisfy the condition 2.
    
    We focus on the case $\varphi_{R,1} := S = S' \cup T_s$. The symmetrical case $\varphi_{R,1} := S' = S \cup T_s$ can be adapted easily.
    According to the $\dbs$ syntax, $S' \in T_S$ is possible. Owing to $\varphi_{R,1} := S = S' \cup T_s$ and $min(S'), max(S') \in S'$. For example, $\varphi_{R,1} := S = S' \cup \{min(S), min(S')\}$ is equivalent to $\varphi_{R,1} := S = S' \cup \{min(S)\}$.
    Hence, $\varphi_{R,1}$ can be transformed to $\saturate(\varphi_{R,1}) := S \cup T_s, \quad T_s  \in \{\emptyset, \{min(S)\}, \{max(S)\}, \{min(S), max(S)\}\}$. The Definition 4 (2) is ok.
    
    \item $\saturate(\varphi_R(S,S'))$ satisfy the condition 3.
    
    If $S$ is surely nonempty in $\varphi_{R}$, we can add $min(S) \le max(S)$ into $\saturate(\varphi_{R,2})$. Obviously, $\saturate(\varphi_{R,2})$ contains a conjuct $min(S) \le max(S) - c \text{ for some } c \ge 0$.
    
    If $S'$ is surely nonempty in $\varphi_{R}$, we can add $min(S') \le max(S')$ into $\saturate(\varphi_{R,2})$. Obviously, $\saturate(\varphi_{R,2})$ contains a conjuct $min(S') \le max(S') - c' \text{ for some } c' \ge 0$.
    Hence, the Definition 4 (3) is ok.
    \item $\saturate(\varphi_R(S,S'))$ satisfy the condition 4.
    
    If $S$ and $S'$ are surely nonempty in $\varphi_{R}$, we can add $min(S) \le min(S')$ and $max(S') \le max(S)$ into $\saturate(\varphi_{R,2})$ owing to $S' \subseteq S$.
    If $min(S) \notin T_s $, we can get $min(S)=min(S')$. Hence, we can add $min(S) \le min(S')$ and $min(S') \le min(S)$ into $\saturate(\varphi_{R,2})$. If  $\saturate(\varphi_{R,2})$ contains $min(S) \le min(S')$ and $min(S') \le min(S)$, we can get $min(S)=min(S')$. Hence, we get $min(S) \notin T_s $.
    If $max(S) \notin T_s $, we can get $max(S)=max(S')$. Hence, we can add $max(S) \le max(S')$ and $max(S') \le max(S)$ into $\saturate(\varphi_{R,2})$. If  $\saturate(\varphi_{R,2})$ contains $max(S) \le max(S')$ and $max(S') \le max(S)$, we can get $max(S)=max(S')$. Hence, we get $max(S) \notin T_s $.
    If $\saturate(\varphi_{R,2})$ contains the conjucts $min(S) \le max(S)$ and $max(S) \le min(S)$, we can get $min(S) = max(S)$. Hence, we can get $\saturate(\varphi_{R,1}) := S \cup T_s$, in which $max(S) \notin T_s$. 
    Hence, the Definition 4 (4) is ok.
    
    Finally, $\varphi_{R,2}$ is satisfiable, but $\saturate(\varphi_{R,2})$ may be unsatisfiable, so the integer subformula in $\saturate(\varphi_R(S,S'))$ may be unsatisfiable.
    \end{enumerate}
\end{proof}

\subsection{An example of Saturation}\label{app:sat}

\begin{example}\label{exmp-sat}
	Let $\varphi_R(S,S') = \varphi_{R,1} \wedge \varphi_{R,2}$, where $\varphi_{R,1} := S = S' \cup \{\min(S)\}$ and $\varphi_{R,2}:= \min(S') = \min(S)+1$. Then $\saturate(\varphi_R(S,S'))$ is constructed 
	as follows: 
	\vspace{-1mm}
	\begin{enumerate}
		\item Since both $S$ and $S'$ are surely nonempty in $\varphi_R$, according to  Definition~\ref{def-saturate}(3),  add the conjuncts $\min(S) \le \max(S)$, $\min(S') \le \max(S')$ into $\varphi_{R,2}$.
		\item Because both $S$ and $S'$ are surely nonempty in $\varphi_R$ and $\max(S) \not \in T_s$, according to Definition~\ref{def-saturate}(4), add the conjuncts $\min(S) \le \min(S')$, $\max(S') \le \max(S)$, and $\max(S) \le \max(S')$ into $\varphi_{R,2}$. Then $\varphi_{R,2}$ becomes 
		$$
		\begin{array}{l c l}
		\varphi'_{R,2} & = & \min(S') = \min(S)+1 \wedge \min(S) \le \max(S) \wedge \min(S') \le \max(S')\ \wedge \\
		& & \min(S) \le \min(S') \wedge \max(S') \le \max(S) \wedge \max(S) \le \max(S').
		\end{array}
		$$
		See Figure~\ref{fig-varphi-r-2}(a) for $\Gg(\varphi'_{R,2})$, where an edge from $\min(S')$ to $\min(S)$  with weight $+1$ and an edge from  $\min(S)$ to $\min(S')$ with weight $-1$ are from $\min(S') = \min(S)+1$. 
		\item Turn $\varphi'_{R,2}$ into the normal form (see Figure~\ref{fig-varphi-r-2}(b)).
	\end{enumerate}
	\vspace{-1mm}
	Therefore, $\saturate(\varphi_R(S,S'))$ is a conjunction of $S = S' \cup \{\min(S)\}$ and the integer subformula illustrated in Figure~\ref{fig-varphi-r-2}(b).
\end{example}
\vspace{-1mm}

\begin{figure}[htbp]
	\begin{center}
		\includegraphics[scale=0.7]{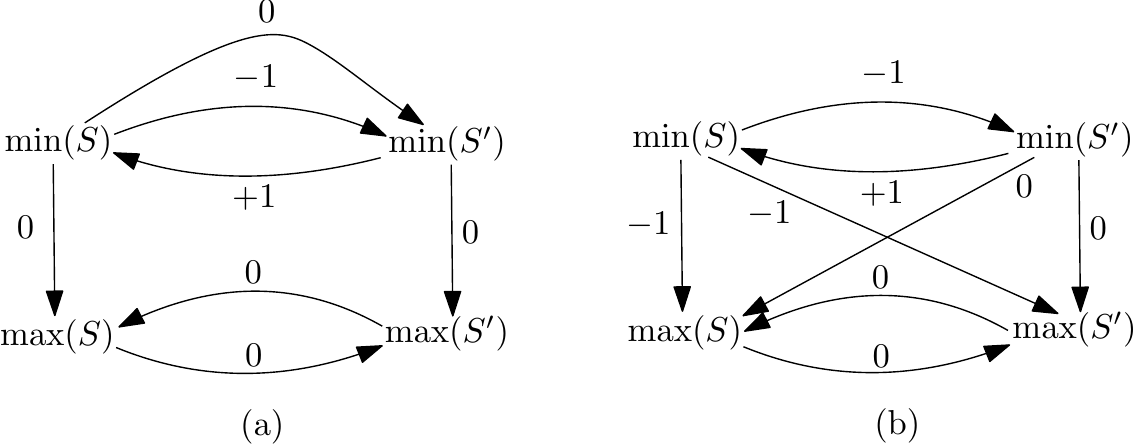}
		\vspace{-2mm}
		\caption{$\varphi'_{R,2}$ and its normal form.}
		\label{fig-varphi-r-2}
	\end{center}
\end{figure}

\subsection{The situation that there are only one source and destination set parameter}\label{app:tcsimple}


\noindent {\bf \underline{Case I: $\varphi_{R,1} := S = S'$}}. Then $\TC[\varphi_R](S, S') := S = S'$.

\medskip


\noindent {\bf \underline{Case II: $\varphi_{R,1} := S = S' \cup \{\min(S)\}$}}. This case is illustrated schematically as $\underbrace{|-\overbrace{|------|}^{S'}}_S$. 
We observe that $S$ is surely nonempty in $\varphi_R$.  We now distinguish the subcases according to whether $S'$ is possibly empty or surely nonempty in $\varphi_R$. 

\medskip

\noindent{\bf Subcase II(i): $S'$ is possibly empty in $\varphi_R$.} 
In this case, neither $\min(S')$ nor $\max(S')$ occurs in $\varphi_{R,2}$. Therefore, $\varphi_{R,2}$ is a formula involving  $\min(S)$ or $\max(S)$ only. Moreover, if $S'$ is nonempty, then $\max(S) = \max(S')$. 

Evidently, $\TC[\varphi_R](S, S')$ can be specified by (an infinite disjunction) $(S= S') \vee \bigvee \limits_{n \ge 1} \varphi^{(n)}_R$, where $\varphi^{(n)}_R$ is obtained by unfolding $\varphi_R$ for $n$ times, that is, 
\[
\exists S_1,\cdots, S_{n+1}.\ S_1 = S \wedge S_{n+1} = S' \wedge \bigwedge \limits_{i \in [n]} (S_{i} = S_{i+1} \cup \{\min(S_{i})\} \wedge \varphi_{R,2}[S_i/S]), 
\]
where $\varphi_{R,2}[S_i/S]$ is obtained from $\varphi_{R,2}$ by replacing $S$ with $S_i$.
We use the following figure to help the reader understand $\varphi^{(n)}_R$.
$$\underbrace{-\overbrace{---\overbrace{-\overbrace{------}^{S_{n+1} = S'}}^{\cdots}}^{S_2}}_{S=S_1}$$

Evidently, $\varphi^{(1)}_R = \varphi_R$, and
\[
\varphi^{(2)}_R = \exists S_{2}.\  (S = S_{2} \cup \{\min(S)\} \wedge S_2 = S' \cup \{\min(S_2)\} \wedge \varphi_{R,2} \wedge \varphi_{R,2}[S_2/S]).
\]
Let us consider $\varphi^{(n)}_R$ for $n \ge 3$ in the following.

Because $S_{i} = S_{i+1} \cup \{\min(S_{i}\}$ for each $i \in [n]$, we have $\max(S_1) = \cdots = \max(S_n)$ and $\min(S_1) \le \cdots \le \min(S_n)$. Then from the fact that $\varphi_{R,2}$ is a conjunction of difference-bound constraints involving $\min(S)$ and $\max(S)$ only, we deduce that $\bigwedge \limits_{i \in [n]} \varphi_{R,2}[S_i/S]$ is equivalent to $\varphi_{R,2}[S_1/S] \wedge \varphi_{R,2}[S_n/S]$. For instance, if $\varphi_{R,2} \equiv c \le \max(S) - \min(S) \le c'$ for some constants $c, c' \ge 0$ with $c \le c'$, then $\max(S_1) - \min(S_1) \le c'$ implies $\max(S_i) - \min(S_i) \le c'$ for each $i \in [n]$, and $c \le \max(S_n) - \min(S_n)$ implies $c \le \max(S_i) - \min(S_i)$ for each $i \in [n]$. Therefore, in this situation,  $\varphi_{R,2}[S_1/S] \wedge \varphi_{R,2}[S_n/S] \equiv c \le \max(S_1) - \min(S_1) \le c' \wedge c \le \max(S_n) - \min(S_n) \le c'$ implies  $\bigwedge \limits_{i \in [n]} \varphi_{R,2}[S_i/S]$, thus they are equivalent. We then have that $\bigwedge \limits_{i \in [n]} (S_{i} = S_{i+1} \cup \{\min(S_{i})\} \wedge \varphi_{R,2}[S_i/S])$ is equivalent to 
$$\varphi_{R,2}[S_1/S] \wedge \varphi_{R,2}[S_n/S] \wedge \bigwedge \limits_{i \in [n]} S_{i} = S_{i+1} \cup \{\min(S_{i})\}.$$

Thus $\varphi^{(n)}_R$ (where $n \ge 3$) can be rewritten into 
$$
\exists S_n.\left(
\begin{array}{l}
S_n = S' \cup \{\min(S_n)\} \wedge \varphi_{R,2} \wedge \varphi_{R,2}[S_n/S]\ \wedge\\
\exists S_2, \cdots, S_{n-1}.\left(S = S_{2} \cup \{\min(S)\}  \wedge \bigwedge \limits_{2 \le i \le n-1} S_{i} = S_{i+1} \cup \{\min(S_{i})\}\right)
\end{array}
\right).
$$

\medskip

\noindent {\bf Claim}. The formula $\theta(S, S_n):=$
$$\exists S_2, \cdots, S_{n-1}.\left(S = S_{2} \cup \{\min(S)\}  \wedge \bigwedge \limits_{2 \le i \le n-1} S_{i} = S_{i+1} \cup \{\min(S_{i})\}\right)$$ 
is equivalent to $\theta'(S, S_n):=$
$$
S \neq \emptyset \wedge S_n \subseteq S \wedge |S \setminus S_n|  \le n-1 \wedge  
\left(
\begin{array}{l}
(S \setminus S_n \neq \emptyset \wedge S_n \neq \emptyset) \rightarrow \\
\hspace{2cm} \max(S \setminus S_n) < \min(S_n)
\end{array}
\right).
$$ 

\smallskip

\begin{proof}[of the claim]
%

\begin{itemize}
	\item $\theta(S, S_n)$ implies $\theta'(S, S_n)$. 
Suppose that $A, A_n$ are two finite subsets of $\intnum$ such that $\theta(A, A_n)$ holds.
Then there are finite subsets  $A_2, \cdots, A_{n-1}  \subseteq \intnum$ such that $A = A_{2} \cup \{\min(A)\}  \wedge \bigwedge \limits_{2 \le i \le n-1} A_{i} = A_{i+1} \cup \{\min(A_{i})\}$ holds.
Therefore, 
$$A = A_n \cup \{\min(A), \min(A_2), \cdots, \min(A_{n-1})\}.$$ 
From this, we deduce that $A \neq \emptyset$ and $A_n \subseteq A$. Moreover, from 
$$A \setminus A_n \subseteq    \{\min(A), \min(A_2), \cdots, \min(A_{n-1})\},$$ 
we have $|A \setminus A_n| \le n-1$. 

Now suppose $A_n \neq \emptyset$ and $A \setminus A_n \neq \emptyset$. Then 
$$\max(A \setminus A_n) \in \{ \min(A), \min(A_2), \cdots, \min(A_{n-1})\}.$$ 
From $A_n \subseteq A$ and $A_n \subseteq A_i$ for each $2 \le i \le n-1$, we have  $\min(A_n) \ge \min(A)$ and $\min(A_n) \ge \min(A_i)$ for each $2 \le i \le n-1$. In other words, $\min(A_n)$ is an upper bound of $\{\min(A), \min(A_2), \cdots, \min(A_{n-1})\}$. Consequently, $\max(A \setminus A_n) \le \min(A_n)$. Since $\max(A \setminus A_n) \neq \min(A_n)$, we have  $\max(A \setminus A_n) < \min(A_n)$. We conclude that $\theta'(A, A_n)$ holds.

\item $\theta'(S, S_n)$ implies $\theta(S, S_n)$. Suppose that $A, A_n$ are two finite subsets of $\intnum$ such that $\theta'(A, A_n)$ holds. Then $A \neq \emptyset$, $A_n \subseteq A$, $|A \setminus A_n| \le n-1$. Moreover, if $A \setminus A_n \neq \emptyset$ and $A_n \neq \emptyset$, then $\max(A \setminus A_n) < \min(A_n)$.

If $A \setminus A_n = \emptyset$, then define $A_2, \cdots, A_{n-1}$ as $A_n$. Since $A = A_2 = \cdots A_{n-1} = A_n$, we deduce that $A= A_2 \cup \{\min(A)\}$, and $A_i = A_{i+1} \cup \{\min(A_i)\}$ for each $i: 2 \le i \le n-1$. Therefore $\theta(A, A_n)$ holds.

We now assume $A \setminus A_n \neq \emptyset$.
From $|A \setminus A_n| \le n-1$, we know that there are $r \in [n-1]$ and $i_1,\cdots, i_r \in \intnum$ such that $i_1 < \cdots < i_r$ and $A \setminus A_n =  \{i_1, \cdots, i_r\}$. Moreover, if $A_n \neq \emptyset$, then $i_r = \max(A \setminus A_n) < \min(A_n)$. We then define $A_2, \cdots, A_{n-1}$ as follows: 
\begin{itemize}
\item For each $j \in [r-1]$, define $A_{j+1}$ as $A_n \cup \{i_{j+1},\cdots, i_r\}$. 
\item For each $j: r+1 \le j \le n-1$, define $A_j$ as $A_{r}$.
\end{itemize}
From $A \setminus A_n = \{i_1,\cdots, i_r\}$ and $i_r = \max(A \setminus A_n) < \min(A_n)$ if $A_n \neq \emptyset$, we deduce that $\min(A) = i_1$. Moreover, for each $j \in [r-1]$, $\min(A_{j+1})= i_{j+1}$, and for each $j: r+1 \le j \le n-1$, $\min(A_j) = \min(A_r) = i_r$.
Therefore, 
\begin{itemize}
\item $A = A_n \cup \{i_1, \cdots, i_r\} = (A_n \cup \{i_2,\cdots, i_r\}) \cup \{i_1\} = A_2 \cup \{i_1\} = A_2 \cup \{\min(A)\}$, 
\item for each $j \in [r-1]$, $A_{j+1} = A_n \cup \{i_{j+1},\cdots, i_r\} = (A_n \cup \{i_{j+2}, \cdots, i_r\}) \cup \{i_{j+1}\} = A_{j+2} \cup \{i_{j+1}\} = A_{j+2} \cup \{\min(A_{j+1})\}$,
\item for each $j: r+1 \le j \le n-1$, $A_j = A_r = A_{j+1} \cup \{\min(A_j)\}$.
\end{itemize}
We conclude that $\theta(A, A_n)$ holds.
\end{itemize}
\qed
\end{proof}

According to the claim, $\varphi^{(n)}_R$ for $n \ge 3$ can be simplified into 
$$
\exists S''.\left(
\begin{array}{l}
S'' = S' \cup \{\min(S'')\} \wedge \varphi_{R,2} \wedge (\varphi_{R,2}[S''/S]) \wedge S \neq \emptyset \wedge S'' \subseteq S\ \wedge \\
|S \setminus S''|  \le n-1 \wedge  ((S \setminus S'' \neq \emptyset \wedge S'' \neq \emptyset) \rightarrow \max(S \setminus S'') < \min(S''))
\end{array}
\right). \hfill (*)
$$
Moreover, it is not hard to observe that the formula $(*)$ above is equivalent to $\varphi^{(n)}_R$ even for $n=1,2$. 
Since $\TC[\varphi_R](S, S')$ is equal to $(S= S') \vee \bigvee \limits_{n \ge 1} \varphi^{(n)}_R$, we conclude that
$$
\begin{array}{l}
\TC[\varphi_R](S, S') = (S = S')\ \vee  \\
\ \exists S''. \left(
\begin{array}{l}
S'' = S' \cup \{\min(S'')\} \wedge \varphi_{R,2} \wedge (\varphi_{R,2}[S''/S])\ \wedge\\
S \neq \emptyset \wedge S'' \subseteq S \wedge ((S\setminus S'' \neq \emptyset \wedge S'' \neq \emptyset) \rightarrow \max(S \setminus S'') < \min(S''))
\end{array}
\right).
\end{array}
$$ 


\medskip

\noindent{\bf Subcase II(ii): $S'$ is surely nonempty in $\varphi_R$.} We distinguish between whether $\lfloor \varphi_{R,2} \rfloor_{\min(S),\min(S')}$ is strict or not.

\paragraph{Case that $\lfloor \varphi_{R,2} \rfloor_{\min(S),\min(S')}$ is strict.}
The arguments for Subcase II(ii) in this situation have already been  presented in the main text, but with the proof of the claim missing. In the following, we will  present this proof.

\medskip

\noindent {\bf Claim}.
{\it
Suppose $n \ge 3$ and $\lfloor \varphi_{R,2} \rfloor_{\min(S),\min(S')}$ is strict. Then  

{\small
$\exists S_3,\cdots, S_{n-1}.
\bigwedge \limits_{2 \le i \le n-1} 
\left(
\begin{array}{l}
S_i = S_{i+1} \cup \{\min(S_i)\} \wedge \max(S_i) = \max(S_{i+1})\ \wedge \\
(\lfloor \varphi_{R,2} \rfloor_{\min(S),\min(S')}[S_i/S,S_{i+1}/S'])
\end{array}
\right)
$		
}

\noindent is equivalent to 

\smallskip
\noindent $
\small
\begin{array}{l}
S_n \neq \emptyset \wedge S_2 \setminus S_n \neq \emptyset  \wedge S_n \subseteq S_2 \wedge |S_2 \setminus S_n| = n-2 \wedge \max(S_2\setminus S_n) < \min(S_n)\ \wedge \\
\forall y, z.\ \suc((S_2 \setminus S_n)  \cup \{\min(S_n)\}, y, z) \rightarrow (\lfloor \varphi_{R,2} \rfloor_{\min(S), \min(S')}[y/\min(S), z/\min(S')]).
\end{array}
$
}

\begin{proof}[of the claim]
Let $\theta(S_2, S_n)$ and $\theta'(S_2, S_n)$ denote the two formulae in the claim. Our goal is to show the equivalence of $\theta(S_2, S_n)$ and $\theta'(S_2, S_n)$.

From the fact that $\lfloor \varphi_{R,2} \rfloor_{\min(S),\min(S')}$ is strict, we know that 
$$\lfloor \varphi_{R,2} \rfloor_{\min(S),\min(S')}  \equiv c  \le \min(S') - \min(S) \le c'$$ 
for some $c, c': 0 < c \le c'$.

\begin{itemize}
	\item   $\theta(S_2, S_n)$ implies $\theta'(S_2, S_n)$. Suppose  $A_2, A_n$ are finite subsets of $\intnum$ such that $\theta(A_2, A_n)$ holds. Then there are \emph{nonempty} finite subsets $A_3, \cdots, A_{n-1} \subseteq \intnum$ such that $A_i = A_{i+1} \cup \{\min(A_i)\}$, $\max(A_i) = \max(A_{i+1})$, and for each $i: 2 \le i \le n-1$, $(\lfloor \varphi_{R,2} \rfloor_{\min(S),\min(S')}[S_i/S,S_{i+1}/S'])(A_i, A_{i+1})$. 

From the fact that $(\lfloor \varphi_{R,2} \rfloor_{\min(S),\min(S')}[S_i/S,S_{i+1}/S'])(A_{n-1}, A_{n})$ holds, we know that $0 < c \le \min(A_{n}) - \min(A_{n-1}) \le c'$. Therefore, $A_n \neq \emptyset$. Moreover, it is easy to observe that $A_n \subseteq A_2$. 

From $A_i = A_{i+1} \cup \{\min(A_i)\}$ for each $i: 2 \le i \le n-1$, we deduce that $\min(A_2) \le \cdots \le \min(A_n)$. In addition, from the fact that $\lfloor \varphi_{R,2} \rfloor_{\min(S),\min(S')}$ is strict and $(\lfloor \varphi_{R,2} \rfloor_{\min(S),\min(S')}[S_i/S,S_{i+1}/S'])(A_i, A_{i+1})$ holds for each $i: 2 \le i \le n-1$, we have $\min(A_2) < \cdots < \min(A_n)$. We then deduce that $A_2 \setminus A_n   \neq \emptyset$ and $|A_2 \setminus A_n|= |\{\min(A_2), \cdots, \min(A_{n-1})\} | = n-2$.

Because  $\max(A_2 \setminus A_n) \in \{\min(A_2), \cdots, \min(A_{n-1})\}$ and $\min(A_n)$ is an upper bound of $\{\min(A_2), \cdots, \min(A_{n-1})\}$, we have $\max(A_2 \setminus A_n) < \min(A_n)$. 

Finally, from the fact that $(\lfloor \varphi_{R,2} \rfloor_{\min(S),\min(S')}[S_i/S,S_{i+1}/S'])(A_i, A_{i+1})$ holds for each $i: 2 \le i \le n-1$, and $A_2 \setminus A_n = \{\min(A_2), \cdots, \min(A_{n-1})\}$, we deduce that for each pair of distinct numbers $i_1, i_2 \in A_2 \setminus A_n$ such that $i_1 < i_2$ and $A_2 \setminus A_n$ contains no other numbers (strictly) between $i_1 $ and $i_2$, 
$$(\lfloor \varphi_{R,2} \rfloor_{\min(S), \min(S')}[y/\min(S), z/\min(S')])(i_1, i_2)$$ holds.
Therefore, the pair $(A_2, A_n)$ satisfies the formula
{
\small
$$\forall y, z.\ \suc((S_2 \setminus S_n)  \cup \{\min(S_n)\}, y, z) \rightarrow (\lfloor \varphi_{R,2} \rfloor_{\min(S), \min(S')}[y/\min(S), z/\min(S')]).$$
}
We conclude that $\theta'(A_2, A_n)$ holds.

\item $\theta'(S_2, S_n)$ implies $\theta(S_2, S_n)$. Suppose that $A_2, A_n$ are finite subsets of $\intnum$ such that $\theta'(A_2, A_n)$ holds.  Then $A_n \neq \emptyset$, $A_2 \setminus A_n \neq \emptyset$, $A_n \subseteq A_2$, $|A_2 \setminus A_n| = n-2$, and $\max(A_2\setminus A_n) < \min(A_n)$. 

Suppose $A_2 \setminus A_n = \{m_1, \cdots, m_{n-2}\}$  with $m_1 < \cdots < m_{n-2}$. Moreover, for convenience, we use $m_{n-1}$ to denote $\min(A_n)$.
Then for each $i \in [n-2]$, we have 
$$(\lfloor \varphi_{R,2} \rfloor_{\min(S), \min(S')}[y/\min(S), z/\min(S')])(m_i, m_{i+1})$$
holds, that is, $c  \le m_{i+1} - m_i \le c'$.

Define $A_3, \cdots, A_{n-1}$ as follows: For each $i: 3 \le i \le n-1$, define $A_{i}$ as $A_n \cup \{m_{i-1}, \cdots, m_{n-2}\}$.
Then for each $i: 2 \le i \le n-1$, $\min(A_{i}) = m_{i-1}$. Therefore, for each $i: 2 \le i \le n-1$, $A_i = A_{i+1} \cup \{\min(A_i)\}$, $\max(A_i) = \max(A_{i+1})$,  and 
$$
\begin{array} {l c l}
(\lfloor \varphi_{R,2} \rfloor_{\min(S),\min(S')}[S_i/S,S_{i+1}/S'])(A_i, A_{i+1}) & \equiv  & c  \le \min(A_{i+1}) - \min(A_i) \le c'\\
& \equiv & c  \le m_{i+1} - m_i \le c'
\end{array}
$$ 
holds. 
We conclude that $\theta(A_2, A_n)$ holds.
\end{itemize}
\qed
\end{proof}


\paragraph{Case that $\lfloor \varphi_{R,2} \rfloor_{\min(S),\min(S')}$ is non-strict.} 

The arguments are similar to the situation that $\lfloor \varphi_{R,2} \rfloor_{\min(S),\min(S')}$ is strict, but with the following adaptation: The claim is adapted into the following one and the construction of $\TC[\varphi_R](S, S')$ is adapted accordingly. 

\medskip

\noindent {\bf Claim'}.
{\it
Suppose $n \ge 3$ and $\lfloor \varphi_{R,2} \rfloor_{\min(S),\min(S')}$ is non-strict. Then  

{\small
$\exists S_3,\cdots, S_{n-1}.
\bigwedge \limits_{2 \le i \le n-1} 
\left(
\begin{array}{l}
S_i = S_{i+1} \cup \{\min(S_i)\} \wedge \max(S_i) = \max(S_{i+1})\ \wedge \\
(\lfloor \varphi_{R,2} \rfloor_{\min(S),\min(S')}[S_i/S,S_{i+1}/S'])
\end{array}
\right)
$		
}

\noindent is equivalent to 

\smallskip
\noindent$
\small
\begin{array}{l}
S_n \neq \emptyset \wedge S_n \subseteq S_2 \wedge |S_2 \setminus S_n| \le n-2 \wedge (S_2 \setminus S_n \neq \emptyset \rightarrow \max(S_2\setminus S_n) < \min(S_n))\ \wedge \\
\forall y, z.\ \suc((S_2 \setminus S_n)  \cup \{\min(S_n)\}, y, z) \rightarrow (\lfloor \varphi_{R,2} \rfloor_{\min(S), \min(S')}[y/\min(S), z/\min(S')]).
\end{array}
$
\medskip
}

\begin{example} \label{exmp-subcase-II-ii} 
	Let $\varphi_{R}(S, S') \equiv S = S' \cup \{\min(S)\} \wedge \max(S) = \max(S') \wedge \min(S') = \min(S)+1 \wedge \min(S) \le \max(S) - 1 \wedge \min(S') \le \max(S')$. This falls into Subcase II(ii). One can   obtain that $\TC[\varphi_R](S,S')=$ 
	
	\smallskip
	\noindent {\small
		$
		\begin{array}{l}
		(S = S') \vee \varphi_R(S,S') \vee \varphi^{(2)}_R(S,S') \ \vee \\ 
		\exists S_1,S_2. \left(
		\begin{array}{l}
		S = S_1 \cup \{\min(S)\} \wedge S_2 = S' \cup \{\min(S_2)\}  \ \wedge \\
		\max(S)  = \max(S_1) \wedge \max(S_2) = \max(S')\ \wedge\\
		S_2 \neq \emptyset \wedge S_1 \setminus S_2 \neq \emptyset \wedge S_2 \subseteq S_1 \wedge \max(S_1 \setminus S_2) < \min(S_2)\ \wedge\\
		\min(S) \le \max(S) - 1 \wedge \min(S_2) \le \max(S_2) - 1  \wedge \min(S_1) \le \max(S_1)\ \wedge\\
		\min(S') \le \max(S') \wedge \min(S) +1 = \min(S_1) \wedge \min(S_2) +1 = \min(S')\ \wedge \\
		\forall y, z.\ \suc((S_1\setminus S_2) \cup \{\min(S_2)\}, y, z) \rightarrow y+ 1 = z
		\end{array}
		\right),
		\end{array}
		$
	}
	\smallskip
	
	\noindent which can be simplified into 
	
	\smallskip
	{\small
		\hspace{5mm}$
		\begin{array}{l}
		(S = S')\vee \varphi_R(S,S') \vee \varphi^{(2)}_R(S,S') \ \vee \\ 
		\exists S_1,S_2. \left(
		\begin{array}{l}
		S = S_1 \cup \{\min(S)\} \wedge S_2 = S' \cup \{\min(S_2)\}\ \wedge\\
		S_2 \neq \emptyset \wedge S_1 \setminus S_2 \neq \emptyset \wedge S_2 \subseteq S_1 \wedge \max(S_1 \setminus S_2) < \min(S_2)\ \wedge\\
		\min(S) +1 = \min(S_1) \wedge \min(S_2) +1 = \min(S')\ \wedge \\
		\forall y, z.\ \suc((S_1\setminus S_2) \cup \{\min(S_2)\}, y, z) \rightarrow y+ 1 = z
		\end{array}
		\right).
		\end{array}
		$
	}
\end{example}

\hide{
As both $S$ and $S'$ are surely nonempty in $\varphi_R$, by Definition~\ref{def-saturate}(4),(5), 
$\varphi_{R,2}$ contains a conjunct $\min(S) \le \min(S') - c$  for some $c \ge 0$, as well as $\max(S') \le \max(S)$ and $\max(S) \le \max(S')$ (i.e., $\max(S')=\max(S)$).
Therefore, we can assume that 
\[
\begin{array}{l c l}
\varphi_{R,2} & = & \max(S') \le \max(S) \wedge \max(S) \le \max(S') \wedge \lfloor \varphi_{R,2} \rfloor_{\min(S),\min(S')}\ \wedge \\
& &  
 \lfloor \varphi_{R,2} \rfloor_{\min(S),\max(S)} \wedge \lfloor \varphi_{R,2} \rfloor_{\min(S'),\max(S')}.
\end{array}
\]
Note that in $\varphi_{R,2} $ above, the redundant subformulae 
$\lfloor \varphi_{R,2} \rfloor_{\min(S),\max(S')}$  and $\lfloor \varphi_{R,2} \rfloor_{\min(S'),\max(S)}$ are omitted.

Similarly to Subcase II(i), $\TC[\varphi_R](S, S')$ is specified by $(S= S') \vee \bigvee \limits_{n \ge 1} \varphi^{(n)}_R$, and we will simplify $\varphi^{(n)}_R$  to construct a finite formula for $\TC[\varphi_R](S, S')$. 
Similarly to Subcase II(i), let us focus on $\varphi^{(n)}_R$ for $n \ge 3$.
It is easy to see that 
\[
\varphi^{(n)}_R = \exists S_1,\cdots, S_{n+1}. \left(
\begin{array}{l}
S_1 = S \wedge S_{n+1} = S'\ \wedge \\
\bigwedge \limits_{i \in [n]} (S_i = S_{i+1} \cup \{\min(S_i)\} \wedge \varphi_{R,2}[S_i/S, S_{i+1}/S'])
\end{array}
\right).
\]

The subformula $\bigwedge \limits_{i \in [n]} (S_i = S_{i+1} \cup \{\min(S_i)\} \wedge \varphi_{R,2}[S_i/S, S_{i+1}/S'])$ can be rewritten as 
\[
\bigwedge \limits_{i \in [n]} 
\left(
\begin{array}{l}
S_i = S_{i+1} \cup \{\min(S_i)\} \wedge \max(S_i) = \max(S_{i+1})\ \wedge \\
(\lfloor \varphi_{R,2} \rfloor_{\min(S),\min(S')}[S_i/S,S_{i+1}/S'])\ \wedge
(\lfloor \varphi_{R,2} \rfloor_{\min(S),\max(S)}[S_i/S]) \wedge \\ (\lfloor \varphi_{R,2} \rfloor_{\min(S'),\max(S')}[S_{i+1}/S'])
\end{array}
\right).
\]
Since $\max(S_1) = \cdots = \max(S_{n+1})$ and $\min(S_1) \le \cdots \le \min(S_{n+1})$, we can simplify the above formula into 
\[
\begin{array}{l}
\lfloor \varphi_{R,2} \rfloor_{\min(S),\max(S)} \wedge (\lfloor \varphi_{R,2} \rfloor_{\min(S),\max(S)}[S_n/S])\ \wedge \\
(\lfloor \varphi_{R,2} \rfloor_{\min(S'),\max(S')}[S_{2}/S']) \wedge \lfloor \varphi_{R,2} \rfloor_{\min(S'),\max(S')}\ \wedge \\
\bigwedge \limits_{i \in [n]} 
\left(
\begin{array}{l}
S_i = S_{i+1} \cup \{\min(S_i)\} \wedge \max(S_i) = \max(S_{i+1})\ \wedge \\
(\lfloor \varphi_{R,2} \rfloor_{\min(S),\min(S')}[S_i/S,S_{i+1}/S'])
\end{array}
\right).
\end{array}
\]
Therefore, $\varphi^{(n)}_R$ can be transformed into
\[
\small
\exists S_2, S_n.
\left(\begin{array}{l}
\lfloor \varphi_{R,2} \rfloor_{\min(S),\max(S)} \wedge (\lfloor \varphi_{R,2} \rfloor_{\min(S),\max(S)}[S_n/S])\ \wedge \\
(\lfloor \varphi_{R,2} \rfloor_{\min(S'),\max(S')}[S_{2}/S']) \wedge \lfloor \varphi_{R,2} \rfloor_{\min(S'),\max(S')} \wedge S = S_2 \cup \{\min(S)\}\  \wedge \\
S_n = S' \cup \{\min(S_n)\} \wedge \max(S) = \max(S_2)\wedge \max(S_n) = \max(S')\ \wedge \\
(\lfloor \varphi_{R,2} \rfloor_{\min(S),\min(S')}[S_2/S']) \wedge (\lfloor \varphi_{R,2} \rfloor_{\min(S),\min(S')}[S_n/S]) \ \wedge \\
\exists S_3,\cdots, S_{n-1}.
\bigwedge \limits_{2 \le i \le n-1} 
\left(
\begin{array}{l}
S_i = S_{i+1} \cup \{\min(S_i)\} \wedge \max(S_i) = \max(S_{i+1})\ \wedge \\
(\lfloor \varphi_{R,2} \rfloor_{\min(S),\min(S')}[S_i/S,S_{i+1}/S'])
\end{array}
\right)
\end{array}
\right).
\]
Since $n\geq 3$, we observe that 
$$\exists S_3,\cdots, S_{n-1}.
\bigwedge \limits_{2 \le i \le n-1} 
\left(
\begin{array}{l}
S_i = S_{i+1} \cup \{\min(S_i)\} \wedge \max(S_i) = \max(S_{i+1})\ \wedge \\
(\lfloor \varphi_{R,2} \rfloor_{\min(S),\min(S')}[S_i/S,S_{i+1}/S'])
\end{array}
\right)
$$		
is equivalent to 
\[
\small
\begin{array}{l}
S_n \subseteq S_2 \wedge |S_2 \setminus S_n| \le n-2 \wedge (S_2 \setminus S_n \neq \emptyset \rightarrow \max(S_2\setminus S_n) < \min(S_n))\ \wedge \\
\forall y, z.\ \suc((S_2 \setminus S_n)  \cup \{\min(S_n)\}, y, z) \rightarrow (\lfloor \varphi_{R,2} \rfloor_{\min(S), \min(S')}[y/\min(S), z/\min(S')]),
\end{array}
\]
where $\suc(S, x, y)$ specifies that $y$ is the successor of $x$ in $S$, that is, 
\[\suc(S, x, y)= x \in S \wedge y \in S \wedge x < y \wedge \forall z  \in S.\ (z \le x \vee y \le z).\]
Moreover, we also observe that when $n=2$,  these two formulae become $\ltrue$, so their equivalence holds trivially. 



Therefore, we deduce that 
$\TC[\varphi_R](S, S')=$ 
\[
\small
\begin{array}{l}
(S = S') \vee  \varphi_R(S,S') \ \vee \\
\exists S_1, S_2.
\left(\begin{array}{l}
S = S_1 \cup \{\min(S)\} \wedge S_2 = S' \cup \{\min(S_2)\}\ \wedge \\
\max(S)  = \max(S_1) \wedge \max(S_2) = \max(S')\ \wedge\\
 S_2 \subseteq S_1 \wedge (S_1 \setminus S_2 \neq \emptyset \rightarrow \max(S_1 \setminus S_2) < \min(S_2))\ \wedge \\
\lfloor \varphi_{R,2} \rfloor_{\min(S),\max(S)} \wedge (\lfloor \varphi_{R,2} \rfloor_{\min(S),\max(S)}[S_2/S])\ \wedge \\
(\lfloor \varphi_{R,2} \rfloor_{\min(S'),\max(S')}[S_1/S']) \wedge \lfloor \varphi_{R,2} \rfloor_{\min(S'),\max(S')} \  \wedge \\
(\lfloor \varphi_{R,2} \rfloor_{\min(S),\min(S')}[S_1/S']) \wedge (\lfloor \varphi_{R,2} \rfloor_{\min(S),\min(S')}[S_2/S]) \ \wedge \\
\forall y, z.
\left(
\begin{array}{l c l}
\suc((S_1 \setminus S_2)  \cup \{\min(S_2)\}, y, z) & \rightarrow & \\
& & \hspace{-3cm} (\lfloor \varphi_{R,2} \rfloor_{\min(S), \min(S')}[y/\min(S), z/\min(S')])
\end{array}
\right)
\end{array}
\right).
\end{array}
\]


\begin{example}\label{exmp-subcase-II-ii} Let $\varphi_{R}(S, S')$ be the formula illustrated in Figure~\ref{fig-varphi-r-2}(b). This falls into Subcase II(ii). One can   obtain that $\TC[\varphi_R](S,S')=$ 
	\[
	\small
	\begin{array}{l}
	(S = S') \vee \varphi_R(S,S') \ \vee \\ 
	\exists S_1,S_2. \left(
	\begin{array}{l}
	S = S_1 \cup \{\min(S)\} \wedge S_2 = S' \cup \{\min(S_2)\}  \ \wedge \\
	\max(S)  = \max(S_1) \wedge \max(S_2) = \max(S')\ \wedge\\
	S_2 \subseteq S_1 \wedge (S_1 \setminus S_2 \neq \emptyset \rightarrow \max(S_1 \setminus S_2) < \min(S_2))\ \wedge\\
	 \min(S) \le \max(S) - 1 \wedge \min(S_2) \le \max(S_2) - 1  \wedge \min(S_1) \le \max(S_1)\ \wedge\\
	 \min(S') \le \max(S') \wedge \min(S) +1 = \min(S_1) \wedge \min(S_2) +1 = \min(S')\ \wedge \\
	\forall y, z.\ \suc((S_1\setminus S_2) \cup \{\min(S_2)\}, y, z) \rightarrow y+ 1 = z
	\end{array}
	\right),
	\end{array}
	\]
	which can be simplified into 
	\[
	\small
	\begin{array}{l}
	(S = S')\vee \varphi_R(S,S') \ \vee \\ 
	\exists S_1,S_2. \left(
	\begin{array}{l}
	S = S_1 \cup \{\min(S)\} \wedge S_2 = S' \cup \{\min(S_2)\}\ \wedge\\
	 S_2 \subseteq S_1 \wedge (S_1 \setminus S_2 \neq \emptyset \rightarrow \max(S_1 \setminus S_2) < \min(S_2))\ \wedge\\
	\min(S) +1 = \min(S_1) \wedge \min(S_2) +1 = \min(S')\ \wedge \\
	\forall y, z.\ \suc((S_1\setminus S_2) \cup \{\min(S_2)\}, y, z) \rightarrow y+ 1 = z
	\end{array}
	\right).
	\end{array}
	\]
	%
\end{example}
}

\medskip

\noindent {\bf \underline{Case III: $\varphi_{R,1} = S = S' \cup \{\max(S)\}$}}.

\medskip

This is similar to (actually symmetrical to) Case II. 

\medskip

\noindent {\bf \underline{Case IV: $\varphi_{R,1} = S = S' \cup \{\min(S), \max(S)\}$}}.

\medskip
 
We still distinguish between whether $S'$ is possibly empty in $\varphi_R$ or not.

\medskip
\noindent{\bf Subcase IV(i): $S'$ is possibly empty in $\varphi_R$.}

\medskip
 
In this case, neither $\min(S')$ nor $\max(S')$ occurs in $\varphi_{R,2}$. Therefore, $\varphi_{R,2}$ is a formula involving  $\min(S)$ or $\max(S)$ only. As before, we analyse the structure of $\varphi^{(n)}_R$ and construct $\TC[\varphi_R](S,S')$. 
$$
\begin{array}{l}
\TC[\varphi_R](S, S')= (S = S') \vee  \\
\exists S_1,S_2, S_3. \left(
\begin{array}{l}
S_2 = S' \cup \{\min(S_2), \max(S_2)\}\ \wedge\\
 S = S_1 \cup S_2 \cup S_3\ \wedge  \varphi_{R,2} \wedge (\varphi_{R,2}[S_2/S])\ \wedge \\
(S_1 \neq \emptyset \rightarrow \max(S_1) < \min(S_2))\ \wedge \\
(S_3 \neq \emptyset \rightarrow \max(S_2) < \min(S_3))
\end{array}
\right).
\end{array}
$$ 

\begin{example}
	Let $\varphi_R(S,S')$ be the normal form of $S = S' \cup \{\min(S), \max(S)\} \wedge \min(S) \le \max(S) - 10 \wedge  \max(S) \le \min(S) + 100$. Then 
	$\TC[\varphi_R](S, S')=$ 
	$$
	\begin{array}{l}
	(S = S')\ \vee  \\
	\exists S_1,S_2, S_3. \left(
	\begin{array}{l}
	S_2 = S' \cup \{\min(S_2), \max(S_2)\}\ \wedge S = S_1 \cup S_2 \cup S_3\  \wedge\\ 
	\min(S) + 10 \le \max(S)  \le \min(S) + 100\ \wedge \\
	\min(S_2) + 10  \le \max(S_2) \le \min(S_2) + 100\ \wedge \\
	(S_1 \neq \emptyset \rightarrow \max(S_1) < \min(S_2))\ \wedge \\
	(S_3 \neq \emptyset \rightarrow \max(S_2) < \min(S_3))
	\end{array}
	\right).
	\end{array}
	$$ 
\end{example}
 
\medskip

\noindent{\bf Subcase IV(ii): $S'$ is surely nonempty in $\varphi_R$.}
Similar to Subcase II(ii), we distinguish between whether $\lfloor \varphi_R \rfloor_{\min(S), \min(S')}$ and $\lfloor \varphi_R \rfloor_{\max(S), \max(S')}$ are strict or not. We exemplify the arguments by considering the situation that both $\lfloor \varphi_R \rfloor_{\min(S), \min(S')}$ and $\lfloor \varphi_R \rfloor_{\max(S), \max(S')}$ are strict. The arguments for the other situations are similar.

Suppose that both $\lfloor \varphi_R \rfloor_{\min(S), \min(S')}$ and $\lfloor \varphi_R \rfloor_{\max(S), \max(S')}$ are strict.
As before, we analyse the structure of $\varphi^{(n)}_R$ and construct $\TC[\varphi_R](S,S')$. 
%
$$
\small
\begin{array}{l}
\TC[\varphi_R](S,S') = (S = S') \vee \varphi_R(S,S') \vee  \varphi^{(2)}_R(S,S') \ \vee \\
\exists S_1, S_2, S_3, S_4. \left(
\begin{array}{l}
S = S_1 \cup \{\min(S), \max(S)\}  \wedge S_1 = S_3 \cup S_2 \cup S_4 \ \wedge \\
S_2 = S' \cup \{\min(S_2), \max(S_2)\} \wedge S_3 \neq \emptyset \wedge  S_4 \neq \emptyset \ \wedge \\
\max(S_3) < \min(S_2) \wedge \max(S_2) < \min(S_4)\ \wedge\\
\lfloor \varphi_{R,2}\rfloor_{\min(S), \max(S)} \wedge  \lfloor \varphi_{R,2}\rfloor_{\min(S), \max(S)}[S_2/S]\ \wedge \\ 
\lfloor \varphi_{R,2}\rfloor_{\min(S'), \max(S')}[S_1/S'] \wedge \lfloor \varphi_{R,2}\rfloor_{\min(S'), \max(S')} \ \wedge \\
\lfloor \varphi_{R,2}\rfloor_{\min(S), \max(S')}[S_1/S'] \wedge \lfloor \varphi_{R,2}\rfloor_{\min(S), \max(S')}[S_2/S]  \ \wedge \\
\lfloor \varphi_{R,2}\rfloor_{\min(S'), \max(S)}[S_1/S']  \wedge \lfloor \varphi_{R,2}\rfloor_{\min(S'), \max(S)}[S_2/S] \ \wedge \\
\lfloor \varphi_{R,2}\rfloor_{\min(S), \min(S')}[S_1/S']  \wedge \lfloor \varphi_{R,2}\rfloor_{\min(S), \min(S')}[S_2/S] \ \wedge \\
\lfloor \varphi_{R,2}\rfloor_{\max(S), \max(S')}[S_1/S']  \wedge \lfloor \varphi_{R,2}\rfloor_{\max(S), \max(S')}[S_2/S] \ \wedge \\
\forall y, z.\ \left(
\begin{array}{l}
\suc(S_3 \cup \{ \min(S_2)\}, y, z) \rightarrow \\
\hspace{1cm} \lfloor \varphi_{R,2}\rfloor_{\min(S), \min(S')} [y/\min(S),z/\min(S')]
\end{array}
\right)
\ \wedge\\
\forall y, z.\ 
\left(
\begin{array}{l}
\suc(S_4 \cup \{\max(S_2)\}, y, z) \rightarrow \\
\hspace{1cm} \lfloor \varphi_{R,2}\rfloor_{\max(S), \max(S')} [y/\max(S'),z/\max(S)]
\end{array}
\right)
\ \wedge\\
\quantel
\left(\exists x.
\left(
\begin{array}{l}
x > 0 \wedge (\lfloor \varphi_{R,2} \rfloor_{\min(S), \min(S')})' \wedge \\
(\lfloor \varphi_{R,2} \rfloor_{\max(S), \max(S')})'
\end{array}
\right)\right)
\end{array}
\right),
\end{array}
$$ 
where $\quantel$ means quantifier elimination, $(\lfloor \varphi_{R,2} \rfloor_{\min(S), \min(S')})'$ is obtained from $\lfloor \varphi_{R,2} \rfloor_{\min(S), \min(S')}$ by replacing any (possible) occurrence of $\min(S) \le \min(S') - c$ with $\min(S) \le \min(S') - cx$, and any (possible) occurrence of $\min(S') \le \min(S) + c'$  with $\min(S') \le \min(S) + c'x$. Similarly, $(\lfloor \varphi_{R,2} \rfloor_{\max(S), \max(S')})'$ is obtained from $\lfloor \varphi_{R,2} \rfloor_{\max(S), \max(S')}$ by replacing any (possible) occurrence of  $\max(S') \le \max(S) - c$  with $\max(S') \le \max(S) - cx$, and any (possible) occurrence of  $\max(S) \le \max(S') + c'$ with $\max(S) \le \max(S') + c'x$. 
Note that the aforementioned formula $\quantel(x >0 \wedge \cdots)$ contains no first-order variables, moreover, the conjunct $S_3 \neq \emptyset$ (resp. $S_4 \neq \emptyset$) is a result of the fact that $\lfloor \varphi_{R,2} \rfloor_{\min(S), \min(S')}$ (resp. $\lfloor \varphi_{R,2} \rfloor_{\max(S), \max(S')}$) is strict.

\begin{example}
	Suppose $\varphi_R(S, S')$ is the normal form of the formula
	\[
	S = S' \cup \{\min(S), \max(S)\} \wedge \min(S') = \min(S)+2 \wedge \max(S) = \max(S')+3.
	\]
	We have 
\[
\begin{array}{l c l}
	\lfloor \varphi_{R,2}\rfloor_{\min(S), \max(S)}  & = & \min(S) \le \max(S) - 5, \\
	\lfloor \varphi_{R,2}\rfloor_{\min(S'), \max(S')} & = &  \min(S') \le \max(S'), \\
	\lfloor \varphi_{R,2}\rfloor_{\min(S), \min(S')} & = &  \min(S') = \min(S)+2,\\  
	\lfloor \varphi_{R,2}\rfloor_{\max(S), \max(S')}  & = &   \max(S) = \max(S')+3,\\  
	\lfloor \varphi_{R,2}\rfloor_{\min(S), \max(S')}  & = &   \min(S) \le \max(S')-2, \\ 
	\lfloor \varphi_{R,2}\rfloor_{\min(S'),\max(S)}  & = &    \min(S') \le \max(S)-3.
\end{array}
\]

Then 
	$$
	\small
	\begin{array}{l}
	\TC[\varphi_R](S,S') = (S = S') \vee \varphi_R(S,S')\ \vee \\
	\exists S_1, S_2, S_3, S_4. \left(
	\begin{array}{l}
	S = S_1 \cup \{\min(S), \max(S)\}  \wedge S_1 = S_3 \cup S_2 \cup S_4 \ \wedge \\
	S_2 = S' \cup \{\min(S_2), \max(S_2)\} \wedge S_3 \neq \emptyset \wedge S_4 \neq \emptyset\ \wedge \\
	\max(S_3) < \min(S_2)) \wedge \max(S_2) < \min(S_4)\ \wedge\\
	\min(S) \le \max(S) - 5 \wedge \min(S_2) \le \max(S_2) - 5\ \wedge \\
 	\min(S_1) \le \max(S_1) \wedge \min(S') \le \max(S')\ \wedge \\
	\min(S_1) = \min(S)+2 \wedge  \min(S') = \min(S_2)+2 \ \wedge \\
	\max(S) = \max(S_1)+3 \wedge \max(S_2) = \max(S')+3 \ \wedge \\
	\min(S) \le \max(S_1)-2 \wedge \min(S_2) \le \max(S')-2  \ \wedge \\
	\min(S_1) \le \max(S)-3  \wedge \min(S') \le \max(S_2)-3 \ \wedge \\
	\forall y, z.\ \left(
	\suc(S_3 \cup \{ \min(S_2)\}, y, z) \rightarrow  z = y+2
	\right)
	\ \wedge\\
	\forall y, z.\ 
	\left(
	\suc(S_4 \cup \{  \max(S_2) \}, y, z) \rightarrow   z = y+3
	\right)
	\ \wedge\\
	3(\min(S') - \min(S)) = 2(\max(S)- \max(S'))
	\end{array}
	\right).
	\end{array}
	$$ 
	%
\end{example}



\subsection{The general situation that there are at least two source and destination parameters} \label{sec-tc-general}

In this section, we consider the general situation $\varphi_R(\vec{S}, \vec{S'})$ that there are at least two source and destination parameters. We write $len(\vec{S})$ for the \emph{length} of $\vec{S}$. 


Recall two conditions {\bf C3} and {\bf C4} in Section~\ref{sec-prelm}. It follows that, for a predicate $P$, the data formula $\varphi_P(\vec{S}; \vec{S'})$ extracted from the inductive rule of $P$ satisfies the \emph{independence} property. Namely, for each atomic formula $\varphi$ in $\varphi_P(\vec{S}, \vec{S'})$, there is some $i \in [len(\vec{S})]$  such that all the variables in $\varphi$ are from $\{S_i, S'_i\}$. This property is crucial to obtain a complete decision procedure.  

Let $len(\vec{S})=k$. Then $\varphi_R(\vec{S}, \vec{S'})$ can be rewritten into $\bigwedge \limits_{i \in [k]} \varphi^{(i)}_{R}(S_i, S'_i)$, where $\varphi^{(i)}_{R}$ is the conjunction of the atomic formulae of $\varphi_R$ that involve only the variables from $\{S_i, S'_i\}$. Moreover, for $i \in [k]$, let $\varphi^{(i)}_{R}(S_i, S'_i) = \varphi^{(i)}_{R,1}(S_i, S'_i) \wedge \varphi^{(i)}_{R,2}(S_i, S'_i)$, where $\varphi^{(i)}_{R,1}(S_i, S'_i)$ and $\varphi^{(i)}_{R,2}(S_i, S'_i)$ are the set and integer subformula of $\varphi^{(i)}_{R}(S_i, S'_i)$ respectively.

To compute the transitive closure for $\varphi_R(\vec{S}, \vec{S'}) = \bigwedge \limits_{i \in [k]} \varphi^{(i)}_{R}(S_i, S'_i)$,  we compute $\TC[\varphi^{(i)}_{R}](S_i, S'_i)$ separately for each $i\in [k]$, but  we also need to take the \emph{synchronisation} of $\varphi^{(i)}_{R}(S_i, S'_i)$ into account. For instance, let $\varphi_R(S_1, S_2, S'_1, S'_2) \equiv \varphi^{(1)}_{R}(S_1, S'_1) \wedge \varphi^{(2)}_{R}(S_2, S'_2)$, where  $\varphi^{(i)}_{R}(S_i, S'_i) $ is obtained from 
$$
\begin{array}{l}
S = S' \cup \{\min(S)\} \wedge \max(S) = \max(S') \wedge \min(S') = \min(S)+1\ \wedge\\
\hspace{4mm} \min(S) \le \max(S) - 1 \wedge \min(S') \le \max(S')
\end{array}
$$ 
by replacing $S,S'$ with $S_i, S'_i$ respectively for $i =1,2$. 
It is not hard to see that $\TC[\varphi_R](S_1, S_2, S'_1, S'_2)$ should also include $\min(S'_1)- \min(S_1) = \min(S'_2)- \min(S_2)$. 

For $i \in [k]$, let $\Phi_i$ denote the formula obtained from $\TC[\varphi^{(i)}_{R}](S_i, S'_i)$ by removing the disjunct $S_i = S'_i$. Then we compute $\TC[\varphi_R](\vec{S}, \vec{S'})$ as,

$
\small
\begin{array}{l}
TC[\varphi_R](\vec{S}, \vec{S'}) = \big (\bigwedge_{i \in [k]} S_i = S'_i \big)\ \vee  \\
\left(
\bigwedge_{i \in [k]} \Phi_i \wedge 
\quantel\left(
\exists x.\ x > 0 \wedge \bigwedge_{i \in [k]} 
\left( 
\begin{array}{l}
(\lfloor \varphi^{(i)}_{R,2} \rfloor_{\min(S_i), \min(S'_i)})'\ \wedge \\
(\lfloor \varphi^{(i)}_{R,2} \rfloor_{\max(S_i), \max(S'_i)})'
\end{array} 
\right)
\right)
\right),
\end{array}
$
\smallskip

\noindent where $(\lfloor \varphi^{(i)}_{R,2} \rfloor_{\min(S_i), \min(S'_i)})'$ is obtained from $\lfloor \varphi^{(i)}_{R} \rfloor_{\min(S_i), \min(S'_i)}$ by replacing $\min(S_i) \le \min(S'_i) - c$ with $\min(S_i) \le \min(S'_i) - cx$, and $\min(S'_i) \le \min(S_i) + c'$ with $\min(S'_i) \le \min(S_i) + c'x$; similarly for $(\lfloor \varphi^{(i)}_{R,2} \rfloor_{\max(S_i), \max(S'_i)})'$. Here, $\quantel$ denotes the quantifier elimination procedure to remove the variable $x$. This is possible as $(\lfloor \varphi^{(i)}_{R,2} \rfloor_{\min(S_i), \min(S'_i)})'$ and $(\lfloor \varphi^{(i)}_{R,2} \rfloor_{\max(S_i), \max(S'_i)})'$ are both Presburger arithmetic formulae.  

\begin{example}\label{exmp-general-case}
	Let $\varphi_R(S_1, S_2, S'_1, S'_2) \equiv \varphi^{(1)}_{R}(S_1, S'_1) \wedge \varphi^{(2)}_{R}(S_2, S'_2)$, where for $i =1,2$, $\varphi^{(i)}_{R}(S_i, S'_i) $ is obtained from the formula 
$$
\begin{array}{l}
S = S' \cup \{\min(S)\} \wedge \max(S) = \max(S') \wedge \min(S') = \min(S)+1\ \wedge\\
\hspace{4mm} \min(S) \le \max(S) - 1 \wedge \min(S') \le \max(S')
\end{array}
$$ 	
by replacing $S,S'$ with $S_i, S'_i$ respectively. Then in $TC[\varphi_R](\vec{S}, \vec{S'})$, 
\[
	\begin{array}{l l}
	& \quantel\left(
	\exists x.\ x > 0 \wedge \bigwedge_{i \in [k]} 
	\left( 
	\begin{array}{l}
	(\lfloor \varphi^{(i)}_{R,2} \rfloor_{\min(S_i), \min(S'_i)})'\ \wedge \\
	(\lfloor \varphi^{(i)}_{R,2} \rfloor_{\max(S_i), \max(S'_i)})'
	\end{array} 
	\right)
	\right) \\
	= & \exists x.\ x > 0 \wedge \bigwedge_{i =1,2} (\min(S'_i) = \min(S_i) + x \wedge \max(S_i) =  \max(S'_i))\\
	= & \min(S_1 ) < \min(S'_1) \wedge \min(S'_1) - \min(S_1) = \min(S'_2) - \min(S_2)\ \wedge\\
	&  \max(S_1) = \max(S'_1) \wedge   \max(S_2) = \max(S'_2).
	\end{array}
\]
\end{example}



\begin{example}
Let $\varphi_R(S_1, S_2, S'_1, S'_2) \equiv \varphi^{[1]}_{R}(S_1, S'_1) \wedge \varphi^{[2]}_{R}(S_2, S'_2)$, where for $i =1,2$, $\varphi^{[i]}_{R}(S_i, S'_i) $ is obtained from 
$$
\begin{array}{l}
S = S' \cup \{\min(S)\} \wedge \max(S) = \max(S') \wedge \min(S') = \min(S)+1\ \wedge\\
\hspace{4mm} \min(S) \le \max(S) - 1 \wedge \min(S') \le \max(S')
\end{array}
$$ 	
by replacing $S,S'$ with $S_i, S'_i$ respectively. 
From Example~\ref{exmp-subcase-II-ii}, we know that for $i = 1, 2$, 
\[
\begin{array}{l}
TC[\varphi^{[i]}_{R}](S_i, S'_i) =  (S_i = S'_i) \vee \varphi^{[i]}_{R}(S_i,S'_i) \vee (\varphi^{[i]}_{R})^{(2)}(S_i,S'_i) \ \vee \\ 
	\exists S_{i, 1},S_{i, 2}. \left(
	\begin{array}{l}
	S_i = S_{i, 1} \cup \{\min(S_i)\} \wedge S_{i, 2} = S'_i \cup \{\min(S_{i, 2})\} \wedge S_{i, 2} \neq \emptyset \ \wedge\\
	 S_{i, 2} \subseteq S_{i, 1} \wedge S_{i, 1} \setminus S_{i, 2} \neq \emptyset \wedge \max(S_{i, 1} \setminus S_{i, 2}) < \min(S_{i, 2})\ \wedge\\
	\min(S_i) +1 = \min(S_{i, 1}) \wedge \min(S_{i, 2}) +1 = \min(S'_i)\ \wedge \\
	\forall y, z.\ \suc((S_{i, 1} \setminus S_{i, 2}) \cup \{\min(S_{i, 2})\}, y, z) \rightarrow y+ 1 = z
	\end{array}
	\right).
\end{array}
\]

Then 
{\small
\[
\begin{array}{l}
TC[\varphi_R](\vec{S}, \vec{S'}) =  (S_1 = S'_1 \wedge S_2 = S'_2 )\ \vee  \\
\left(
 \Phi_1 \wedge \Phi_2 \wedge 
\quantel\left(
\exists x.\ x > 0 \wedge \bigwedge_{i =1,2} 
\left( 
\begin{array}{l}
(\lfloor \varphi^{(i)}_{R,2} \rfloor_{\min(S_i), \min(S'_i)})'\ \wedge \\
(\lfloor \varphi^{(i)}_{R,2} \rfloor_{\max(S_i), \max(S'_i)})'
\end{array} 
\right)
\right)
\right),
\end{array}
\]
}
where $\Phi_1$ and $\Phi_2$ are obtained from $TC[\varphi^{[1]}_R](S_1, S'_1)$  and $TC[\varphi^{[2]}_R](S_2, S'_2)$ by removing the disjunct $S_1 = S'_1$ and $S_2 = S'_2$ respectively. Moreover,
\[
\begin{array}{l l}
& \quantel\left(
\exists x.\ x > 0 \wedge \bigwedge_{i =1,2} 
\left( 
\begin{array}{l}
(\lfloor \varphi^{[i]}_{R,2} \rfloor_{\min(S_i), \min(S'_i)})'\ \wedge \\
(\lfloor \varphi^{[i]}_{R,2} \rfloor_{\max(S_i), \max(S'_i)})'
\end{array} 
\right)
\right) \\
= & \exists x.\ x > 0 \wedge \bigwedge_{i =1,2} (\min(S'_i) = \min(S_i) + x \wedge \max(S_i) =  \max(S'_i))\\
= & \min(S_1 ) < \min(S'_1) \wedge \min(S'_1) - \min(S_1) = \min(S'_2) - \min(S_2)\ \wedge\\
&  \max(S_1) = \max(S'_1) \wedge   \max(S_2) = \max(S'_2).
\end{array}
\]

Therefore, $TC[\varphi_R](\vec{S}, \vec{S'})$ can be simplified into
{\small
\[
\begin{array}{l}
 (S_1 = S'_1 \wedge S_2 = S'_2 ) \vee  \varphi^{[1]}_R(S_1,S'_1) \vee (\varphi^{[1]}_R)^{(2)}(S_1,S'_1)\ \vee  \\
\left(
\begin{array}{l}
\left( \begin{array}{l}
 (\min(S_1)+ 1 = \min(S'_1))\ \vee \\ 
	\exists S_{1, 1},S_{1, 2}. 
	\left(
	\begin{array}{l}
	S_1 = S_{1, 1} \cup \{\min(S_1)\} \wedge S_{1, 2} = S'_1 \cup \{\min(S_{1, 2})\} \wedge S_{1,2} \neq \emptyset \ \wedge\\
	 S_{1, 2} \subseteq S_{1, 1} \wedge S_{1, 1} \setminus S_{1, 2} \neq \emptyset \wedge \max(S_{1, 1} \setminus S_{1, 2}) < \min(S_{1, 2})\ \wedge\\
	\min(S_1) +1 = \min(S_{1, 1}) \wedge \min(S_{1, 2}) +1 = \min(S'_1)\ \wedge \\
	\forall y, z.\ \suc((S_{1, 1} \setminus S_{1, 2}) \cup \{\min(S_{1, 2})\}, y, z) \rightarrow y+ 1 = z
	\end{array}
	\right)
\end{array}
\right)\ \wedge \\
\left(
\begin{array}{l}
(\min(S_2) + 1 = \min(S'_2)) \vee  \varphi^{[2]}_R(S_2,S'_2) \vee (\varphi^{[2]}_R)^{(2)}(S_2,S'_2)\ \vee \\ 
	\exists S_{2, 1},S_{2, 2}. \left(
	\begin{array}{l}
	S_2 = S_{2, 1} \cup \{\min(S_2)\} \wedge S_{2, 2} = S'_2 \cup \{\min(S_{2, 2})\} \wedge S_{2, 2} \neq \emptyset \ \wedge\\
	 S_{2, 2} \subseteq S_{2, 1} \wedge S_{2, 1} \setminus S_{2, 2} \neq \emptyset \wedge \max(S_{2, 1} \setminus S_{2, 2}) < \min(S_{2, 2})\ \wedge\\
	\min(S_2) +1 = \min(S_{2, 1}) \wedge \min(S_{2, 2}) +1 = \min(S'_2)\ \wedge \\
	\forall y, z.\ \suc((S_{2, 1} \setminus S_{2, 2}) \cup \{\min(S_{2, 2})\}, y, z) \rightarrow y+ 1 = z
	\end{array}
	\right)
\end{array}
\right)\  \wedge\\ 
 \min(S_1 ) < \min(S'_1) \wedge \min(S'_1) - \min(S_1) = \min(S'_2) - \min(S_2)\ \wedge \\
   \max(S_1) = \max(S'_1) \wedge   \max(S_2) = \max(S'_2).
\end{array}
\right).
\end{array}
\]
}

\end{example}

\section{Details of Section~\ref{sec:sat-qgdbs}}\label{app-sat-qgdbs}

\subsection{Details of Step I}

To transform the formulae from $\qgdbs_\intnum$ (resp. $\qgdbs^-_\intnum$) into $\qgdbs_\natnum$ (resp.\ $\qgdbs^-_\natnum$), while preserving the set of models, we use an encoding $\mapzn: \intnum \rightarrow \natnum \times \natnum \uplus \intset \rightarrow \natset\times \natset$, viz., $\mapzn(n) = (n^+, n^-)$, where $(n^+, n^-)=(n, 0)$ if $n \ge 0$ and $(n^+, n^-)=(0, -n)$ if $n < 0$, $\mapzn(A) = (A^+,A^-)$ 
where $A^+ = A \cap \natnum$ and $A^- = \{-n \mid n \in A \setminus A^+\}$. 
%
Naturally, we extend $\mapzn$ to tuples of elements from $\intnum \uplus \intset$ as follows: For $(n_1, \cdots, n_k, A_1, \cdots, A_l) \in \intnum^k \times \intset^l$, $\mapzn(n_1, \cdots, n_k, A_1, \cdots, A_l) = (n^+_1, n^-_1, \cdots, n^+_k, n^-_k, A^+_1,A^-_1, \cdots, A^+_l, A^-_l)$. 
 
\begin{lemma}\label{lem-edbs-z2n}
Let $\Phi(\vec{x}, \vec{S})$ be an $\qgdbs_\intnum$ formula, where $\vec{x} = (x_1, \cdots, x_k)$ and $\vec{S} = (S_1,\cdots, S_l)$. Then an $\qgdbs_\natnum$ formula $\Phi'(\vec{x^\pm}, \vec{S^\pm})$ can be constructed effectively such that $\mapzn(\Ll(\Phi)) = \Ll(\Phi')$, where $\vec{x^\pm} = (x^+_1, x^-_1, \cdots, x^+_k, x^-_k)$ and $\vec{S^\pm}  = (S^+_1, S^-_1,\cdots, S^+_l, S^-_l)$. Moreover, if $\Phi(\vec{x}, \vec{S})$ is an $\qgdbs^-_\intnum$ formula, then $\Phi'(\vec{x^\pm}, \vec{S^\pm})$ is an $\qgdbs^-_\natnum$ formula.
\end{lemma}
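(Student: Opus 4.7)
The plan is to define the translation $\Phi \mapsto \Phi'$ by structural recursion, mirroring the componentwise encoding $\mapzn$. Each free integer variable $x$ is replaced by a pair $(x^+, x^-)$ relativised by the invariant $\mathrm{Inv}_x := (x^+ = 0 \vee x^- = 0)$, and each free set variable $S$ by a pair $(S^+, S^-)$ relativised by $\mathrm{Inv}_S := (\{0\} \cap S^- = \emptyset)$; both invariants are expressible by $\qgdbs^-$ atoms. Quantifiers are relativised accordingly: $\forall x.\, \Phi$ becomes $\forall x^+ \forall x^-.\, (\mathrm{Inv}_x \to \Phi')$, and $\forall S.\, \Phi$ becomes $\forall S^+ \forall S^-.\, (\mathrm{Inv}_S \to \Phi')$.

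For every set term $T_s$ one builds inductively two set terms $T_s^\oplus, T_s^\ominus$ over $\natnum$, representing the non-negative part and the negated strictly-negative part of $T_s$: the empty set splits trivially; a set variable $S$ yields $(S^+, S^-)$; the Boolean operations $\cup, \cap, \setminus$ act componentwise; and the singleton $\{T_i\}$ is handled jointly with the enclosing atom by a case split on the sign of $T_i$. A set comparison $T_s \opset T_s'$ translates componentwise, e.g.\ $T_s = T_s'$ becomes $T_s^\oplus = (T_s')^\oplus \wedge T_s^\ominus = (T_s')^\ominus$. For integer terms, $\min(T_s)$ is replaced, under a sign case split, by $\min(T_s^\oplus)$ when $T_s^\ominus = \emptyset$ and by $-\max(T_s^\ominus)$ otherwise; $\max(T_s)$ is handled symmetrically; variables and constants are split via $x \mapsto x^+ - x^-$ and $c \mapsto (c^+, c^-)$.

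The crux is the difference-bound atom $T_i\, \op\, T_i' + c$. Expanding each side under its sign case split and moving ``negative'' contributions across, each resulting combination is either (i) a direct difference-bound atom over $\natnum$, (ii) a vacuously true side condition, or (iii) a constraint of shape $A + B \le c''$ or $A + B \ge c''$ on two $\natnum$-valued terms, where $c''$ depends only on the original constant $c$. Because $c$ is fixed, a case of type (iii) is equivalent to a \emph{finite} disjunction over pairs $(a,b) \in \natnum^2$ satisfying the relation, e.g.\ $\bigvee_{a+b \le c''}(A = a \wedge B = b)$; each equality $A = a$ is itself the pair of permitted atoms $A \le 0 + a$ and $0 \le A + (-a)$. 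Consequently the translation stays inside $\qgdbs^-_\natnum$ whenever the input lies in $\qgdbs^-_\intnum$. The atoms $T_m\, \op\, 0$, present only in full $\qgdbs$, involve only free set variables by the restriction in Definition~\ref{def:qgdbs}; substituting $x \mapsto x^+ - x^-$ and expanding $\min/\max$ under the same sign case split leaves the result in the $T_m$ syntax with all involved set variables still free, so the translation lies in $\qgdbs_\natnum$.

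Correctness is a routine structural induction. The key invariants are that (a) $T_s^\oplus$ and $T_s^\ominus$ interpret, under any assignment satisfying $\mathrm{Inv}_S$, to the non-negative and negated strictly-negative parts of the interpretation of $T_s$; (b) the sign case splits for $\min, \max$ reproduce the integer value exactly; and (c) the relativising invariants $\mathrm{Inv}_x, \mathrm{Inv}_S$ force $\natnum$-valued witnesses to lie in the image of $\mapzn$. Together these yield $\Ll(\Phi') = \mapzn(\Ll(\Phi))$. The main obstacle is the restriction that, in $\qgdbs^-$, we cannot fall back on $T_m\, \op\, 0$ atoms, so every ``sum'' inequality arising in the sign case split for $\min/\max$ must be eliminated by unfolding along the bounded constant $c$ into a finite disjunction of difference-bound atoms; this is the only nontrivial, but purely finitary, step.
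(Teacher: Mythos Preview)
Your proposal is correct and follows essentially the same approach as the paper: split each variable into a nonnegative/negative pair, relativise quantifiers by the sign invariants, case-split every atom on the signs of the $\min/\max$ terms, and eliminate the resulting ``sum'' inequalities $A+B \mathrel{\op} c$ by a finite disjunction indexed by the fixed constant $c$. The only organisational difference is that the paper first flattens so that $\min/\max$ apply only to set variables and then pulls the sign case-split on all \emph{free} variables out to a single top-level disjunction over ``contexts'' (handling bound variables by relativisation exactly as you do), whereas you keep the case splits local to each atom; both arrangements yield the same translation.
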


\begin{proof}

Let $\Phi(x_1, \cdots, x_k, S_1,\cdots, S_l)$ be an $\qgdbs_\intnum$ formula. W.l.o.g. we assume that for each variable from $ \{x_1,\cdots, x_k, S_1,\cdots, S_l\}$, there are no quantified occurrences of the variable in $\Phi$; moreover, each variable is quantified at most once. By adding fresh free set variables for terms, we can transform $\Phi$ into a formula satisfying that for each integer term of the form $\max(T_s)$ or $\min(T_s)$ in $\Phi$, $T_s$ is a set variable. For instance, if $\Phi \equiv \min(S_1 \setminus S_2) \le \max(S_3 \cup S_4) + 1$, then we can introduce fresh free set variables $S'_1, S'_2$ and turn $\Phi'$ into the formula $S'_1 = S_1 \setminus S_2 \wedge S'_2 = S_3 \setminus S_4 \wedge \min(S'_1) \le \max(S'_2)+1$. Therefore, from now on, we assume that $\Phi$ satisfies that whenever $\max(T_s)$ or $\min(T_s)$ occurs, $T_s$ is a set variable.

We use $\vars_{fo}(\Phi)$ (resp. $\vars_{so}(\Phi)$) to denote the set of (not necessarily free) first-order (resp. second-order) variables occurring in $\Phi$.	
A \emph{$\Phi$-context} $\ectx$ is a function from $\vars(\Phi)$ to $\{+,-, \pm, \bot\}$ such that for each $x \in \vars_{fo}(\Phi)$, $\ectx(x) \in \{+, -\}$. Intuitively, $\ectx(x)=+$ (resp. $\ectx(x)=-$) denotes that $x$ is a non-negative number (resp. $x$ is a negative number), and $\ectx(S)=+$ ($\ectx(S)=-, \pm, \bot$) denotes that $S$ contains only non-negative numbers (resp. $S$ contains only negative numbers, $S$ contains both non-negative and negative numbers, $S$ is an empty set).
We will first show how to transform $\Phi$ into a  $\qgdbs_\natnum$ formula $tr_{\ectx}(\Phi)$, for a given $\Phi$-context $\ectx$. Then we define the desired $\qgdbs_\natnum$ formula $\Phi'$ as 
\[
\bigvee \limits_{\ectx} 
\left(
\begin{array}{l}
\bigwedge \limits_{j \in [l], \ectx(S_j)=+}(S^+_j \neq \emptyset \wedge S^-_j =\emptyset) \wedge \bigwedge \limits_{j \in [l], \ectx(S_j)=-} (S^+_j =\emptyset \wedge S^-_j \neq \emptyset) \ \wedge \\
\bigwedge \limits_{j \in [l], \ectx(S_j)=\pm} (S^+_j \neq \emptyset \wedge S^-_j \neq \emptyset) \wedge \bigwedge \limits_{j \in [l], \ectx(S_j)=\bot} (S^+_j =\emptyset \wedge S^-_j = \emptyset)\ \wedge  \\
 \bigwedge \limits_{i \in [k], \ectx(x_i)=+} (x^-_i = 0) \wedge  \bigwedge \limits_{i \in [k], \ectx(x_i)=-} (x^+_i =0 \wedge x^-_i > 0) \wedge tr_{\ectx}(\Phi)
\end{array}
\right).
\] 
Moreover, we can construct $tr_{\ectx}(\Phi)$ in a way that if $\Phi$ is a $\qgdbs^-_\intnum$ formula, then $tr_{\ectx}(\Phi)$ is a $\qgdbs^-_\natnum$ formula, thus $\Phi'$ is a $\qgdbs^-_\natnum$ formula formula as well.

Suppose that $\Phi$ is a $\qgdbs_\intnum$ formula, $\ectx$ is a $\Phi$-context, and $\Psi$ is a subformula of $\Phi$. We construct $tr_{\ectx}(\Psi)$ inductively as follows.

We start with the atomic formulae of the form $T_m\ \op\ 0$ in $\Phi$. From the definition of $\qgdbs_\intnum$, all the variables occurring in $T_m$ are free variables in $\Phi$. 
%
We construct $tr_\ectx(T_m\ \op\ 0)$ by the following three-step procedure.
\begin{description}
\item[Step 1.] For each first-order variable $x$ occurring in $T_m\ \op\ 0$, replace $x$ with $x^+$ if $\ectx(x) = +$, and replace $x$ with $- x^-$ otherwise.
\item[Step 2.] For each set variable  $S$ occurring in $T_m\ \op\ 0$, 
\begin{itemize}
\item if $\ectx(S) = \bot$, then replace each occurrence of $\max(S)$ or $\min(S)$ in $T_m\ \op\ 0$ with $\bot$,
\item if $\ectx(S) = +$, then replace each occurrence of $\max(S)$ (resp. $\min(S)$) in  $T_m\ \op\ 0$ with $\max(S^+)$ (resp. $\min(S^+)$),
\item if $\ectx(S) = -$, then replace each occurrence of $\max(S)$ (resp. $\min(S)$) in $T_m\ \op\ 0$ with $-\min(S^-)$ (resp. $-\max(S^-)$),
\item if $\ectx(S) = \pm$, then replace each occurrence of $\max(S)$ (resp. $\min(S)$) in $T_m\ \op\ 0$ with $\max(S^+)$ (resp. $-\max(S^-)$).
\end{itemize}  
\item[Step 3.] Let $T'_m\ \op\ 0$ denote the formula obtained after the two steps above. If $T'_m\ \op\ 0$ contains at least one occurrence of $\bot$, then $tr_\ectx(T_m\ \op\ 0) = \lfalse$, otherwise, $tr_\ectx(T_m\ \op\ 0) = T'_m\ \op\ 0$.
\end{description}

 \begin{table}[htbp]
 \vspace{-5mm}
\begin{center}
	\begin{tabular}{|c|c|}
		\hline
		$tr^+_\ectx(\emptyset) = \emptyset$ & $tr^-_\ectx(\emptyset) = \emptyset$ \\
		\hline
		$tr^+_\ectx(S) = \left\{ 
		\begin{array}{l l} 
		S^+, & \mbox{ if } \ectx(S) = \pm \mbox{ or } + \\
		\emptyset, & \mbox{ otherwise }
		\end{array}
		\right.$
		&
		$tr^-_\ectx(S) = \left\{ 
		\begin{array}{l l} 
		S^-, & \mbox{ if } \ectx(S) = \pm \mbox{ or } - \\
		\emptyset, & \mbox{ otherwise }
		\end{array}
		\right.$ 
		 \\
		\hline
		$tr^+_\ectx(\{x\}) =\left\{
		\begin{array}{l l} 
		\{x^+\}, & \mbox{ if } \ectx(x) = + \\
		\emptyset, & \mbox{ otherwise}
		\end{array}
		\right.
		$
		&
		$tr^-_\ectx(\{x\}) =\left\{
		\begin{array}{l c} 
		\{x^-\}, & \mbox{ if } \ectx(x) = - \\
		\emptyset, & \mbox{ otherwise}
		\end{array}
		\right.
		$
		\\
		\hline
		$
		\begin{array}{l l}
		tr^+_\ectx(\{\min(S)\}) =\\
		\hspace{5mm}
		\left\{ 
		\begin{array}{l l}
		\emptyset, & \mbox{ if } \ectx(S) = \pm \mbox{ or } -  \\
		\{\min(S^+)\}, & \mbox{ if } \ectx(S) = +\\
		\bot, & \mbox{ otherwise}
		\end{array}
		\right.
		\end{array}
		$
		&
		$
		\begin{array}{l}
		tr^-_\ectx(\{\min(S)\}) =\\
		\hspace{5mm}\left\{ 
		\begin{array}{l l}
		\{\max(S^-)\}, & \mbox{ if } \ectx(S) = \pm \mbox{ or } - \\
		\emptyset, & \mbox{ if } \ectx(S) = + \\
		\bot, & \mbox{ otherwise}
		\end{array}
		\right.
		\end{array}$ 		
		\\
		\hline
		$
		\begin{array}{l}
		tr^+_\ectx(\{\max(S)\}) =
		\\
		\hspace{5mm}\left\{ 
		\begin{array}{l l}
		\{\max(S^+)\}, & \mbox{ if } \ectx(S) = \pm \mbox{ or } +  \\
		\emptyset, & \mbox{ if } \ectx(S) = -\\
		\bot, & \mbox{ otherwise}
		\end{array}
		\right.
		\end{array}
		$
		&
		$
		\begin{array}{l}
		tr^-_\ectx(\{\max(S)\}) =\\
		\hspace{5mm}\left\{ 
		\begin{array}{l l}
		\emptyset, & \mbox{ if } \ectx(S) = \pm \mbox{ or } + \\
		\{\min(S^-)\}, & \mbox{ if } \ectx(S) = - \\
		\bot, & \mbox{ otherwise}
		\end{array}
		\right.
		\end{array}
		$
		\\
		\hline
		$tr^+_\ectx(T_s \cup T'_s) = tr^+_\ectx(T_s) \cup tr^+_\ectx(T'_s)$  & $tr^-_\ectx(T_s \cup T'_s) = tr^-_\ectx(T_s) \cup tr^-_\ectx(T'_s)$ \\
		\hline
		$tr^+_\ectx(T_s \cap T'_s) = tr^+_\ectx(T_s) \cap tr^+_\ectx(T'_s)$ & $tr^-_\ectx(T_s \cap T'_s) = tr^-_\ectx(T_s) \cap tr^-_\ectx(T'_s)$\\
		\hline
		$tr^+_\ectx(T_s \setminus T'_s) = tr^+_\ectx(T_s) \setminus tr^+_\ectx(T'_s)$ & $tr^-_\ectx(T_s \setminus T'_s) = tr^-_\ectx(T_s) \setminus tr^-_\ectx(T'_s)$\\
		\hline
	\end{tabular}
\end{center}
\caption{Definition of $tr^+_\ectx(T_s)$ and $tr^-_\ectx(T_s)$}
\label{tab-tr-T-s}
\vspace{-5mm}
\end{table}

We then consider the atomic formulae of the form $T_{s,1}\ \opset\ T_{s,2}$. For this purpose, we define two functions $tr^+_\ectx(T_s)$ and $tr^-_\ectx(T_s)$ as shown in Table~\ref{tab-tr-T-s}. Intuitively, $tr^+_\ectx(T_s)$ represents the set of non-negative numbers in $T_s$ under the context $\ectx$, and $tr^-_\ectx(T_s)$ represents the set of $-n$ such that $n$ is a negative number in $T_s$ under the context $\ectx$. Then $tr_\ectx(T_{s,1}\ \opset\ T_{s,2})$ is defined as follows, 
\begin{itemize}
\item if any of $tr^+_\ectx(T_{s,1})$, $tr^-_\ectx(T_{s,1})$, $tr^+_\ectx(T_{s,2})$, or $tr^-_\ectx(T_{s,2})$ contains an occurrence of $\bot$, then $tr_\ectx(T_{s,1}\ \opset\ T_{s,2}) = \lfalse$, 
\item otherwise, $tr_\ectx(T_{s,1}\ \opset\ T_{s,2})  = tr^+_\ectx(T_{s,1}) \opset tr^+_\ectx(T_{s,2}) \wedge tr^-_\ectx(T_{s,1}) \opset tr^-_\ectx(T_{s,2})$.
\end{itemize}

The transformation of the atomic formulae of the form $T_{i,1}\ \op\ T_{i,2}+c$ is more involved, since we want to construct a $\qgdbs^-_\natnum$ formula to encode $T_{i,1}\ \op\ T_{i,2}+c$, so that if the original formula $\Phi$ is a $\qgdbs^-_\intnum$ formula, then $tr_\ectx(\Phi)$ is a $\qgdbs^-_\natnum$ formula. We distinguish between whether $T_{i,2}$ is a constant or not.
We construct  $tr_\ectx(T_{i,1}\ \op\ T_{i,2}+c)$ by the following two-step procedure.
\begin{description}
\item[Step 1.] At first, apply the following replacements: 
\begin{itemize}
\item For each integer variable $x$ occurring in $T_{i,1}\ \op\ T_{i,2}+c$, if $\ectx(x) = +$, then replace each occurrence of $x$ with $x^+$, otherwise, replace each occurrence of $x$ with $-x^-$.
\item For each set variable $S$ occurring in $T_{i,1}\ \op\ T_{i,2}+c$, 
\begin{itemize}
\item if $\ectx(S) = +$, then replace each occurrence of $\min(S)$ (resp. $\max(S)$) with $\min(S^+)$ (resp. $\max(S^+)$),
\item if $\ectx(S) = -$, then replace each occurrence of $\min(S)$ (resp. $\max(S)$) with $- \max(S^-)$ (resp. $- \min(S^-)$),
\item if $\ectx(S) = \pm$, then replace each occurrence of $\min(S)$ (resp. $\max(S)$) with $- \max(S^-)$ (resp. $\max(S^+)$),
\item if $\ectx(S) = \bot$, then replace each occurrence of $\min(S)$ (resp. $\max(S)$) with $\bot$.
\end{itemize}
\end{itemize}
\item[Step 2.] Let $T'_{i,1}\ \op\ T'_{i,2} + c$ be the resulting formula after the replacements above.  
\begin{itemize}
\item If $T'_{i,1}\ \op\ T'_{i,2} + c$ contains an occurrence of $\bot$, the $tr_\ectx(T'_{i,1}\ \op\ T'_{i,2} + c) = \lfalse$. 
%
%
\item Otherwise, we construct $tr_\ectx(T_{i,1}\ \op\ T_{i,2} + c)$ by rewriting $T'_{i,1}\ \op\ T'_{i,2} + c$ into a $\qgdbs^-_\natnum$ formula as follows: Note that $T'_{i,1}\ \op\ T'_{i,2} + c$ is of the form $\alpha\ \op\ \beta + c$, $\alpha\ \op\ -\beta + c$, $-\alpha\ \op\ \beta + c$, or $-\alpha\ \op\ -\beta + c$, where $\alpha, \beta$ are of the form $x^+$, $x^-$, $\max(S^+)$, $\min(S^+)$, $\max(S^-)$, $\min(S^-)$.
\begin{itemize}
\item $T'_{i,1}\ \op\ T'_{i,2} + c$ is of the form $\alpha\ \op\ \beta + c$: then $tr_\ectx(T_{i,1}\ \op\ T_{i,2} + c)   = \alpha\ \op\ \beta + c$.
\item  $T'_{i,1}\ \op\ T'_{i,2} + c$ is of the form $\alpha\ \op\ -\beta + c$: If $ c \ge 0$, then $tr_\ectx(T_{i,1}\ \op\ T_{i,2} + c) = \bigvee \limits_{c_1 + c_2 = c, c_1,c_2 \ge 0} (\alpha\ \op\ c_1 \wedge \beta\ \op\ c_2)$. Otherwise, if $\op \in \{=, \le\}$, then $tr_\ectx(T_{i,1}\ \op\ T_{i,2} + c)  = \lfalse$, otherwise, $tr_\ectx(T_{i,1}\ \op\ T_{i,2} + c)  = \ltrue$.
\item $T'_{i,1}\ \op\ T'_{i,2} + c$ is of the form $-\alpha\ \op\ \beta + c$: If $c \le 0$, then  $tr_\ectx(T_{i,1}\ \op\ T_{i,2} + c) = \bigvee \limits_{c_1 + c_2 = -c, c_1,c_2 \ge 0} (c_1\ \op\ \alpha \wedge c_2\ \op\ \beta)$. Otherwise, if $\op \in \{=, \ge\}$, then $tr_\ectx(T_{i,1}\ \op\ T_{i,2} + c) = \lfalse$, otherwise, $tr_\ectx(T_{i,1}\ \op\ T_{i,2} + c) = \ltrue$.
\item $T'_{i,1}\ \op\ T'_{i,2} + c$ is of the form $-\alpha\ \op\ -\beta + c$: Then $tr_\ectx(T_{i,1}\ \op\ T_{i,2} + c) = \beta\ \op\ \alpha + c$.
\end{itemize}
\end{itemize}
\end{description}

We then consider non-atomic subformulae of $\Phi$.
\begin{itemize}
		\item $tr_\ectx(\Psi_1 \wedge \Psi_2) = tr_\ectx(\Psi_1) \wedge tr_\ectx(\Psi_2)$,
		\item $tr_\ectx(\neg \Psi_1) = \neg tr_\ectx(\Psi_1)$,
		%
		\item $tr_\ectx(\forall x.\ \Psi_1) = \forall x^+.\forall x^-.\ (x^- = 0 \rightarrow tr_{\ectx[x \rightarrow +]}(\Psi_1)) \wedge ((x^+=0 \wedge x^->0) \rightarrow tr_{\ectx[x \rightarrow -]}(\Psi_1))$.
		\item	$tr_\ectx(\forall S.\ \Psi_1) = \forall S^+. \forall S^-.\ ((S^+ \neq \emptyset \wedge S^- \neq \emptyset) \rightarrow tr_{\ectx[S \rightarrow \pm]}(\Psi_1)) \wedge ((S^+ \neq \emptyset \wedge S^- = \emptyset) \rightarrow  tr_{\ectx[S \rightarrow +]}(\Psi_1)) \wedge ((S^+ = \emptyset \wedge S^- \neq \emptyset) \rightarrow tr_{\ectx[S \rightarrow -]}(\Psi_1)) \wedge ((S^+ = \emptyset \wedge S^- = \emptyset) \rightarrow tr_{\ectx[S \rightarrow \bot]}(\Psi_1))$.
\end{itemize}
\qed
\end{proof}

\subsection{Details of Step II} 
Let $\Delta(\vec{x}, \vec{S})$ be a formula in $\qgdbs_\natnum$, where $\vec{x} = (x_1,\cdots, x_k)$ and $\vec{S} = (S_1,\cdots,S_l)$. W.l.o.g., we assume that for each variable from $\vec{x} \cup \vec{S}$, there are no quantified occurrences of the variable in $\Phi$.
Intuitively, as none of the variables from $\vec{x} \cup \vec{S}$ are quantified,  we can separate out all atomic formulae of the form $T_m\ \op\ 0$ which contain \emph{only} variables from $\vec{x} \cup \vec{S}$. 
In detail, let $\mathfrak{F}_{\sf free}(\Phi)$ denote the set of all atomic formulae $T_m \ \op\ 0$ occurring in $\Phi$ such that it contain only variables from $\vec{x} \cup \vec{S}$. Then it is not difficult to see that $\Phi$ can be rewritten into
$$\bigvee \limits_{\mathfrak{F}' \subseteq \mathfrak{F}_{\sf free}(\Phi)} \left(\Red_{\mathfrak{F}'}(\Phi) \wedge \bigwedge \limits_{\Phi' \in \mathfrak{F}'} \Phi'  \wedge \bigwedge \limits_{\Phi' \in \mathfrak{F}_{\sf free}(\Phi) \setminus \mathfrak{F}'} \neg \Phi'  \right),$$ 
where $\Red_{\mathfrak{F}'}(\Phi)$ is obtained from $\Phi$ by replacing each atomic formula in $\mathfrak{F}'$ (resp.\ $\mathfrak{F}_{\sf free}(\Phi) \setminus \mathfrak{F}'$) with $\ltrue$ (resp.\ $\lfalse$). Evidently, $\Red_{\mathfrak{F}'}(\Phi)$ is a formula in $\qgdbs^-_\natnum$. Moreover, $ \neg \Phi' $ can be easily rewritten into a formula of the form $T_m\ \op\ 0$. For instance, $\neg (T_m \ge 0) \equiv T_m < 0$.
  

\hide{
\begin{itemize}
\item $T_{i,2}$ is a constant $c'$: 
\begin{itemize}
\item If $c'+c \ge 0$, then $tr_\ectx(T_{i,1}\ \op\ T_{i,2}+c)$ is presented in Table~\ref{tab-unary-ge-zero}.
\item If $c'+c < 0$, then $tr_\ectx(T_{i,1}\ \op\ T_{i,2}+c)$ is presented in Table~\ref{tab-unary-less-zero}.
\end{itemize}
\item $T_{i,2}$ is not a constant: 
\begin{itemize}
\item If $c \ge 0$, then $tr_\ectx(T_{i,1}\ \op\ T_{i,2}+c)$ is presented in Table~\ref{tab-binary-ge-zero}. Note that in Table~\ref{tab-binary-ge-zero}, there are some formulae which are syntactically not in $(\qgdbs)^-_\natnum$, but can be easily transformed into a formula in $(\qgdbs)^-_\natnum$. For instance, $x^+_1 + x^-_2 \ge c$ can be rewritten into $\bigvee \limits_{c_1 + c_2 = c, c_1 \ge 0, c_2 \ge 0} x^+_1 \ge c_1 \wedge x^-_2 \ge c_2$.
\item If $c<0$, then $T_{i,1}\ \op\ T_{i,2}+c$ can be rewritten into $T_{i,2}\ \op'\ T_{i,1}-c$, where $\op'$ is $=, \ge, \le$ respectively, if $\op$ is $=, \le, \ge$, we define $tr_\ectx(T_{i,1}\ \op\ T_{i,2}+c)$ as $tr_\ectx(T_{i,2}\ \op'\ T_{i,1}-c)$.
\end{itemize}
\end{itemize}

\begin{table}[htbp]
\vspace{-8mm}
	\begin{center}
		\begin{tabular}{|c|c|c|}
			\hline
			$T_{i,1}$ & $\op$ & $tr_\ectx(T_{i,1}\ \op\ T_{i,2}+c)$\\
			\hline
			$x$ & = or $\ge$ & $
			\left\{\begin{array}{l l}
			x^+ \op c'+c, & \mbox{ if } \ectx(x) = + \\
			\lfalse, & \mbox{ otherwise }
			\end{array}
			\right.
			$\\
			\hline
			$x$ & $\le $ & 
			$
			\left\{\begin{array}{l r}
			x^+ \op c'+c, & \mbox{ if } \ectx(x) = + \\
			\ltrue, & \mbox{ otherwise }
			\end{array}
			\right.$\\
			\hline
			$\min(S)$ & $=$ or $\ge$ & 
			$
			\left\{
			\begin{array}{l l}
			\min(tr^+_{\ectx}(S)) \op c'+c, & \mbox{ if } \ectx(S) = + \\
			\lfalse, & \mbox{ otherwise}
			\end{array}
			\right.
			$
			\\
			\hline
			$\min(S)$ & $\le$ & 
			$
			\left\{
			\begin{array}{l l}
			\min(tr^+_{\ectx}(S)) \le c'+c, & \mbox{ if } \ectx(S) = + \\
			\ltrue, & \mbox{ if } \ectx(S) = - \mbox{ or } \pm \\
			\lfalse, & \mbox{ otherwise}
			\end{array}
			\right.
			$\\
			\hline
			$\max(S)$ & $=$ or $\ge$ & 
			$
			\left\{
			\begin{array}{l l}
			\max(tr^+_{\ectx}(S)) \op c'+c, & \mbox{ if } \ectx(S) = + \mbox{ or } \pm \\
			\lfalse, & \mbox{ otherwise}
			\end{array}
			\right.
			$\\
			\hline
			$\max(S)$ & $\le$ & 
			$
			\left\{
			\begin{array}{l l}
			\max(tr^+_{\ectx}(S)) \le c'+c, & \mbox{ if } \ectx(S) = + \mbox{ or } \pm \\
			\ltrue, & \mbox{ if } \ectx(S) = - \\
			\lfalse, & \mbox{ otherwise}
			\end{array}
			\right.
			$\\
			\hline
		\end{tabular}
	\end{center}
	\caption{$tr_\ectx(T_{i,1}\ \op\ T_{i,2}+c)$ for the situation $T_{i,2}=c'$ and $c'+c \ge 0$}
	\label{tab-unary-ge-zero}
	\vspace{-8mm}
\end{table}%

\begin{table}[htbp]
\vspace{-5mm}
	\begin{center}
		\begin{tabular}{|c|c|c|}
			\hline
			$T_{i,1}$ & $\op$ & $tr_\ectx(T_{i,1}\ \op\ T_{i,2}+c)$\\
			\hline
			$x$ & $\ge$ & 
			$
			\left\{
			\begin{array}{l l}
			x^- \le -c'-c, & \mbox{ if } \ectx(x)=- \\
			\ltrue, & \mbox{ otherwise}
			\end{array}
			\right.
			$\\
			\hline
			$x$ & $=$ & 
			$
			\left\{
			\begin{array}{l r}
			x^-\ =\  -c'-c, & \mbox{ if } \ectx(x) = -\\
			\lfalse, & \mbox{ otherwise}
			\end{array}
			\right.$\\
			\hline
			$x$ & $\le$ & 
			$
			\left\{
			\begin{array}{l r}
			x^-\ \ge\  -c'-c, & \mbox{ if } \ectx(x)=- \\
			\lfalse, & \mbox{ otherwise}
			\end{array}
			\right.$\\
			\hline
			$\min(S)$ & $\ge$ & 
			$
			\left\{
			\begin{array}{l l}
			\max(S^-) \le - c' -c, & \mbox{ if } \ectx(S) = - \mbox{ or } \pm \\
			\ltrue, & \mbox{ if } \ectx(S) = +\\
			\lfalse, & \mbox{ otherwise}
			\end{array}
			\right.$\\
			\hline
			$\min(S)$ & $=$ & 
			$
			\left\{
			\begin{array}{l l}
			\max(S^-) = - c' -c, & \mbox{ if } \ectx(S) = - \mbox{ or } \pm \\
			\lfalse, & \mbox{ otherwise }
			\end{array}
			\right.$\\
			\hline			
			$\min(S)$ & $\le$ & 
			$
			\left\{
			\begin{array}{l l}
			\max(S^-) \ge - c' -c, & \mbox{ if } \ectx(S) = - \mbox{ or } \pm \\
			\lfalse, & \mbox{ otherwise }
			\end{array}
			\right.$\\
			\hline
			$\max(S)$ & $\ge$ & 
			$
			\left\{
			\begin{array}{l l}
			\min(S^-) \le - c' -c, & \mbox{ if } \ectx(S) = - \\
			\ltrue, & \mbox{ if } \ectx(S) = + \mbox{ or } \pm\\
			\lfalse, & \mbox{ otherwise}
			\end{array}
			\right.$\\
			\hline
			$\max(S)$ & $=$ & 
			$
			\left\{
			\begin{array}{l l}
			\min(S^-) = - c' -c, & \mbox{ if } \ectx(S) = - \\
			\lfalse, & \mbox{ otherwise}
			\end{array}
			\right.$\\
			\hline
			$\max(S)$ & $\le$ & 
			$
			\left\{
			\begin{array}{l l}
			\min(S^-) \ge - c' -c, & \mbox{ if } \ectx(S) = - \\
			\lfalse, & \mbox{ otherwise}
			\end{array}
			\right.$\\
			\hline
		\end{tabular}
	\end{center}
	\caption{$tr_\ectx(T_{i,1}\ \op\ T_{i,2}+c)$ for the situation $T_{i,2}=c'$ and $c'+c < 0$}
	\label{tab-unary-less-zero}
	\vspace{-5mm}
\end{table}%

\begin{table}[htbp]
\vspace{-5mm}
	\begin{center}
		{\small
			\begin{tabular}{|c|c|c|c|}
				\hline
				$T_{i,1}$ & $T_{i,2}$ & $\op$ & $tr_\ectx(T_{i,1}\ \op\ T_{i,2} + c)$\\
				\hline
				$x_1$ & $x_2$ & &
				$
				\left\{\begin{array}{l r}
				x^+_1 \ \op\  x^+_2 + c,  & \mbox{ if } \ectx(x_1) = +, \ectx(x_2) = +\\
				x^+_1 + x^-_2 \ \op\    c,  & \mbox{ if } \ectx(x_1) = +, \ectx(x_2) = -\\
				-c\ \op\   x^-_1 + x^+_2,  & \mbox{ if } \ectx(x_1) = -, \ectx(x_2) = +\\
				x^-_2 \ \op\   x^-_1 + c,  & \mbox{ if } \ectx(x_1) = -, \ectx(x_2) = -
				\end{array}
				\right.$\\
				\hline
				$x$ & $\min(S)$ & $=$ or $\ge$ & 
				$\begin{array}{l r}
				\begin{array}{l}
				(tr^-_\ectx(T_s) = \emptyset \wedge x^+ \op \min(tr^+_\ectx(T_s)) + c)\ \vee\\
				(x^+ + \max(tr^-_\ectx(T_s)) \op c ), 
				\end{array} 
				& \mbox{ if } \ectx(x) = +\\
				\max(tr^-_\ectx(T_s)) \op x^- + c, & \mbox{ if } \ectx(x) = -
				\end{array}
				$
				\\ 
				\hline
				$x$ & $\min(T_s)$ & $\le$ & 
				$\begin{array}{l r}
				\begin{array}{l}
				(tr^-_\ectx(T_s) = \emptyset \wedge x^+ \le \min(tr^+_\ectx(T_s)) + c)\ \vee\\
				(x^+ + \max(tr^-_\ectx(T_s)) \le c ), 
				\end{array} 
				& \mbox{ if } \ectx(x) = +\\
				(tr^-_\ectx(T_s) = \emptyset) \vee (\max(tr^-_\ectx(T_s)) \le x^- + c), & \mbox{ if } \ectx(x) = -
				\end{array}
				$
				\\
				\hline
				$x$ & $\max(T_s)$ & $=$ or $\le$ & 
				$\begin{array}{l r}
				\begin{array}{l}
				(x^+ \op \max(tr^+_\ectx(T_s)) + c)\ \vee \\
				(tr^+_\ectx(T_s)= \emptyset \wedge x^+ + \min(tr^-_\ectx(T_s)) \op c ), 
				\end{array} 
				& \mbox{ if } \ectx(x) = +\\
				(tr^-_\ectx(T_s) = \emptyset) \vee (\min(tr^-_\ectx(T_s)) \op x^- + c), & \mbox{ if } \ectx(x) = -
				\end{array}
				$
				\\
				\hline
				$x$ & $\max(T_s)$ & $\ge$ & 
				$
				\begin{array}{l r}
				\begin{array}{l}
				(x^+ \ge \max(tr^+_\ectx(T_s)) + c)\ \vee \\
				(tr^+_\ectx(T_s)= \emptyset \wedge x^+ + \min(tr^-_\ectx(T_s)) \ge c ), 
				\end{array} 
				& \mbox{ if } \ectx(x) = +\\
				\min(tr^-_\ectx(T_s)) \ge x^- + c, & \mbox{ if } \ectx(x) = -
				\end{array}
				$
				\\
				\hline
				$\min(T_{s,1})$ & $\min(T_{s,2})$ & $=$ or $\ge$ &
				$
				\begin{array}{l}
				(tr^-_\ectx(T_{s,1}) =  tr^-_\ectx(T_{s,2}) = \emptyset   \wedge \min(tr^+_\ectx(T_{s,1}))\ \op\ \min(tr^+_\ectx(T_{s,2}))+ c) \vee\\
				(tr^-_\ectx(T_{s,1}) =  \emptyset \wedge tr^-_\ectx(T_{s,2}) \neq  \emptyset \wedge \min(tr^+_\ectx(T_{s,1})) + \max(tr^-_\ectx(T_{s,2})) \ \op\ c) \vee\\
				(tr^-_\ectx(T_{s,1}) \neq  \emptyset \wedge tr^-_\ectx(T_{s,2}) \neq  \emptyset \wedge \max(tr^-_\ectx(T_{s,2}))\ \op\  \max(tr^-_\ectx(T_{s,1})) + c)
				\end{array}
				$\\
				\hline
				$\min(T_{s,1})$ & $\min(T_{s,2})$ & $\le$ &
				$
				\begin{array}{l}
				(tr^-_\ectx(T_{s,1}) =  tr^-_\ectx(T_{s,2}) = \emptyset   \wedge \min(tr^+_\ectx(T_{s,1})) \le \min(tr^+_\ectx(T_{s,2}))+ c) \vee\\
				(tr^-_\ectx(T_{s,1}) =  \emptyset \wedge tr^-_\ectx(T_{s,2}) \neq  \emptyset \wedge \min(tr^+_\ectx(T_{s,1})) + \max(tr^-_\ectx(T_{s,2})) \le c) \vee\\
				(tr^-_\ectx(T_{s,1})  \neq \emptyset \wedge tr^-_\ectx(T_{s,2}) =  \emptyset \wedge tr^+_\ectx(T_{s,2}) \neq  \emptyset) \vee\\
				(tr^-_\ectx(T_{s,1}) \neq  \emptyset \wedge tr^-_\ectx(T_{s,2}) \neq  \emptyset \wedge\max(tr^-_\ectx(T_{s,2})) \le  \max(tr^-_\ectx(T_{s,1})) + c)
				\end{array}
				$\\
				\hline
				$\min(T_{s,1})$ & $ \max(T_{s,2})$ & $=$ or $\ge$ &
				$
				\begin{array}{l}
				(tr^-_\ectx(T_{s,1}) =  \emptyset \wedge tr^+_\ectx(T_{s,1}) \neq  \emptyset  \wedge \min(tr^+_\ectx(T_{s,1})) \op \max(tr^+_\ectx(T_{s,2}))+ c) \vee\\
				(tr^-_\ectx(T_{s,1}) =  tr^+_\ectx(T_{s,2}) = \emptyset \wedge \min(tr^+_\ectx(T_{s,1})) + \min(tr^-_\ectx(T_{s,2})) \op c) \vee\\
				(tr^-_\ectx(T_{s,1}) \neq  \emptyset \wedge tr^+_\ectx(T_{s,2}) =  \emptyset \wedge \min(tr^-_\ectx(T_{s,2})) \op \max(tr^-_\ectx(T_{s,1})) + c)
				\end{array}
				$
				\\
				\hline
				$\min(T_{s,1})$ & $ \max(T_{s,2})$ & $\le$ &
				$
				\begin{array}{l}
				(tr^-_\ectx(T_{s,1}) =  \emptyset \wedge tr^+_\ectx(T_{s,2}) \neq  \emptyset  \wedge \min(tr^+_\ectx(T_{s,1})) \le \max(tr^+_\ectx(T_{s,2}))+ c) \vee\\
				(tr^-_\ectx(T_{s,1}) =  tr^+_\ectx(T_{s,2}) = \emptyset \wedge \min(tr^+_\ectx(T_{s,1})) + \min(tr^-_\ectx(T_{s,2})) \le c) \vee\\
				(tr^-_\ectx(T_{s,1}) \neq \emptyset \wedge tr^+_\ectx(T_{s,2}) \neq \emptyset) \vee\\
				(tr^-_\ectx(T_{s,1}) \neq  \emptyset \wedge tr^+_\ectx(T_{s,2}) =  \emptyset \wedge \min(tr^-_\ectx(T_{s,2})) \le \max(tr^-_\ectx(T_{s,1})) + c)
				\end{array}
				$
				\\
				\hline
				$\max(T_{s,1})$ & $ \min(T_{s,2})$ & $=$ or $\ge$ &
				$
				\begin{array}{l}
				(tr^+_\ectx(T_{s,1}) \neq  \emptyset \wedge tr^-_\ectx(T_{s,2}) =  \emptyset  \wedge \max(tr^+_\ectx(T_{s,1})) \op \min(tr^+_\ectx(T_{s,2}))+ c) \vee\\
				(tr^+_\ectx(T_{s,1}) \neq  \emptyset \wedge tr^-_\ectx(T_{s,2}) \neq  \emptyset  \wedge \max(tr^+_\ectx(T_{s,1})) + \max(tr^-_\ectx(T_{s,2})) \op c) \vee\\
				(tr^+_\ectx(T_{s,1}) =  \emptyset \wedge tr^-_\ectx(T_{s,2}) \neq \emptyset \wedge \max(tr^-_\ectx(T_{s,2})) \op  \min(tr^-_\ectx(T_{s,1})) + c)
				\end{array}
				$
				\\
				\hline
				$\max(T_{s,1})$ & $ \min(T_{s,2})$ & $\le$ &
				$
				\begin{array}{l}
				(tr^+_\ectx(T_{s,1}) \neq  \emptyset \wedge tr^-_\ectx(T_{s,2}) =  \emptyset  \wedge \max(tr^+_\ectx(T_{s,1})) \le \min(tr^+_\ectx(T_{s,2}))+ c) \vee\\
				(tr^+_\ectx(T_{s,1}) \neq  \emptyset \wedge tr^-_\ectx(T_{s,2}) \neq  \emptyset  \wedge \max(tr^+_\ectx(T_{s,1})) + \max(tr^-_\ectx(T_{s,2})) \le c) \vee\\
				(tr^+_\ectx(T_{s,1}) =  \emptyset \wedge tr^-_\ectx(T_{s,2}) =  \emptyset \wedge tr^-_\ectx(T_{s,1}) \neq  \emptyset \wedge tr^+_\ectx(T_{s,2}) \neq  \emptyset) \vee\\
				(tr^+_\ectx(T_{s,1}) =  \emptyset \wedge tr^-_\ectx(T_{s,2}) \neq \emptyset \wedge \max(tr^-_\ectx(T_{s,2})) \le  \min(tr^-_\ectx(T_{s,1})) + c)
				\end{array}
				$
				\\
				\hline
				$\max(T_{s,1})$ & $ \max(T_{s,2})$ & $=$ or $\ge$ &
				$
				\begin{array}{l}
				(tr^+_\ectx(T_{s,1}) \neq  \emptyset \wedge tr^+_\ectx(T_{s,2}) \neq  \emptyset  \wedge \max(tr^+_\ectx(T_{s,1})) \op \max(tr^+_\ectx(T_{s,2}))+ c) \vee\\
				(tr^+_\ectx(T_{s,1}) \neq  \emptyset \wedge tr^+_\ectx(T_{s,2}) =  \emptyset  \wedge \max(tr^+_\ectx(T_{s,1})) + \min(tr^-_\ectx(T_{s,2})) \op c) \vee\\
				(tr^+_\ectx(T_{s,1}) =  \emptyset \wedge tr^+_\ectx(T_{s,2}) = \emptyset \wedge \min(tr^-_\ectx(T_{s,2})) \op  \min(tr^-_\ectx(T_{s,1})) + c)
				\end{array}
				$
				\\
				\hline
				$\max(T_{s,1})$ & $ \max(T_{s,2})$ & $\le$ &
				$
				\begin{array}{l}
				(tr^+_\ectx(T_{s,1}) \neq  \emptyset \wedge tr^+_\ectx(T_{s,2}) \neq  \emptyset  \wedge \max(tr^+_\ectx(T_{s,1})) \le \max(tr^+_\ectx(T_{s,2}))+ c) \vee\\
				(tr^+_\ectx(T_{s,1}) \neq  \emptyset \wedge tr^+_\ectx(T_{s,2}) =  \emptyset  \wedge \max(tr^+_\ectx(T_{s,1})) + \min(tr^-_\ectx(T_{s,2})) \le c) \vee\\
				(tr^+_\ectx(T_{s,1}) =  \emptyset \wedge tr^-_\ectx(T_{s,1}) \neq  \emptyset \wedge tr^+_\ectx(T_{s,2}) \neq \emptyset) \vee\\
				(tr^+_\ectx(T_{s,1}) =  \emptyset \wedge tr^+_\ectx(T_{s,2}) = \emptyset \wedge \min(tr^-_\ectx(T_{s,2})) \le  \min(tr^-_\ectx(T_{s,1})) + c)
				\end{array}
				$
				\\
				\hline
			\end{tabular}
		}
	\end{center}
	\caption{$tr_\ectx(T_{i,1}\ \op\ T_{i,2} + c)$ for the situation $T_{i,2}$ is not a constant and $c \ge 0$}
	\label{tab-binary-ge-zero}
	\vspace{-5mm}
\end{table}
}

\hide{
Next, we show how the construction for $\qgdbs^-$ formulae can be generalised to $\qgdbs$ formulae. 
For this purpose, we need an extended version of $\Delta$-contexts to deal with the free set variables. An \emph{extended} $\Delta$-context is defined as $\ectx: \vars_{\sf FO}(\Phi) \rightarrow \{+, -\} \uplus \{S_1,\cdots, S_l\} \rightarrow \{+, -, \pm\}$.

Let $\ectx$ be an extended $\Delta$-context. We construct a $(\qgdbs)_\natnum$ formula inductively as follows: 
\begin{itemize}
\item For each subformula $\Psi$ of $\Delta$ such that $\Psi$ is a $\qgdbs^-$ formula, we set $tr_\ectx(\Psi)$  as $tr_{\ectx |_{\vars_{\sf FO}(\Delta)}}(\Psi)$, where $\ectx |_{\vars_{\sf FO}(\Delta)}$ is the  $\Delta$-context obtained by restricting $\ectx$ to $\vars_{\sf FO}(\Delta)$. 

\item For each atomic formula of the form $T_{i,1}\ \op\ T_{i,2}$ in $\Delta$ such that $T_{i,1}$ and $T_{i,2}$ are the integer terms defined in $\ps$ and they contain only variables from $\{x_1,\cdots, x_k, S_1, \cdots, S_l\}$,  we construct $tr_\ectx(T_{i,1}\ \op\ T_{i,2})$ as follows:  
\begin{itemize}
\item For each $j \in [k]$, replace $x_j$ in $T_{i,1}\op\ T_{i,2}$ with $x^+_j$ if $\ectx(x_j) = +$, and replace $x_j$ with $- x^-_j$ otherwise.
\item For each variable $j \in [l]$, 
\begin{itemize}
\item if $\ectx(S_j) = \bot$ and $\max(S_j)$ or $\min(S_j)$ occurs in $T_{i,1}\op\ T_{i,2}$, then replace $T_{i,1}\op\ T_{i,2}$ with $\lfalse$,
\item if $\ectx(S_j) = +$, then replace $\max(S_j)$ and $\min(S_j)$ in $T_{i,1}\op\ T_{i,2}$ with $\max(S^+_j)$ and $\min(S^+_j)$ respectively,
\item if $\ectx(S_j) = -$, then replace $\max(S_j)$ and $\min(S_j)$ in $T_{i,1}\op\ T_{i,2}$ with $-\min(S^-_j)$ and $-\max(S^-_j)$ respectively,
\item if $\ectx(S_j) = \pm$, then replace $\max(S_j)$ and $\min(S_j)$ in $T_{i,1}\op\ T_{i,2}$ with $\max(S^+_j)$ and $-\max(S^-_j)$ respectively.
\end{itemize}  
\end{itemize}

\item For the other syntactic rules  in $\ps$, we construct $tr_\ectx(\Delta)$ by applying the syntactic induction, similarly to the construction of $tr_\ectx(\Psi)$ above. For instance, $tr_\ectx(\Delta_1 \wedge \Delta_2)= tr_\ectx(\Delta_1) \wedge tr_\ectx(\Delta_2)$ and $tr_\ectx(\forall x.\ \Delta_1)= \forall x^+.\ tr_{\ectx[x \rightarrow +]}(\Delta_1) \wedge \forall x^-.\ tr_{\ectx[x \rightarrow -]}(\Delta_1)$.
\end{itemize}
Finally, let $\Delta' = \bigvee \limits_{\ectx} (tr_{\ectx}(\Delta) \wedge \bigwedge \limits_{\ectx(S)=\bot} (S^+=\emptyset \wedge S^- = \emptyset)  \wedge \bigwedge \limits_{\ectx(S)=+}(S^+\neq \emptyset \wedge S^-=\emptyset) \wedge \bigwedge \limits_{\ectx(S)=-} (S^+=\emptyset \wedge S^- \neq \emptyset) \wedge \bigwedge \limits_{\ectx(S)=\pm} (S^+\neq \emptyset \wedge S^- \neq \emptyset) \wedge \bigwedge \limits_{\ectx(x)=+} (x^-=0) \wedge \bigwedge \limits_{\ectx(x)=-} (x^+=0 \wedge x^->0))$. 
}

\subsection{Details of Step III}

We start with some additional notations. 
First observe that there is a one-to-one correspondence between models of  $\Phi(\vec{x}, \vec{S})$ and finite words over  $2^{AP}$ with $AP =\{x_1, \cdots, x_k, S_1,\cdots, S_l\}$ satisfying that $x_j$ occurs in \emph{exactly one} position for each $j \in [k]$. A finite word $w = w_0 \cdots w_{n-1}$ over $2^{AP}$ is a finite sequence such that $w_i \in 2^{AP}$ for each $i\in \{0\} \cup [n-1]$. On the one hand, any model $(n_1, \cdots, n_k, A_1,\cdots, A_l) \in \natnum^k \times \natset^l$ of $\Phi(\vec{x}, \vec{S})$ can be interpreted as a finite word $w$ as follows: If $k = 0$ and $A_i =  \emptyset$ for all $i \in [l]$, then $w = \varepsilon$; otherwise let $|w|=1+\max(\{n_1,\cdots,n_k\} \cup \bigcup \limits_{i \in [l]} A_i)$, and, for each position $i \in \{0\} \cup [|w|-1]$, $w_i = P \subseteq AP$ iff $P = \{x_j \mid j \in [k], i = n_j\} \cup \{S_j \mid j \in [l], i \in A_j\}$. On the other hand, for a word $w \in (2^{AP})^*$ where $x_j$ occurs in exactly one position for each $j \in [k]$, a tuple $(n_1,\cdots, n_k, A_1,\cdots, A_l) \in \natnum^k \times \natset^l$ can be constructed such that for each $j \in [k]$, $n_j  = i$ iff $x_j \in w_i$, and for each $j \in [l]$, $A_j = \{i \in \{0\} \cup [|w|-1] \mid S_j \in w_i \}$. 
By slightly abusing the notation, we also use $\Ll(\Phi(\vec{x}, \vec{S}))$ to denote the set of words $w \in (2^{AP})^*$ such that $w \models \Phi$.

\vspace{-1mm}
\begin{definition}[Presburger automata]
	A Presburger automaton (PA) $\Aa$ is a tuple $(Q, \Sigma, \delta, q_0, F, \Psi)$, where $(Q, \Sigma, \delta, q_0, F)$ is an NFA with $Q=\{q_0, q_1, \ldots, q_m\}$, and $\Psi(x_{q_0}, \cdots, x_{q_m})$ is a \emph{quantifier-free Presburger arithmetic} formula over the set of variables $\{x_{q_i} \mid i \in \{0\} \cup [m]\}$. 
\end{definition} 
\vspace{-1mm}

A word $w = w_0 \cdots w_{n-1} \in (2^{AP})^*$ is accepted by $\Aa$ if there is a run $R = q_0 \xrightarrow{w_0} q_1 \cdots q_{n-1} \xrightarrow{w_{n-1}} q_n$ such that $q_n \in F$ and 
$\Psi(|R|_{q_0}/x_{q_0}, \cdots, |R|_{q_m}/x_{q_m})$ holds, 
where 
the vector $(|R|_q)_{q \in Q}$ is the Parikh image of the sequence $q_0, \cdots, q_n$, that is,  $|R|_q$ is the number of occurrences of $q$ in $R$. 
We use $\Ll(\Aa)$ to denote the set of words accepted by $\Aa$.

\vspace{-2mm}
\begin{theorem}[\cite{SSM08}]
	Nonemptiness of Presburger automata is decidable. 
\end{theorem}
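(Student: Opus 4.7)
The plan is to reduce nonemptiness of a Presburger automaton $\Aa = (Q, \Sigma, \delta, q_0, F, \Psi)$ with $Q = \{q_0, \ldots, q_m\}$ to the satisfiability of a single Presburger arithmetic sentence, which is classically decidable. The key step is to show that the set of state-count vectors $(|R|_{q_0}, \ldots, |R|_{q_m}) \in \natnum^{m+1}$ arising from \emph{accepting runs} $R$ of the underlying NFA is an effectively computable semilinear set, and hence definable by a quantifier-free Presburger formula.

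First, I would construct an auxiliary NFA $\Aa'$ over the alphabet $Q$ obtained from $(Q, \Sigma, \delta, q_0, F)$ by replacing every transition $q \xrightarrow{a} q'$ with $q \xrightarrow{q} q'$, and by prepending a fresh initial transition that emits $q_0$ (to account for the first occurrence of the initial state in the state sequence of a run). By construction, a word $v = p_0 p_1 \cdots p_n \in Q^*$ is accepted by $\Aa'$ iff there exists a run $R = q_0 \xrightarrow{w_0} q_1 \cdots q_{n-1} \xrightarrow{w_{n-1}} q_n$ of $\Aa$ ending in $F$ whose state sequence is exactly $p_0, \ldots, p_n$. Consequently, the Parikh image of $\Ll(\Aa')$, viewed as a subset of $\natnum^{m+1}$, coincides with the set of realizable state-count vectors of accepting runs of $\Aa$.

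Next, I would invoke Parikh's theorem, which guarantees that the Parikh image of any regular language is an effectively computable semilinear set; equivalently, it is definable by a quantifier-free Presburger arithmetic formula $\Phi_{\Aa}(x_{q_0}, \ldots, x_{q_m})$. Then $\Ll(\Aa) \neq \emptyset$ iff the sentence
\[
\exists x_{q_0} \cdots \exists x_{q_m}.\ \Phi_{\Aa}(x_{q_0}, \ldots, x_{q_m}) \wedge \Psi(x_{q_0}, \ldots, x_{q_m})
\]
is satisfiable, and the latter reduces to a decidable instance of Presburger arithmetic satisfiability.

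The main obstacle is the effective construction of $\Phi_{\Aa}$: Parikh's theorem can be realized by standard decompositions of the NFA into strongly connected components plus simple cycles, yielding a semilinear representation of exponential size in $|Q|$. This affects the complexity (a tower-of-exponentials-style bound, matching what the paper later reports) but not decidability. A minor technicality is to ensure the translation correctly distinguishes empty accepting runs (when $q_0 \in F$) from nonempty ones, but this is handled by a straightforward case split on whether the run has length zero.
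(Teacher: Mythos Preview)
Your argument is correct: the Parikh image of the language of state sequences of accepting runs is semilinear by Parikh's theorem, hence Presburger-definable, and conjoining this description with the automaton's constraint $\Psi$ yields a Presburger sentence whose satisfiability is equivalent to nonemptiness. The auxiliary-NFA construction you describe (relabelling $q \xrightarrow{a} q'$ by $q \xrightarrow{q} q'$ and prepending a fresh initial transition emitting $q_0$) is a clean way to make the state-count vectors coincide exactly with the Parikh image of a regular language.

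Note, however, that the paper does not actually prove this theorem: it is stated with a citation to \cite{SSM08} and used as a black box. What you have written is essentially the standard proof behind that citation, so there is no divergence in approach to report---you have simply unpacked what the paper leaves implicit. Your remarks on complexity (exponential blowup from the semilinear decomposition) and on the trivial case $q_0 \in F$ are accurate but go beyond what the paper needs, since only decidability is invoked here.
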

\vspace{-1mm}

Given $\Phi_{\mathrm{core}}\wedge \Phi_{\mathrm{count}}$, where $\Phi_{\mathrm{core}}$ is an $\qgdbs^-_\natnum$ formula and $\Phi_{\mathrm{count}}$ is a conjunction of the formulae of the form $T_m\ \op\ 0$ which contains only variables from $\vec{x} \cup \vec{S}$, our aim is to construct a PA to accept models---as words---of $\Phi_{\mathrm{core}}\wedge \Phi_{\mathrm{count}}$. To this end, we first show how to construct an NFA from $\Phi_{\mathrm{core}}$, an $\qgdbs^-_\natnum$ formula. 

It is a simple observation that an $\qgdbs^-_\natnum$ formula can be rewritten in exponential time into a formula in 
MSOW defined by the following rules,

\smallskip
\hspace{1cm}$
\Phi ::= x + 1 = y \mid x < y \mid S(x) \mid \Phi \wedge \Phi \mid \neg \Phi \mid \forall x.\ \Phi \mid \forall S.\ \Phi,
$
\smallskip

\noindent where $x,y$ are variables ranging over $\natnum$, and $S$ is a (second-order) set variable ranging over the set of \emph{finite} subsets of $\natnum$. 
Note that the exponential blow-up is because, only the successor operator is available in MSOW 
while constants $c$ are encoded in binary. 
For instance, $x_1 \le x_2 + 2$ has to be rewritten into $\exists z, z'.\ z = x_2 + 1 \wedge z' = z +1 \wedge x_1 \le z'$.


We  can  then invoke the celebrated B\"{u}chi-Elgot theorem:

\vspace{-1mm}
\begin{theorem}[\cite{Bu60,Elg61}]\label{thm-msow-nfa}
	Let $\Phi(S_1,\cdots, S_k)$ be an MSOW formula. Then an NFA $\Aa_\Phi$ over $2^{\{S_1,\cdots,S_k\}}$ can be constructed so that $\Ll(\Aa_\Phi) = \Ll(\Phi(S_1,\cdots, S_k))$. 
\end{theorem}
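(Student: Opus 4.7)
The plan is the standard induction on the structure of the MSOW formula, treating first-order variables uniformly with second-order variables by requiring their interpretations to be singletons. Fix the free variable set $\vec{S}$ and work over the alphabet $\Sigma = 2^{\vec{S}}$; a word $w\in \Sigma^*$ encodes an assignment where $i\in A_j$ iff $S_j\in w_i$, and a first-order variable $x$ is encoded as a distinguished second-order variable $S_x$ whose interpretation is required to be a singleton. This last requirement is itself enforced by a small fixed NFA that checks ``exactly one position carries $S_x$'', which we intersect in at the end (or inductively, as $x$ is introduced).

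For the base cases I would exhibit the NFAs directly: for $S(x)$, a three-state NFA that reads until it sees a position with both $S_x$ and $S$ marked, then continues; for $x < y$, an NFA that ensures the unique $S_x$-position strictly precedes the unique $S_y$-position; for $x+1 = y$, an NFA that ensures the $S_y$-position immediately follows the $S_x$-position. Each of these is a constant-size automaton over $\Sigma$.

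For the inductive cases, the standard closure properties of NFA do all the work: $\Phi_1\wedge \Phi_2$ corresponds to the product (intersection) of $\Aa_{\Phi_1}$ and $\Aa_{\Phi_2}$, which is routine after padding their alphabets to a common one. Existential quantification over a (first- or second-order) variable $V$ corresponds to projecting away the $V$-component of the alphabet, i.e., replacing each transition label $a$ by $a\setminus\{V\}$; this preserves recognizability. Universal quantification is handled via $\forall V.\Phi \equiv \neg \exists V.\neg\Phi$, so it reduces to negation and existential quantification.

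The main obstacle is negation, and this is also where the elementary tower of the overall complexity originates. To complement an NFA one must first determinize it by the subset construction (exponential blow-up) and then swap accepting and rejecting states; since negations may be nested, each nesting level incurs another exponential, giving the nonelementary bound acknowledged in the conclusion. A second subtlety worth flagging is that, after projecting away a variable, we should interpret two letters as equivalent when they agree on the remaining components, so the resulting automaton may be nondeterministic even if the input was deterministic; but since we only need an NFA, this is harmless. Finally, one must verify that the construction handles the empty-word case correctly (corresponding to all set variables interpreted as $\emptyset$ and no first-order variables), which is immediate from the base cases. This completes the sketch; a full proof would be a routine structural induction verifying, for each case, that $\Ll(\Aa_\Phi)$ coincides with the set of word-encodings of models of $\Phi$.
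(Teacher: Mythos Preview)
Your sketch is correct and is precisely the classical structural-induction argument that underlies the B\"uchi--Elgot result. Note, however, that the paper does not supply its own proof of this theorem: it is merely cited as a known result from \cite{Bu60,Elg61} and invoked as a black box, so there is no ``paper's proof'' to compare against. Your outline---encoding first-order variables as singleton set variables, building constant-size NFAs for the atomic formulae, and then using the closure of NFAs under intersection, projection, and complementation (via determinization)---is exactly the textbook construction, and your remarks on the nonelementary blow-up from nested negations are also accurate and consistent with the complexity comment in the paper's conclusion.
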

\vspace{-1mm}

It follows from Theorem~\ref{thm-msow-nfa} that an NFA $\Aa_\Phi=(Q, AP, \delta, q_0, F)$ can be constructed from a $\qgdbs^-_\natnum$ formula $\Phi(\vec{x}, \vec{S})$ such that $\Ll(\Phi)=\Ll(\Aa_\Phi)$.
As the next step we construct a \emph{quantifier-free Presburger arithmetic} formula $\Psi$ for the sought PA out of $\Phi_{\mathrm{count}}$. We first construct, for each $x_i$, an NFA $\Aa_i$ illustrated in Fig.\ref{fig-auto-b}(a), and for each $S_j$, an NFA $\Bb_j$ illustrated in Fig.\ref{fig-auto-b}(b). We then consider an NFA  $\Aa^\times_\Phi$ which is the product of $\Aa_{\Phi}$ and all $\Aa_i$ for $i \in [k]$ and $\Bb_j$ for $j \in [l]$. Note that each state of  $\Aa^\times_\Phi$ is a vector of states $\vec{q}=(q, q_1, \cdots, q_k, q_{k+1},\cdots, q_{k+l})$ such that $q \in Q$, $q_i  \in \{p_{0,i}, p_{1,i}\}$ for each $i \in [k]$, and $q_{k+j} \in \{q_{0, j}, q_{1, j}, q_{2, j}\}$ for each $j \in [l]$. We write $\vec{q}_r$ for the $r$-th entry of $\vec{q}$, i.e., $\vec{q}_0=q$ and $\vec{q}_i=q_i$ for each $i \in [k+l]$.


We observe that, for each $i \in [k]$, $x_i$ is expressed by $\sum_{\{\vec{q} \mid \vec{q}_{i}=p_{0,i}\}} x_{\vec{q}}-1$, and for each $j \in [l]$, $\min(S_j)$ is expressed by $\sum_{\{\vec{q} \mid \vec{q}_{k+j}=q_{0,j}\}} x_{\vec{q}}-1$ and $\max(S_j)$ is expressed by $\sum_{\{\vec{q} \mid \vec{q}_{k+j}=q_{0,j}\}} x_{\vec{q}}  + \sum_{\{\vec{q}\mid\vec{q}_{k+j}=q_{1,j}\}} x_{\vec{q}}-1$. We then substitute them into $\Phi_{\mathrm{count}}$ and obtain $\Psi$ which is over the variables $x_{\vec{q}}$. 
%

\begin{figure}[htbp]
	\vspace{-4mm}
	\begin{center}
		\includegraphics[scale=0.67]{auto-b.pdf}
		\vspace{-2mm}
		\caption{NFA $\Aa_i$ for $x_i$ and $\Bb_j$ for $S_j$}
		\label{fig-auto-b}
	\end{center}
	\vspace{-1cm}
\end{figure}
%

\hide{
	\begin{example} 
		Let us illustrate the idea by considering an example that $\Phi_{\sf core}$ contains two set variables $S_1,S_2$ and $\Phi_m = (\max(S_1) - \min(S_1)) - (\max(S_2) - \min(S_2)) = 0$. 
		
		
		
		%
		%
		%
		%

		%
		Then $\min(S_i)$ is expressed by $x_{q_{0,i}} + 1$ and $\max(S_i)$ is expressed by $x_{q_{0,i}} + x_{q_{1,i}}+1$.
		
		Therefore, the Presburger automaton $\Aa_{\Phi}$ is constructed as $(\Aa', \Psi')$, where 
		\begin{itemize}
			\item $\Aa' = (Q', AP, \delta', q'_0, F')$ is the product of $\Aa_{\Phi_{\sf core}}$, $\Bb_1$ and $\Bb_2$, 
			\item Then $\Phi'$ is from $\Phi_m$ by replacing 
			\begin{itemize}
				\item $\min(S_1)$ with $\sum \limits_{(q, q_{0,1}, q_{j,2}) \in Q'} x_{(q,q_{0,1}, q_{j,2})}+ 1$, 
				\item $\max(S_1)$ with  $\sum \limits_{(q, q_{0,1}, q_{j,2}) \in Q'} x_{(q,q_{0,1}, q_{j,2})}+ \sum \limits_{(q, q_{1,1}, q_{j,2}) \in Q'} x_{(q,q_{1,1}, q_{j,2})}+ 1$, 
				\item $\min(S_2)$ with $\sum \limits_{(q, q_{j,1}, q_{0,2}) \in Q'} x_{(q, q_{j,1}, q_{0,2})}+ 1$, 
				\item $\max(S_2)$ with $\sum \limits_{(q, q_{j,1}, q_{0,2}) \in Q'} x_{(q, q_{j,1}, q_{0,2})} + \sum \limits_{(q, q_{j,1}, q_{1,2}) \in Q'} x_{(q, q_{j,1}, q_{1,2})}+1$.
			\end{itemize}
			Therefore, we have
			\[
			\Psi' = \sum \limits_{(q, q_{1,1}, q_{j,2}) \in Q'} x_{(q,q_{1,1}, q_{j,2})} - \sum \limits_{(q, q_{j,1}, q_{1,2}) \in Q'} x_{(q, q_{j,1}, q_{1,2})} = 0.
			\]
		\end{itemize}
		
		It is easy to figure out the construction for the general case from the illustration above.
	\end{example}
}

\begin{proposition}
	For an $\qgdbs_\natnum$ formula $\Phi = \Phi_{\mathrm{core}} \wedge \Phi_{\mathrm{count}}$, $\Phi_{\mathrm{core}}$ is an $\qgdbs^-_\natnum$ formula, and $\Phi_{\mathrm{count}}$ is a conjunction of formulae of the form $T_m\ \op\ 0$ which contain only variables from $\vec{x} \cup \vec{S}$, a PA $\Aa_\Phi=(\Aa^\times_{\Phi}, \Psi)$ can be constructed effectively such that $\Ll(\Aa_\Phi) = \Ll(\Phi)$.
\end{proposition}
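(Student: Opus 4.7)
The plan is to prove the two inclusions $\Ll(\Aa_\Phi)\subseteq \Ll(\Phi)$ and $\Ll(\Phi)\subseteq \Ll(\Aa_\Phi)$ separately, after first making precise the three ingredients already sketched in the excerpt: the NFA $\Aa_{\Phi_{\mathrm{core}}}$ obtained from $\Phi_{\mathrm{core}}$ via MSOW and the B\"uchi--Elgot theorem, the tracking automata $\Aa_i$ and $\Bb_j$ whose states record which ``region'' of the word we are in (before/at/after the unique position of $x_i$ for $\Aa_i$; before $\min(S_j)$, between $\min(S_j)$ and $\max(S_j)$, or after $\max(S_j)$ for $\Bb_j$), and the product $\Aa^\times_\Phi$ of all of them, equipped with the Presburger constraint $\Psi$ obtained by substituting the ``counting expressions'' for $x_i$, $\min(S_j)$, $\max(S_j)$ into $\Phi_{\mathrm{count}}$.

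First I would verify the key invariant of the tracking automata: on \emph{any} word $w\in(2^{AP})^*$ in which each $x_i$ occurs in exactly one position, there is a \emph{unique} accepting run of $\Aa_i$ (resp.\ of $\Bb_j$ whenever $S_j$ is nonempty; the case $S_j=\emptyset$ is handled by a separate branch in $\Bb_j$). Along that unique run, the number of visits to $p_{0,i}$ equals $x_i+1$, the number of visits to $q_{0,j}$ equals $\min(S_j)+1$ (when $S_j\ne\emptyset$), and the number of visits to $q_{0,j}$ plus the number of visits to $q_{1,j}$ equals $\max(S_j)+1$. This invariant is immediate from how $\Aa_i$ and $\Bb_j$ are drawn, and it justifies the substitutions displayed before the proposition. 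One has to verify carefully the off-by-one $-1$ in each expression and also stipulate in $\Psi$ that $S_j\ne\emptyset$ iff one visits $q_{0,j}$ (equivalently $\sum_{\{\vec q\mid\vec q_{k+j}=q_{0,j}\}} x_{\vec q}\ge 1$), so that $\min(S_j)$ and $\max(S_j)$ are only constrained when $S_j$ is nonempty.

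For $\Ll(\Aa_\Phi)\subseteq \Ll(\Phi)$, take any $w$ accepted by $\Aa_\Phi$ via a run $R$ of $\Aa^\times_\Phi$ such that $\Psi$ holds on the Parikh image of $R$. Projecting $R$ onto the $\Aa_{\Phi_{\mathrm{core}}}$-component gives an accepting run of $\Aa_{\Phi_{\mathrm{core}}}$ on $w$, so $w\models\Phi_{\mathrm{core}}$ by B\"uchi--Elgot. Projecting onto the tracking components and invoking the invariant above lets us read off the concrete values of $x_i$, $\min(S_j)$, $\max(S_j)$ induced by $w$; since $\Psi$ is exactly $\Phi_{\mathrm{count}}$ with those expressions substituted in, we get $w\models\Phi_{\mathrm{count}}$, and hence $w\models\Phi$. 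The converse direction is symmetric: given $w\models\Phi$, pick an accepting run of $\Aa_{\Phi_{\mathrm{core}}}$ on $w$ and glue it, componentwise, with the (essentially unique) runs of the $\Aa_i$ and $\Bb_j$ on $w$; the invariant says the resulting Parikh image plugs into $\Psi$ and makes it true.

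The main obstacle, and the only genuinely delicate point, is the bookkeeping around degenerate cases in $\Phi_{\mathrm{count}}$: the Presburger expressions $T_m$ may contain $\min(S_j)$ or $\max(S_j)$ for set variables $S_j$ that are interpreted as empty, in which case, by the semantics in Section~\ref{sec:logic}, every atomic formula containing such a term is $\lfalse$. This must be mirrored in $\Psi$ by guarding each atomic conjunct $T_m\op 0$ with an emptiness-guard of the form $\bigwedge_{j\in J} \sum_{\{\vec q\mid\vec q_{k+j}=q_{0,j}\}} x_{\vec q}\ge 1$, where $J$ is the set of indices of set variables whose $\min$ or $\max$ occurs in $T_m$, and replacing the whole conjunct by $\lfalse$ (i.e.\ an unsatisfiable Presburger clause) otherwise. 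Once this guarding is put in place, the correspondence between models of $\Phi$ and accepting Parikh-constrained runs of $\Aa^\times_\Phi$ is exact, and the construction is clearly effective since MSOW $\to$ NFA and the product construction are effective and $\Psi$ is a straightforward textual substitution into $\Phi_{\mathrm{count}}$.
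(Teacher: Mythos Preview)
Your proposal is correct and follows essentially the same approach as the paper. In fact, the paper does not give a separate proof of the proposition at all: it simply presents the construction (the NFA for $\Phi_{\mathrm{core}}$ via B\"uchi--Elgot, the tracking automata $\Aa_i$ and $\Bb_j$, the product $\Aa^\times_\Phi$, and the substitution yielding $\Psi$) and then states the proposition as its summary. Your write-up supplies the two-inclusion argument and the run-uniqueness invariant that the paper leaves implicit, and your attention to the degenerate case where some $S_j$ is empty (so that $\min(S_j)$, $\max(S_j)$ are undefined and the containing atom is $\lfalse$) is a legitimate point of care that the paper's text does not spell out.
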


}{}

\end{document}